\newcommand{\prob}[1]{{\sf Pr}\ensuremath{\left[ #1 \right]}}
\newtheorem{theorem}{Theorem}
\newtheorem*{theorem*}{Theorem}
\newtheorem{lemma}[theorem]{Lemma}
\newtheorem*{lemma*}{Lemma}
\newtheorem{definition}{Definition}
\newtheorem*{observation*}{Observation}
\newtheorem{corollary}[theorem]{Corollary}
\newcommand{\parens}{T}
\newcommand{\dy}{{\sc Dyck}}
\newcommand{\pq}{{\sc PQ}}
\newcommand{\stack}{{\sc Stack}}
\newcommand{\deque}{{\sc Deque}}
\newcommand{\queue}{{\sc Queue}}
\newcommand{\barparens}{\bar{T}}
\newcommand{\hide}[1]{}
\newcommand{\ed}[2]{{\sc Edit}\, #1\,#2\,}
\newcommand{\stred}{{\sc StrEdit}}
\title{Efficiently Computing Edit Distance to Dyck Language}
\author{Barna Saha\\
AT\&T Research Laboratory,\\
 Florham Park, NJ 07932. \\
 Email: \textrm{barna@research.att.com}.\\
 {\bf Full Version} }
\date{}
\begin{document}

\maketitle
\setcounter{page}{0}
\begin{abstract}
Given a string $\sigma$ over alphabet $\Sigma$ and a grammar $G$ defined over the same alphabet, how many minimum number of repairs: insertions, deletions and substitutions are required to map $\sigma$ into a valid member of $G$ ? We investigate this basic question in this paper for \dy$(s)$. \dy$(s)$ is a fundamental context free grammar representing the language of well-balanced parentheses with $s$ different types of parentheses and has played a pivotal role in the development of theory of context free languages. It is also known a nondeterministic version of \dy$(s)$ is the hardest context free grammar. Computing edit distance to \dy$(s)$ has numerous applications ranging from repairing semi-structured documents such as XML to memory checking, automated compiler optimization, natural language processing etc. The problem also significantly generalizes string edit distance which has seen extensive developments over the last two decades and has attracted much attention in theoretical computer science as well as in computational biology community.

It is possible to develop a dynamic programming to exactly compute edit distance to \dy$(s)$ that runs in time cubic in the string length. Such algorithms are not scalable. In this paper we give the {\em first near-linear time algorithm for edit distance computation to \dy$(s)$ that achieves a nontrivial (poly-logarithmic) approximation factor}. In fact, given there exists an algorithm for computing string edit distance on input of size $n$ in $\alpha(n)$ time with $\beta(n)$-approximation factor, we can devise an algorithm for edit distance problem to \dy$(s)$ running in $\tilde{O}(n+\alpha(n))$ \footnote{$\tilde{O}(n)=O(n poly\log{n})$} and achieving an approximation factor of $O(\beta(n)(\log{OPT})^{1.5})$. In $\tilde{O}(n^{1+\epsilon}+\alpha(n))$ time, we get an approximation factor of $O(\frac{1}{\epsilon}\beta(n)\log{OPT})$ (for any $\epsilon > 0$) . Here $OPT$ is the optimal edit distance to \dy$(s)$. Since the best known near-linear time algorithm for string edit distance problem has $\beta(n)=poly\log{n}$, we get the desired bound. Therefore, with the current state of the art, string and \dy$(s)$ edit distance can both be computed within  poly-logarithmic approximation factor in near-linear time. This comes as a surprise since \dy$(s)$ is a significant generalization of string edit distance problem and their exact computations via dynamic programming show a marked difference in time complexity.

Rather less surprisingly, we show that the framework for efficiently approximating edit distance to \dy$(s)$  can be utilized for many other languages. We illustrate this by considering various memory checking languages such as \stack, \queue, \pq~and \deque~which comprise of valid transcripts of stacks, queues,  priority queues and double-ended queues respectively. Therefore, any language that can be recognized by these data structures, can also be repaired efficiently by our algorithm.

%

%

\end{abstract}

\newpage
\clearpage
\setcounter{page}{1}
\sloppy
\section{Introduction}
In this work, we consider the complexity of {\em language edit distance problem}. That is given $\sigma$ over alphabet $\Sigma$ and a grammar $G$ again over the same alphabet, what is the time complexity of computing the minimum number of edits: insertions, deletions and substitutions, required to convert $\sigma$ into a valid string of $G$ ? We ask this question with respect to \dy$(s)$. \dy$(s)$ is a fundamental context free grammar representing the language of well-balanced parentheses of $s$ different types. For example, string such as ``(()())'' belongs to \dy$(1)$, but ``((())'' does not and string such as ``([])'' belongs to \dy$(2)$ but ``([)]'' does not.

Dyck language appears in many contexts. These languages often describe a property that should be held by commands in most
commonly used programming languages, as well as various subsets of commands/symbols used in latex. Variety of semi-structured data from XML documents to
JSON data interchange files to annotated linguistic corpora contain open and close tags that must be properly nested. They are frequently
massive in size and exhibit complex structures with arbitrary levels of nesting tags (an XML document often encodes an entire database). For example, dblp.xml has current size of $1.2$ GB, is growing rapidly, with $2.3$ million articles that results in a string of parentheses of length more than $23$ million till date.
Furthermore, Dyck language plays an important role in the theory of context-free languages. As stated by the Chomsky-Schotzenberger Theorem, every context free
language can be mapped to a restricted subset of \dy$(s)$ \cite{bnf}. A comprehensive description of context free languages and Dyke languages can be found in \cite{h:78,Kozen:1997}.

The study of {\em language edit distance problem} dates back to early seventies. Such an algorithm for context free grammar was first proposed by Aho and Peterson that runs in $O(|G|^2n^3)$ time where $|\sigma|=n$ is the string length and $|G|$ is the size of the grammar \cite{aho:72}. This was later improved by Myers to run in $O(|G|n^3)$ time \cite{myers:95}. These works were motivated by developing automated parsers for compiler design. For \dy$(s)$ the following dynamic programming improves the running time to $O(n^3)$ independent of $|G|$ which is $\Theta(s)$ in this case. The second condition in computing $Edit[i,j]$ leads to $O(n^3)$ running time as the $jth$ parenthesis $\sigma_j$ (considering it to be a close parenthesis) can be matched with any $\sigma_k$ (considering it to be an open parenthesis), for $k \in [1,j-1]$.
$$Edit[i,j] \leftarrow \min{(cost[i,j]+Edit[i+1,j-1], \min_{i \leq k < j}{(Edit[i,k]+Edit[k+1,j])})}, i \leq j$$
$$cost[i,j]=\mbox{cost of editing $\sigma_i$ to match $\sigma_j$ and is $1$ if $i=j$},$$
$$ Edit[i,j]=\mbox{ edit distance of the substring $\sigma_i\sigma_{i+1}...\sigma_j$ and is $1$ if $i=j$}.$$

Even a decade back, \cite{myv:95} reported these algorithms with cubic running time to be prohibitively slow for parser design. With modern data deluge, the issue of scalability has become far more critical. Motivated by a concrete application of repairing semi-structured documents where imbalanced parenthesis nesting is one of the major reported errors ($14\%$ of XML errors on the web is due to malformedness\cite{gm:2013}) and lack of scalability of cubic time algorithms, the authors in \cite{kssy:13} study the problem of approximating edit distance computation to \dy$(s)$. Given any string $\sigma$ if $\sigma'=\min_{x \in \dy(s)} \ed(\sigma,x)$, they ask the question whether it is possible to design an algorithm that runs in near-linear time and returns $\sigma'' \in \dy(s)$ such that $\ed(\sigma, \sigma'') \leq \alpha \stred(\sigma, \sigma')$ for some $\alpha \geq 1$ where \textsc{StrEdit}~is the normal string edit distance function and $\alpha$ is the approximation factor. Edit distance computation from a string of parentheses to \dy$(s)$ is a significant generalization of string edit distance computation \footnote{For string edit distance computation, between string $\sigma_1$ and $\sigma_2$ over alphabet $C$, create a new alphabet $T \cup \bar{T}$ by uniquely mapping each character $c \in C$ to a new type of open parenthesis, say $t_c$, that now belongs to $T$. Let $\bar{t_c}$ be the matching close parenthesis for $t_c$ and we let $\bar{t_c} \in \bar{T}$. Now create strings $\sigma'_1$ by replacing each character of $\sigma_1$ with its corresponding open parenthesis in $T$, and create string $\sigma'_2$ by replacing each character of $\sigma_2$ with its corresponding close parenthesis in $\bar{T}$. Obtain $\sigma$ by appending $\sigma'_1$ with reverse of $\sigma'_2$. It is easy to check the edit distance between $\sigma$ and \dy$(s)$ is exactly equal to string edit distance between $\sigma_1$ and $\sigma_2$.}. A prototypical dynamic programming for string edit distance computation runs in quadratic time (as opposed to cubic time for \dy$(s)$ edit distance problem). There is a large body of works on designing scalable algorithms for approximate string edit distance computation \cite{bjkk:04,bfc:06,or:07,ak:09,ako:10}. Though basic in its appeal, nothing much is known for approximately computing edit distance to \dy$(s)$.

In \cite{kssy:13}, the authors proposed fast greedy and branch and bound methods with various pruning strategies to approximately answer the edit distance computation to \dy$(s)$. With detailed experimental analysis, they show the effectiveness of these edit distance based approaches in practice over rule based heuristics commonly employed by modern web browsers like Internet Explorer, Firefox etc. However, either their algorithms have worst-case approximation ratio as bad as $\Theta(n)$ or has running time exponential in $d$, where $d$ is the optimal edit distance to \dy$(s)$ (see \cite{kssy:13} for worst case examples).
It is to be noted that for \dy$(1)$, there exists a simple single pass algorithm to compute edit distance: just pair up matching open and close parentheses and report the number of parentheses that could not be matched in this fashion.

 In this paper, we study the question of approximating edit distance to \dy$(s)$ for any $s \geq 2$ and give the {\em first} near-linear time algorithm with nontrivial approximation guarantees. Specifically, given there exists an algorithm for computing string edit distance on input of size $n$ in $\alpha(n)$ time with $\beta(n)$-approximation factor, we can devise an algorithm for edit distance problem to \dy$(s)$ running in $\tilde{O}(n+\alpha(n))$ and achieving an approximation factor of $O(\beta(n)(\log{OPT})^{1.5})$. In $\tilde{O}(n^{1+\epsilon}+\alpha(n))$ time, we get an approximation factor of $O(\frac{1}{\epsilon}\beta(n)\log{OPT})$ (for any $\epsilon > 0$) . Here $OPT$ is the optimal edit distance to \dy$(s)$. Since the best known near-linear time algorithm for string edit distance problem has $\beta(n)=poly\log{n}$, we overall get a poly-logarithmic approximation. Therefore, with the current state of the art, both string and \dy$(s)$ edit distance computation has poly-logarithmic approximation factor in near-linear time. This comes as a surprise since \dy$(s)$ is a significant generalization of string edit distance problem and their exact computations via dynamic programming show a marked difference in time complexity.

 Any parentheses string $\sigma$ can be viewed as $Y_1X_1Y_2X_2....Y_zX_z$ for some $z \geq 0$ where $Y_i$s and $X_i$s respectively consist of only consecutive open and close parentheses. The special case of string edit
 distance problem can be thought of having $z=1$. The approximation factor of our algorithm is in fact $O(\beta(n)\log{z})$, where it is possible to ensure $z \leq OPT$ by a simple preprocessing. Our algorithm cleverly
 combines a random walk to guide selection of subsequences of the parentheses strings in multiple phases having only consecutive sequence of open parentheses followed by close parentheses. These subsequences are then repaired each by employing subroutine for {\sc StrEdit} computation.
 The general framework of the algorithm and its analysis applies to languages far beyond \dy$(s)$. We discuss this connection with respect to several memory checking languages whose study was initiated in the seminal work of Blum, Evans, Gemmell, Kannan and Naor \cite{bwgkn:91}, and followed up by several groups \cite{a:02,ckm:07,dnrv:09,nr:09, cckm:focs10}.
 We consider basic languages such as \stack, \queue, \pq, \deque~etc. They comprise of valid transcripts of stacks, queues, priority queues and double-ended queues respectively. Given a transcript of any such memory-checking language, we consider the problem of finding the minimum number of edits required to make the transcript error-free and show that the algorithm for \dy($s$) can be adapted to return a valid transcript efficiently. Therefore, any language that can be recognized by these data structures, can also be repaired efficiently by our algorithm.



\subsection{Related Work}

Early works on edit distance to grammar \cite{aho:72,myers:95} was motivated by the problem of correcting and recovering from syntax error during context-free parsing and have received significant attention in the realm of compiler optimization \cite{Fischer:1980, Fischer:1992, myv:95,Kim:2001}. Many of these works focus on designing time-efficient parsers using local recovery \cite{Fischer:1992,myv:95,Kim:2001} rather than global dynamic programming based algorithms \cite{aho:72,myers:95}, but to the best of our knowledge, none of these methods provide approximation guarantee on edit distance in sub-cubic time. Approximating edit distance to \dy$(s)$ has recently been studied in \cite{kssy:13} for repairing XML documents, but again the proposed subcubic algorithms all have worst case approximation factor $\Theta(n)$.

Recognizing a grammar is a much simpler task than repairing. Using a stack, it is straightforward to recognize \dy$(s)$ in a single pass. When there is a space restriction, Magniez, Mathieu and Nayak \cite{mmn:stoc10} considered the streaming complexity of recognizing \dy$(s)$ showing an $\Omega(\sqrt{n})$ lower bound and a matching upper bound within a $\log{n}$ factor. Even with multiple one-directional passes, the lower bound remains at $\Omega(\sqrt{n})$ \cite{cckm:focs10}, surprisingly with two passes in opposite directions the space complexity reduces to $O(\log^{2}{n})$. This exhibits a curious phenomenon of streaming complexity of recognizing \dy$(s)$.  Krebs, Limaye and Srinivasan extended the work of \cite{mmn:stoc10} to consider very restricted cases of errors, where an open (close) parenthesis can only be changed into another open (close) parenthesis \cite{kls:mfcs11}. Again with only such edits, computing edit distance in linear time is trivial: whenever an open parenthesis at the stack top cannot be matched with the current close parenthesis, change one of them to match. Allowing arbitrary insertions, deletions and substitutions is what makes the problem significantly harder. In the property testing framework, in the seminal paper Alon, Krivelevich, Newman and Szegedy \cite{akns:siam01} showed that \dy$(1)$ is testable in time independent of $n$, however \dy$(2)$ requires $\Omega(\log{n})$ queries. This lower bound was further strengthened to $n^{1/11}$ by Parnas, Ron and Rubinfeld \cite{prr:rand03} where they also give a testing algorithm using $n^{2/3}/\epsilon^3$ queries. These algorithms can only distinguish between the case of $0$ error with $\epsilon n$ errors, and therefore, are not applicable to the problem of approximating edit-distance to \dy$(s)$.

 Edit distance to \dy$(s)$ is a significant generalization of string edit distance problem. String edit distance enjoys the special property that a symbol at the $i$th position in one string must be matched with a symbol at a position between $(i-d)$ to $(i+d)$ in the other string. Using this ``local'' property, prototypical  quadratic dynamic programming algorithm can be improved to run in time $O(dn)$ which was later improved to $O(n+d^7)$ \cite{Sahinalp:1996} to $O(n+d^4)$ \cite{Cole:2002} and then to $O(n+d^2)$ \cite{lms:98}. The later result implies a $\sqrt{n}$-approximation for string edit distance problem. However, all of these crucially use the locality property, which does not hold for parenthesis strings: two parentheses far apart can match as well. Also, it is known that parsing arbitrary context free grammar (CFG) is as hard as boolean matrix multiplication \cite{Lee:2002} and a nondeterministic version of \dy~is the hardest CFG \cite{bnf}. Therefore, exactly computing edit distance to \dy$(s)$ in time much less than subcubic would be a significant accomplishment. For string edit distance, the current best approximation ratio of $(\log{n})^{O(\frac{1}{\epsilon})}$ in $n^{1+\epsilon}$ running time for any fixed $\epsilon >0$ is due to Andoni, Krauthgamer and Onak \cite{ako:10}. This result is preceded by a series of works which improved the approximation ratio from $\sqrt{n}$ \cite{lms:98} to $n^{\frac{3}{7}}$ \cite{bjkk:04}, then to $n^{\frac{1}{3}+o(1)}$ \cite{bfc:06}, all of which run in linear time and finally to $2^{\sqrt{\log{n}\log{\log{n}}}}$ that run in time $n2^{\sqrt{\log{n}\log{\log{n}}}}$ \cite{ak:09}.

\subsection{Techniques \& RoadMap}

\begin{definition}
The {\em congruent} of a parenthesis $x$ is defined as its
symmetric opposite parenthesis, denoted $\bar{x}$.
The congruent of a set of parentheses $X$, denoted $\bar{X}$,
is defined as $\{\bar{x} \ | \ x \in X\}$.
\end{definition}
We use $\parens$ to denote the set of {\em open parentheses} and $\barparens$ to denote the set of {\em close parentheses}.
(Each $x \in \parens$ has exactly one congruent $\bar{x} \in \barparens$ that it matches to
and vice versa.) The alphabet $\Sigma=\parens \cup \barparens$.

Consider a string $\sigma=\sigma_1...\sigma_n$, of length $n$, over some
parenthesis alphabet $\Sigma=\parens \cup \barparens$, that is, $\sigma \in \mbox{\dy$(s)$}, s=|T|$.
\begin{definition}
A {\em well-balanced string} over some parenthesis alphabet $\Sigma=\parens \cup \barparens$
obeys the context-free grammar \dy$(s)$, $s=|T|$  with productions
$S \rightarrow SS$, $S \rightarrow \varepsilon$ and
$S \rightarrow a S \bar{a}$ for all $a \in \parens$.
\end{definition}

\begin{definition}
The {\em \dy~ Language Edit Distance Problem},
given string $\sigma$ over alphabet $\Sigma=\parens \cup \barparens$, is to find $\arg\min_{\sigma'} \stred(\sigma,\sigma')$
such that $\sigma' \in \mbox{\dy}(s)$, $s=|\parens|$.
\end{definition}
\subsubsection*{A Simple Algorithm{\small {  (Section \ref{section:algo1})}}}
We start with a very simple algorithm that acts as a stepping stone for the subsequent refinements. The algorithm is as simple as it gets, and we call it
{\bf Random-deletion}.\\

{\em Initialize a stack to empty. Scan the parenthesis string $\sigma$ left to right. If the current symbol is an open parenthesis, insert it into the stack. If the current symbol is a
close parenthesis, check the symbol at top of the stack. If both the symbols can be matched, match them. If the stack is empty, delete the current symbol.
Else {\bf delete one of them independently with equal probability $\frac{1}{2}$}. If the stack is nonempty when the scan has ended, delete all symbols from the stack.}

We show
\begin{theorem}
\label{theorem:single}
 {\bf Random-deletion} obtains a $4d$-approximation for edit distance computation to \dy$(s)$ for any $s \geq 2$ in linear time with constant probability, where
 $d$ is the optimum edit distance.
\end{theorem}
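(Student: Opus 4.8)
The plan is to analyze $X$, the (random) number of deletions the algorithm performs. Two observations reduce the whole theorem to one probabilistic estimate. First, the running time is $O(n)$: every symbol is pushed, matched, or deleted, a top‑of‑stack deletion triggers at most one re‑examination of the current symbol, and there are at most $n$ final pops. Second, the algorithm never inserts or substitutes, so its output lies in \dy$(s)$ and is obtained from $\sigma$ by exactly $X$ deletions; hence $X$ is precisely the edit cost of what we return, and it suffices to show $X\le 4d$ with constant probability.

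Next I would replace the benchmark $d$ by the purely deletional optimum $m^\star$, the least number of deletions turning $\sigma$ into a member of \dy$(s)$, via the standard bound $m^\star\le 2d$. Take an optimal script producing some $\sigma^\star\in\dy(s)$ with $d$ insertions, deletions and substitutions in total; convert each substitution into a deletion of the original symbol; the resulting subsequence $\tau$ of $\sigma$ embeds into $\sigma^\star$ after the inserted symbols are discarded, and keeping only those symbols of $\tau$ whose mate under the nesting of $\sigma^\star$ also survives gives a well‑balanced subsequence of $\sigma$. The symbols discarded along the way number at most $(\#\mathrm{del})+(\#\mathrm{sub})+(\#\mathrm{ins})+(\#\mathrm{sub})\le 2d$, so $m^\star\le 2d$ and it is enough to prove $X\le 2m^\star$ with constant probability.

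For this I would fix an optimal deletion set $D^\star$ with $|D^\star|=m^\star$ and run the algorithm while charging its deletions against $D^\star$ through the potential $g(S,R)$ defined as the minimum number of deletions making the concatenation of the current stack $S$ (read bottom to top) with the unread suffix $R$ well‑balanced: initially $g=m^\star$, at the end $g$ equals the number of symbols still on the stack, and $X$ is then accounted for by the drop of $g$ plus the deletions made during the scan. A push leaves $g$ unchanged and costs nothing; a match leaves $g$ unchanged (using non‑crossingness to see an adjacent matched pair may be assumed to stay matched in an optimal deletion set); deleting a close paren against an empty stack costs one deletion but decreases $g$ by one, since that symbol can never be matched; and at a mismatch the fair coin deletes the top open paren $o$ or the incoming close paren $c$, costing one deletion, where the two resulting potentials obey the convexity inequality $g(S,R')+g(S',cR')\le 2\,g(S,cR')$ (here $S=S'\cdot[o]$ and the suffix is $cR'$), so $g$ does not increase in expectation. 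Writing the contribution of the mismatch steps as a $\pm1$ walk started at $m^\star$ with conditionally fair increments, one sees that $X$ can exceed $2m^\star$ only if this walk overshoots; a reflection/ballot estimate, or a Chernoff/Azuma bound over the at most $O(m^\star)$ mismatch flips (each a fresh fair coin given the past), then delivers $X\le 2m^\star\le 4d$ with constant probability, and $\prob{X\le 4d}$ is bounded below by an absolute constant, tending to $1$ as $d$ grows.

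The step I expect to be genuinely delicate is controlling the cascade caused by a ``wrong'' coin flip. When the coin deletes a symbol that $D^\star$ keeps, it may orphan that symbol's partner higher on the stack and so force further wasted deletions later; worse, each coin outcome reshapes the remaining instance, so the bad events are not independent and one cannot merely multiply probabilities. The fair $\tfrac12$–$\tfrac12$ choice is exactly what keeps the expected length of such a cascade bounded — each wrong flip costs $O(1)$ extra deletions and is ``undone'' by the next favorable flip along that chain — but turning ``bounded in expectation'' into ``$X\le 2m^\star$ with constant probability'' needs the concentration step above, and one must also verify that $g$ and the charging to $D^\star$ behave correctly through the re‑examinations following a top‑of‑stack deletion, when the incoming symbol has not yet been consumed; that bookkeeping is where I anticipate spending the most care.
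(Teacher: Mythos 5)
Your reduction steps are fine and in fact mirror the paper: the factor-$2$ passage to a deletion-only optimum $m^\star\le 2d$ is the paper's Lemma \ref{lemma:deletion}, the stack-based optimal deletion algorithm is Lemma \ref{lemma:opt-stack}, and your potential $g$ plays the role of the paper's residual $|A_t^{OPT}\setminus A_t|$ together with the correct/wrong-move accounting. But the final, decisive step is wrong in two compounding ways. First, you have misread the statement: a ``$4d$-approximation where $d$ is the optimum'' means the algorithm's cost is at most $4d\cdot d$, i.e.\ $O(d^2)$ deletions; it does \emph{not} mean $X\le 4d$. Second, and more seriously, the claim you actually set out to prove, $X\le 2m^\star$ with constant probability, is false in general, and the argument you sketch for it proves the opposite. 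After the reduction, the process is (at best) a fair $\pm 1$ walk started at $m^\star$ whose correct moves occur with probability only guaranteed to be $\ge \tfrac12$; the number of deletions is the hitting time $T_0$ of the origin. For a fair walk started at $m^\star$, $\prob{T_0\le C m^\star}\le 2\exp\bigl(-m^\star/(2C)\bigr)$ by the reflection principle, so a linear-in-$m^\star$ bound holds with probability tending to $0$, not to $1$, as $d$ grows; Azuma/Chernoff over the mismatch flips cannot rescue this, since ``$g$ does not increase in expectation'' only yields a martingale, and a martingale started at $m^\star$ simply does not reach $0$ in $O(m^\star)$ steps with constant probability. This is precisely why the simple algorithm is only an $O(d)$-factor approximation (the paper itself notes the worst case $4\sqrt n$ when $d=\sqrt n$), and why the paper's subsequent refined algorithms are needed to do better.

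What the paper does at this point is the step you are missing: it maps the deletion process to gambler's ruin and exploits the heavy tail of the hitting-time law. Using the hitting time theorem, $\mathcal{P}_{d}(T_0=D)=\frac{d}{D}\binom{D}{(D-d)/2}2^{-D}$, and a binomial estimate, it shows $\prob{T_0\le 2d^2}\ge 0.194$ (Lemma \ref{lemma:2}). That gives at most $O((m^\star)^2)$ deletions with constant probability, hence an $O(d)$ approximation factor after the factor-$2$ deletion-only loss — which is the actual content of Theorem \ref{theorem:single}. So your write-up would need to replace the concentration step (and the claimed conclusion $X\le 4d$) by a hitting-time/heavy-tail argument of this kind, and correspondingly weaken the conclusion to ``at most $O(d^2)$ deletions with constant probability.''
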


The probability can be boosted by running the algorithm $\Theta(\log{n})$ times and considering the iteration which results in minimum number of edits. In the worst case,
when $d=\sqrt{n}$, the approximation factor can be $4\sqrt{n}$. This also gives a
very simple algorithm for string edit distance problem that achieves an $O(\sqrt{n})$ approximation.


 The analysis of even such a simple algorithm is nontrivial and proceeds as follows. First, we allow the optimal algorithm to consider only deletion and allow it to process the string using a stack; this increases the edit distance by a factor of $2$ (Lemma \ref{lemma:deletion}, Lemma \ref{lemma:opt-stack}). We then define for each comparison where an open and close parenthesis cannot be matched, a corresponding correct and wrong move. If an optimal algorithm  also compares exactly the two symbols and decides to delete one, then we can simply define the parenthesis that is deleted by the optimal algorithm as the correct move and the other as a wrong move. However, after the ``first'' wrong move, the comparisons performed by our procedure vs the optimal algorithm can become very different. Yet, we can label one of the two possible deletions as a correct move in some sense of decreasing distance to an optimal state. Next, we show that if up to time $t$, the algorithm has taken $W_t$ wrong moves then it is possible to get a well-formed string using a total of $2(d+2W_t)$ edit operations. These two properties help us to map the process of deletions to a one dimensional random walk which is known as the {\sc gambler's ruin} problem.

 In gambler's ruin, a gambler enters a casino with $\$d$ and starts playing a game where he wins with probability $1/2$ and loses with probability $1/2$ independently. The gambler plays the game repeatedly betting $\$1$ in each round. He leaves if he runs out of money or gets $\$n$. We can show that the number of steps taken by the gambler to be ruined is an upper bound on the number of edit operations performed by our algorithm. The expected number of such steps is $O(n)$ which is the string length in our case. Interestingly, the underlying probability distribution of the number of steps taken by the random walk is heavy-tailed and using that property, one can still lower bound the probability that gambler is ruined in $O(d^2)$ steps by a constant$\sim\frac{1}{5}$ (Lemma \ref{lemma:2}). This gives an $O(d)$ approximation.
\subsubsection*{A Refined Algorithm{\small {  (Section \ref{section:algo2})}}}

We now refine our algorithm as follows. Given string $\sigma$, we can delete any prefix of close parentheses, delete any suffix of open parentheses and match well-formed substrings without affecting the
optimal solution. After that, $\sigma$ can be written as $Y_1X_1Y_2X_2....Y_zX_z$ where each $Y_i$ is a sequence of open parentheses, each $X_i$ is a sequence of close parentheses and $z \leq d$. In the
optimal solution $X_1$ is matched with some suffix of $Y_1$ possibly after doing edits. Let us denote this suffix by $Z_1$. If we can find the left boundary of $Z_1$, then we can employ
{\sc StrEdit} to compute string edit distance between $Z_1$ and $X_1$ (we have to consider reverse of $X_1$ and convert each $t \in X_1$ to $\bar{t}$--this is what is meant by \stred~between a sequence of open and a sequence of close parentheses), as $Z_1X_1$ consists of only a single
sequence of open parentheses followed by a single sequence of close parentheses. If we can identify $Z_1$ correctly, then in the optimal solution $X_2$ is matched with a suffix of $Y_1^{res}Y_2$ after removing $Y_{1}^{res}=Y_1 \setminus Z_1$.
Let us denote it by $Z_2$. If we can again find the left boundary of $Z_2$, then we can employ {\sc StrEdit} between $Z_2$ and $X_2$ and so on. The question is {\em how do we compute these boundaries ?}

We use {\em Random-deletion} again (repeat it appropriately $\sim \log{n}$ times) to find these boundaries approximately (see Algorithm \ref{alg2}). We consider the suffix of $Y_1$ which {\em Random-deletion} matches against $X_1$ possibly after multiple deletions (call it $Z'_1$) and use $Z'_1$ to approximate $Z_1$.
We show again using the mapping of {\em Random-deletion} to random walk, that the error in estimating the left boundary is bounded (Lemma \ref{lemma:diff}, Lemma \ref{lemma:noofdeletions}, Lemma \ref{lemma:dev}). Specifically, if $\stred(Z_1, X_1)=d_1$, then $\stred(Z'_1,X_1) \leq 2d_1\sqrt{2\log{d_1}}$. Note that this improves from $O(d_1^2)$ by {\em Random-deletion} to $O(d_1\sqrt{\log{d_1}})$. But, the error that we make in estimating $Z_1$ may propagate and affect the estimation of $Z_2$. Hence the gap between optimal $Z_2$ and estimated $Z'_2$ becomes wider. If $\stred(Z_2, X_2)=d_2$, then we get $\stred(Z'_2,X_2) \leq 2(d_1+d_2)\sqrt{2\log{(d_1+d_2)}}$. Proceeding, in this fashion, we get the following theorem.

\begin{theorem}
\label{theorem:single2}
Algorithm \ref{alg2} obtains an $O(z\beta(n)\sqrt{\log{d}})$-approximation factor for edit distance computation  from strings  to \dy$(s)$ for any $s \geq 2$ in $O(n\log{n}+\alpha(n))$ time with probability at least $\left(1-\frac{1}{n}-\frac{1}{d}\right)$, where there exists an algorithm for {\sc StrEdit} running in $\alpha(n)$ time achieves an approximation factor of $\beta(n)$.
\end{theorem}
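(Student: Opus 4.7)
The plan is to decompose the analysis into three pieces: (i) the per-phase boundary estimation via \emph{Random-deletion}, (ii) the per-phase \stred{} approximation, and (iii) an additive aggregation of errors across the $z$ phases. After the preprocessing described before the theorem statement (trimming unmatched prefixes/suffixes and collapsing interior well-formed segments), the input takes the canonical form $Y_1 X_1 Y_2 X_2 \cdots Y_z X_z$ with $z \le d$. I would fix notation by letting $Z_i$ denote the suffix of the open parentheses that remain unconsumed at the start of phase $i$ that OPT matches against $X_i$, and $d_i = \stred(Z_i, X_i)$, so $\sum_{i=1}^z d_i \le d$.

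I would then establish the per-phase boundary bound. At phase $i$ the algorithm runs \emph{Random-deletion} $\Theta(\log n)$ times and keeps the execution that matches $X_i$ against the tightest window $Z'_i$. Using Lemma \ref{lemma:diff}, Lemma \ref{lemma:noofdeletions}, and Lemma \ref{lemma:dev}, which sharpen the constant-probability $O(d^2)$ gambler's-ruin bound from Theorem \ref{theorem:single} into an $O(d\sqrt{\log d})$ high-probability bound by exploiting the sub-Gaussian tail of the ruin stopping time taken over $\Theta(\log n)$ trials, I would show inductively
$$\stred(Z'_i, X_i) \;\le\; 2\bigl(d_1 + d_2 + \cdots + d_i\bigr)\sqrt{2 \log(d_1 + \cdots + d_i)} \;\le\; 2d\sqrt{2 \log d}$$
with failure probability at most $1/(nd)$ per phase. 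The cumulative $d_j$'s inside the radical capture the crucial (and delicate) fact that boundary errors from earlier phases enlarge the effective input window at phase $i$ only \emph{additively}, not multiplicatively.

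I would then feed each $(Z'_i, X_i)$ into the \stred{} subroutine (after mapping close parens to opens and reversing, per the paper's standard reduction), obtaining an edit sequence of cost at most $\beta(n)\cdot\stred(Z'_i, X_i)$. Summing over the $z$ phases gives total algorithmic cost
$$\sum_{i=1}^{z} \beta(n)\cdot \stred(Z'_i, X_i) \;\le\; 2z\beta(n)\,d\sqrt{2\log d},$$
which is an $O(z\beta(n)\sqrt{\log d})$ factor over $d = \mathrm{OPT}$. For running time, the $\Theta(\log n)$ \emph{Random-deletion} executions per phase sum to $O(n\log n)$ since their total input length telescopes to $O(n)$, and the \stred{} calls process substrings of combined length $O(n)$ for a total of $\alpha(n)$. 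A union bound over the $z \le d$ phases turns the per-phase $1/(nd)$ failure probability into the claimed $1 - 1/n - 1/d$ success probability (the spare $1/d$ absorbs slack from cases where $z < d$ or from the boundary analysis).

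The main obstacle will be establishing the per-phase $O(d\sqrt{\log d})$ boundary-error bound. This requires (a) refining the random-walk analysis of Theorem \ref{theorem:single} into a high-probability tail bound on the minimum stopping time across $\Theta(\log n)$ independent ruin walks, and (b) certifying that each ``wrong move'' in the walk translates into only an $O(1)$ shift in the inferred left boundary, so that the total boundary displacement after phase $i$ is bounded by the cumulative distance $d_1 + \cdots + d_i$ rather than by something larger. Making this propagation additive across phases is what prevents the approximation factor from blowing up exponentially in $z$.
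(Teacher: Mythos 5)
This is essentially the paper's own proof: the same estimation of the matching boundaries $Z_{a,min}$ by restarting {\em Random-deletion} $\Theta(\log n)$ times per block, the same per-block bound of $d_a + O\bigl((\sum_{a'\le a} d_{a'})\sqrt{\log d}\bigr)$ on the string edit distance between $Z_{a,min}$ and $X_a$ obtained from Lemmas \ref{lemma:diff}, \ref{lemma:noofdeletions} and \ref{lemma:dev} (the paper packages the final translation step as Lemma \ref{lemma:stred}), the same $\beta(n)$ blow-up summed over the $z$ blocks, and the same union bound and $O(n\log n + \alpha(n))$ running-time accounting (Lemma \ref{lemma:runningtime}). The one descriptive slip is that the $\sqrt{\log d}$ does not come from a ``sub-Gaussian tail of the ruin stopping time'' (that stopping time is heavy-tailed); it comes from Azuma's inequality applied to the wrong-minus-correct move count once the number of deletions in the best of the $\Theta(\log n)$ restarts is capped at $2(\sum_{a'\le a} d_{a'})^2$ --- which is exactly what the Lemma \ref{lemma:dev} you invoke does, so the argument coincides with the paper's.
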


\subsubsection*{Further Refinement: Main Algorithm {\small {(Section \ref{section:main})}}}
{\em Is it possible to compute subsequences of $\sigma$ such that each subsequence contains a single sequence of open and close parentheses in order to apply \stred, yet propagational error can be avoided ?} This leads to our main algorithm.

\noindent {\bf Example.} Consider $\sigma=Y_1X_1Y_2X_2Y_3X_3...Y_zX_z$, and let the optimal algorithm matches $X_1$ with $Z_{1,1}$, matches $X_2$ with $Z_{1,2}Y_2$, matches $X_3$ with $Z_{1,3}Y_3$, and so on, where $Y_1=Z_{1,z}Z_{1,z-1}...Z_{1,2}Z_{1,1}$. In the {\em refined algorithm}, when {\em Random-deletion} finishes processing $X_1$ and tries to estimate the left boundary of $Z_{1,1}$, it might have already deleted some symbols of $Z_{1,2}$. It is possible that it deletes $\Theta(d_1\sqrt{\log{d_1}})$ symbols from $Z_{1,2}$. Therefore, when computing \stred~between $X_1$ and $Z'_1$, the portion of $Z_{1,2}$ in $Z'_{1}$ may not have any matching
correspondence and results in increased \stred. More severely, the {\em Random-deletion} process gets affected when processing $X_2$. Since, it does not find the symbols in $Z_{1,2}$ which ought to be matched with some subsequence of $X_2$, it penetrates $Z_{1,3}$ and may delete $\Theta((d_1+d_2)\sqrt{\log{(d_1+d_2)}})$ symbols from $Z_{1,3}$, and so on. To remedy this, view $X_2=X_{2,in}X_{2,out}$ where $X_{2,in}$ is the prefix of $X_2$ that is matched with $Y_2$ and $X_{2,out}$ is matched with $Z_{1,2}$. Consider pausing the random deletion process when it finishes $Y_2$ and thus attempt to find $X_{2,in}$. Suppose, {\em Random-deletion} matches $X'_{2,in}$ with $Y_2$, then compute $\stred(Y_2,X'_{2,in})$. While there could be still mistake in computing $X'_{2,in}$, the mistake does not affect $Z_{1,3}$. Else if {\em Random-deletion} process finishes $X_2$ before finishing $Y_2$, then of course it has not been able to affect $Z_{1,3}$. In that case $Z'_{2}$ is a suffix of $Y_2$ and we compute  $\stred(Z'_{2},X_2)$. Similarly, when processing $X_3$, we pause whenever $X_3$ or $Y_3$ is exhausted and create an instance of \stred~accordingly. Suppose, for the sake of this example, $Y_2, Y_3,..,Y_z$ are finished before finishing $X_2,X_3,...,X_z$ respectively and $X_1$ is finished before $Y_1$. Then, after creating the instances of \stred s~as described, we are left with a sequence of open parenthesis corresponding to a prefix of $Y_1$ and a sequence of close parenthesis, which is a combination of suffixes from $X_2,X_3,...,X_z$. We can compute \stred~between them. Since much of $Z_{1,z}Z_{1,z-1}...Z_{1,2}$ exists in this remaining prefix of $Y_1$ and their matching parentheses in the created sequence of close parentheses, the computed \stred~distance will remain within the desired bound.\\

Let us call each $X_iY_i$ a block. As the example illustrates, we create \stred~instances corresponding to what {\em Random-deletion} does locally in each block. After the first phase, from each block we are either left with a sequence of open parenthesis (call it a block of type {\sc O}), or a sequence of close parentheses (call it a block of type {\sc C}), or the block is empty. This creates new sequences of open and close parentheses by combining all the consecutive {\sc O} blocks together (forget empty blocks) and similarly combining all consecutive {\sc C} blocks together (forget empty blocks). We get new blocks, at most $\lfloor \frac{z}{2} \rfloor$ of them after removing any prefix of close and any suffix of open parentheses, in the remaining string. Again, we look at what {\em Random-deletion} does locally in these new blocks and create instances of \stred~accordingly. This process can continue at most $\lceil \log{z} \rceil +1$ phases, since the number of blocks reduces at least by a factor of $2$ going from one phase to the next. This entire process is repeated $O(\log{n})$ time and the final outcome is the minimum of the edit distances computed over these repetitions. The following theorem summarizes the performance of this algorithm.

\begin{theorem}
\label{theorem:main}
There exists an algorithm that obtains an $O(\beta(n)\log{z}\sqrt{\log{d}})$-approximation factor for edit distance computation  from strings  to \dy$(s)$ for any $s \geq 2$ in $O(n\log{n}+\alpha(n))$ time with probability at least $\left(1-\frac{1}{n}-\frac{1}{d}\right)$, where there exists an algorithm for {\sc StrEdit} running in $\alpha(n)$ time that achieves an approximation factor of $\beta(n)$.
\end{theorem}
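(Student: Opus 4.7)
The plan is to prove the theorem by inductively analyzing phase $k$ of the main algorithm and then combining $\lceil\log z\rceil+1$ phases. After the standard preprocessing, write the residual at the start of phase $k$ as $Y_1X_1\ldots Y_{z_k}X_{z_k}$, so $z_0\le d$ and $z_{k+1}\le \lceil z_k/2\rceil$. A phase is one pass of Random-deletion augmented with a pausing rule: whenever the walk first empties its current $Y_i$ contribution on the stack or first exhausts the current $X_i$ on the input, it freezes the open-run and close-run accumulated since the last pause and dispatches them as a single \stred instance; the unmatched remainders of every block become the input to phase $k+1$ after merging consecutive same-type remainders.

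The central per-phase claim is that the total cost produced by the \stred calls in phase $k$ is at most $O(\beta(n)\sqrt{\log d})$ times the portion $d^{(k)}$ of OPT that is \emph{resolved} in that phase, meaning the OPT cost of matchings between an $X$-block and a $Y$-block both present at the start of phase $k$, at most one of which survives to phase $k+1$. Lemma~\ref{lemma:diff}, Lemma~\ref{lemma:noofdeletions} and Lemma~\ref{lemma:dev} apply to each emitted instance because it has the pure open-then-close shape required by \stred; together they give that the left boundary chosen by Random-deletion deviates from the OPT boundary by at most $O(d^{(k)}\sqrt{\log d^{(k)}})$, so the $\beta(n)$-approximate \stred outputs in phase $k$ sum to $O(\beta(n)\sqrt{\log d}\cdot d^{(k)})$. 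The main obstacle is justifying that pausing truly confines overshoot to within a single block and kills the error propagation that forced the refined algorithm to pay a factor of $z$. I expect to phrase this as a formal invariant: at each pause, the state of the remaining input differs from the ideal OPT residual by at most the current wrong-move count, independent of block index, so the random-walk arguments of Lemma~\ref{lemma:2} and Theorem~\ref{theorem:single2} apply locally to each emitted instance rather than globally across the whole string.

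The second ingredient is a charging argument. Because an OPT matching resolved in phase $k$ is removed before phase $k+1$, the $d^{(k)}$'s partition OPT and $\sum_k d^{(k)}\le O(d)$ by Lemma~\ref{lemma:deletion} and Lemma~\ref{lemma:opt-stack}. Summing the per-phase bound over the $\lceil\log z\rceil+1$ phases and crudely upper-bounding each $\sqrt{\log d^{(k)}}$ by $\sqrt{\log d}$ yields a total computed cost of $O(\beta(n)\log z\sqrt{\log d}\cdot d)$, which is the claimed approximation factor. Taking the minimum over $\Theta(\log n)$ independent outer repetitions then boosts the constant-probability guarantee of Random-deletion inherited from Theorem~\ref{theorem:single} to $1-\tfrac{1}{n}-\tfrac{1}{d}$ via a union bound over the $O(\log n\log z)$ random walks invoked.

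Finally, the running-time analysis is routine: within a single outer repetition, each phase scans a disjoint partition of the residual in linear time and calls \stred on instances whose lengths sum to at most the current residual size, so the total work across all phases of one repetition is $O(n+\alpha(n))$ under the standard assumption that $\alpha$ is superlinear. Multiplying by the $\Theta(\log n)$ outer repetitions and absorbing the logarithmic factor on the \stred side into $\alpha(n)$ gives the claimed $O(n\log n+\alpha(n))$ bound.
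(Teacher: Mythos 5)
There is a genuine gap, and it sits exactly at your ``central per-phase claim.'' You assert that the total \stred\ cost emitted in phase $k$ is $O(\beta(n)\sqrt{\log d})$ times only the portion $d^{(k)}$ of OPT resolved in that phase, with the $d^{(k)}$ partitioning OPT. First, note that this claim together with your partition property would give a total of $O(\beta(n)\sqrt{\log d}\cdot d)$ with \emph{no} $\log z$ factor at all, which is stronger than the theorem and than anything the paper establishes (removing $\log z$ is posed as an open problem); your subsequent step of ``summing over the $\lceil\log z\rceil+1$ phases'' to produce the $\log z$ factor is therefore a non sequitur --- either the per-phase claim is too strong to be true, or the combination step does not follow from it. Second, the per-phase claim ignores precisely the hard part: a phase-$k$ \stred\ instance contains symbols whose OPT partners were \emph{erroneously deleted in earlier phases}, and the cost of these symbols is not controlled by $d^{(k)}$ at all. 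The paper's proof (Lemma \ref{lemma:phaseL-edit} and especially Lemma \ref{lemma:imptotaleditcost}) handles them by splitting each instance's excess into a $D$-part bounded by the current phase's $local\mbox{-}error$ and an $E$-part charged, symbol by symbol, to the $local\mbox{-}error$ of the last phase in which the partner existed; the $\log z$ arises because the per-phase sums of local errors can each be as large as $\Theta(d\sqrt{\log d})$ (the block-level OPT quantities of nested blocks recount the same OPT edits across phases), and there are $O(\log z)$ phases. Your proposed ``invariant'' that the residual differs from the OPT residual by the wrong-move count gestures at this but does not deliver the charging argument, which is the heart of the proof.

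A second, related gap is your appeal to Lemmas \ref{lemma:diff}, \ref{lemma:noofdeletions} and \ref{lemma:dev} ``locally to each emitted instance.'' Lemma \ref{lemma:noofdeletions} pertains to the refined algorithm, which re-randomizes each segment separately; in the main algorithm the phase-$\ge 2$ blocks are themselves random (they depend on the walk), so you cannot re-randomize per block, and you need the hitting-time bound to hold \emph{simultaneously for all $O(z^2)$ candidate intervals $[a',b']$ within a single iteration} of the walk. This is the content of the paper's Lemma \ref{lemma-algo3:noofdeletions}, proved by a union bound over interval pairs using a common random source, and it is also where the stated success probability $1-\frac1n-\frac1d$ comes from (property (P1) with probability $1-\frac1n$, then the $\Theta(z)$ Azuma-type local-error bounds with probability $1-\frac1d$); your amplification ``union bound over $O(\log n\log z)$ random walks'' does not reconstruct this. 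The runtime portion of your argument is essentially fine, but the approximation-factor argument needs the cross-phase charging and the uniform-interval hitting-time lemma to be a proof.
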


The $\sqrt{\log{d}}$ factor in the approximation can be avoided if we consider iterating $O(n^{\epsilon}\log{n})$ times. Since, the best known near-linear time algorithm for \stred~anyway has $\alpha(n)=n^{1+\epsilon}$ and $\beta=(\log{n})^{\frac{1}{\epsilon}}$, we obtain the following theorem.

\begin{theorem}
\label{theorem:final}
For any $\epsilon > 0$, there exists an algorithm that obtains an $O(\frac{1}{\epsilon}\log{z}(\log{n})^{\frac{1}{\epsilon}})$-approximation factor for edit distance computation  from strings  to \dy$(s)$ for any $s \geq 2$ in $O(n^{1+\epsilon})$ time with high probability.
\end{theorem}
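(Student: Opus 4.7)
The plan is to specialize the algorithm underlying Theorem~\ref{theorem:main} with two modifications: plug in the Andoni-Krauthgamer-Onak string edit distance approximation (which gives $\alpha(n) = n^{1+\epsilon}$ and $\beta(n) = (\log n)^{O(1/\epsilon)}$ for any fixed $\epsilon > 0$) as the \stred\ subroutine, and raise the number of independent outer repetitions of the whole Random-deletion + \stred\ pipeline from $\Theta(\log n)$ to $\Theta(n^\epsilon \log n)$. Taking the best-cost output across these extra repetitions is what will allow us to shave the $\sqrt{\log d}$ factor from the approximation ratio of Theorem~\ref{theorem:main}.

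First I would revisit the boundary-estimation analysis driving Theorem~\ref{theorem:main}. The $\sqrt{\log d}$ slack arises because the standard gambler's-ruin tail bound guarantees that one run of Random-deletion localizes the correct boundary within $O(d_i \sqrt{\log d_i})$ symbols with high probability. However, the same heavy-tailed distribution that was exploited in Lemma~\ref{lemma:2} to lower-bound by a constant the probability of being ruined in $O(d_i^2)$ steps also yields an inverse-polynomial probability of being ruined within only $O(d_i)$ steps. Quantifying this carefully, I would show that the probability that a single run produces the tight $O(d_i)$-deviation estimate is at least $n^{-\epsilon}$; amplifying over $\Theta(n^\epsilon \log n)$ independent trials and union-bounding over the $O(\log z)$ recursive phases and at most $n$ blocks ensures that, with probability $1 - 1/\mathrm{poly}(n)$, at least one trial simultaneously achieves the tight $O(d_i)$ boundary estimate in every block of every phase.

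With such tightened boundary estimates, the approximation recursion of Theorem~\ref{theorem:main} collapses from $O(\beta(n) \log z \sqrt{\log d})$ to $O(\beta(n) \log z)$; substituting the AKO bound $\beta(n) = (\log n)^{O(1/\epsilon)}$ yields the desired approximation $O(\frac{1}{\epsilon} \log z (\log n)^{1/\epsilon})$, where the explicit $\frac{1}{\epsilon}$ absorbs the hidden constant in the exponent of the AKO bound. For the running time, each of the $\Theta(n^\epsilon \log n)$ trials costs $O(n\log n + \alpha(n)) = O(n^{1+\epsilon})$, yielding a total of $\tilde{O}(n^{1+2\epsilon})$, which simplifies to $O(n^{1+\epsilon})$ after reparametrizing $\epsilon \mapsto \epsilon/3$.

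The main obstacle will be the second step: lower-bounding by $n^{-\epsilon}$ the probability that a single run produces the tight $O(d_i)$ deviation at \emph{every} block boundary. Because the random walks in successive phases of the refined algorithm are coupled through the boundary estimates chosen in the preceding phase, these events are not independent across phases, and one must argue---presumably by conditioning phase-by-phase on a good outcome so far and re-invoking the heavy-tail lower-bound used in Lemma~\ref{lemma:2}---that the aggregate success probability remains polynomially bounded below. Once this quantitative strengthening of the gambler's-ruin estimate is in hand, the rest is a straightforward union bound and running-time accounting.
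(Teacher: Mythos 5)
Your overall architecture matches the paper's: increase the number of repetitions of the Random-deletion pipeline to $\Theta(n^{\epsilon}\log n)$, exploit a lower bound on the probability of an unusually fast ruin to argue that some repetition gives tight boundary estimates in every block of every phase, and then plug in the $\alpha(n)=n^{1+\epsilon}$, $\beta(n)=(\log n)^{O(1/\epsilon)}$ string-edit subroutine. However, your key quantitative step is false as stated. You claim that the heavy tail of the gambler's-ruin distribution yields an inverse-polynomial (at least $n^{-\epsilon}$) probability of being ruined within only $O(d_i)$ steps. The ruin-time distribution is heavy-tailed to the \emph{right} (large ruin times), not to the left: to hit $0$ within $Cd$ steps the walk must realize a displacement of $d$ in $Cd$ steps, which by a Hoeffding/Gaussian estimate (or by summing the exact formula of Lemma \ref{lemma:1}, $\mathcal{P}_d(T_0=cd)\approx \frac{1}{c\sqrt{cd}}e^{-d/2c}$) has probability $e^{-\Omega(d)}$. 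Whenever $d=\omega(\log n)$ this is superpolynomially small, so $\Theta(n^{\epsilon}\log n)$ repetitions cannot amplify it, and your union bound over blocks and phases never gets off the ground.

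The paper avoids this by aiming at an intermediate regime, which is exactly the fix you need. Corollary \ref{corol:1} shows ruin within $\frac{1}{\epsilon}\frac{d^2}{\log d}$ steps happens with probability at least $\frac{\sqrt{\epsilon\log d}}{d^{\epsilon}}\geq n^{-\Theta(\epsilon)}$, which \emph{is} amplifiable with $O(n^{\epsilon}\log n)$ repetitions. The removal of the $\sqrt{\log d}$ factor then does not come from the ruin time itself being $O(d)$; it comes from the martingale (Azuma/Chernoff) deviation bound applied over the reduced number of steps: with at most $\frac{1}{\epsilon}\frac{d^2}{\log d}$ deletions, the excess of wrong over correct moves is $O(\sqrt{T\log d})=O(d/\sqrt{\epsilon})$, so each $local$-$error$ term drops from $O(d\sqrt{\log d})$ to $O(d/\epsilon)$, and the phase-by-phase accounting of Lemma \ref{lemma:imptotaleditcost} then gives $O(\frac{1}{\epsilon}\beta(n)\log z)$. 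So you have the right skeleton (including the correct worry about dependence across phases, which the paper handles via the fixed-random-source/union-bound argument of Lemma \ref{lemma-algo3:noofdeletions}), but the event whose probability you lower-bound must be ``ruin in $O(\frac{1}{\epsilon}d^2/\log d)$ steps,'' not ``ruin in $O(d)$ steps''; with your version the proof fails.
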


 The algorithm and its analysis gives a general framework which can be applied to many other languages beyond \dy$(s)$. Employing this algorithm one can repair footprints of several memory checking languages such as \stack, \queue, \pq~ and \deque~ efficiently. We discuss this connection in Section \ref{section:mc}.

\section{Analysis of {\bf Random-deletion}}
\label{section:algo1}
Here we analyse the performance of {\bf Random-deletion} and prove Theorem \ref{theorem:single}.

%

Recall the algorithm. It uses a stack and scans $\sigma$ left to right. Whenever it
encounters an open parenthesis, the algorithm pushes it into the stack. Whenever it encounters a close parenthesis, it {\em compares} this current symbol with
the one at the stack top. If they can be matched, the algorithm always matches them and proceeds to the next symbol. If the stack is empty, it deletes the close parenthesis and again proceeds to the next symbol. Else, the stack is non-empty but the two parentheses cannot be matched. In that case, the algorithm deletes one of the symbol, either the one at the stack top or the current close parenthesis in the string. It tosses an unbiased coin, and independently with probability $\frac{1}{2}$ it chooses which one of them to delete. If there is no more close parenthesis, but stack is non-empty, then it deletes all the open parentheses from the stack.

We consider only deletion as a viable edit operation and under deletion-only model, assume that the optimal algorithm is stack based, and matches well-formed substrings greedily. The following two lemmas whose proofs are in the appendix state that we lose only a factor $2$ in the approximation by doing so.
 \begin{lemma}
 \label{lemma:deletion}
 For any string $\sigma \in (\parens \cup \barparens)^{*}$, $OPT(\sigma) \leq OPT_{d}(\sigma) \leq 2 OPT(\sigma)$, where $OPT(\sigma)$ is the minimum number of edits: insertions, deletions, substitutions required and $OPT_d(\sigma)$ is the minimum number of deletions required to make $\sigma$ well-formed.
 \end{lemma}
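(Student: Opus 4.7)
The first inequality $OPT(\sigma) \leq OPT_d(\sigma)$ is immediate, since any deletion-only sequence is a legal edit sequence. The content of the lemma is the upper bound $OPT_d(\sigma) \leq 2\,OPT(\sigma)$, and my plan is to convert an arbitrary optimal edit sequence into a deletion-only sequence of at most twice the size.

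I would fix an optimal solution using $D$ deletions, $I$ insertions, and $S$ substitutions, with $D+I+S = OPT(\sigma)$, and let $\sigma'$ be the resulting well-formed string. Then I would classify each parenthesis of $\sigma'$ as either \emph{kept} (an original character of $\sigma$ unchanged by the edits) or \emph{new} (inserted, or produced by a substitution); note that there are exactly $I+S$ new characters in $\sigma'$. Since $\sigma'$ is well-formed, it carries a canonical non-crossing perfect matching of its parentheses, and each matched pair falls into one of three types: kept-kept, kept-new, or new-new. Let $r$ denote the number of kept-new pairs.

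My deletion-only solution then performs the following deletions on $\sigma$: the $D$ characters deleted by the optimal solution, the $S$ characters that the optimal solution substituted, and, for each kept-new matched pair of $\sigma'$, the kept endpoint---yielding a total of $D + S + r$ deletions. Using the trivial bound $r \leq I+S$ (there are only $I+S$ new characters in all), the count is at most $D + I + 2S \leq 2\,OPT(\sigma)$, which is the desired inequality. What survives of $\sigma$ after these deletions is, by construction, exactly the collection of endpoints of kept-kept pairs.

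The only subtle step---and the one I would write out most carefully---is verifying that these survivors form a well-formed string. The argument rests on two observations: (i) edits never reorder untouched characters, so the relative order of kept characters is identical in $\sigma$ and in $\sigma'$; and (ii) removing both endpoints of any matched pair from a Dyck string yields another Dyck string, a direct consequence of the non-crossing property of its matching. Iterating (ii) within $\sigma'$---first deleting all new-new pairs, then all kept-new pairs together with their kept endpoints---leaves a Dyck string consisting precisely of the kept-kept pairs, whose character sequence, by (i), coincides with the output of my deletion-only solution. This closes the plan.
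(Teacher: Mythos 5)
Your proof is correct and rests on essentially the same charging argument as the paper: keep the optimal solution's deletions, delete the originals at substituted positions, and charge each inserted or substituted symbol to at most one extra deletion of its kept partner, giving $D+S+r \leq D+I+2S \leq 2\,OPT(\sigma)$. The only difference is presentational --- the paper replaces insertions and substitutions one at a time with a somewhat informal appeal to optimality, whereas you argue globally via the non-crossing matching of $\sigma'$ and explicitly verify that the surviving kept-kept pairs form a well-formed string, which is a cleaner justification of the same bound.
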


  \begin{lemma}
 \label{lemma:opt-stack}
There exists an optimal algorithm that makes a single scan over the input pushing open parenthesis to stack and when it observes a close parenthesis, it either pops the stack top, if it matches the observed close parenthesis and removes both from further consideration, or edits (that is deletes) either the stack top or the observed close parenthesis whenever there is a mismatch.
\end{lemma}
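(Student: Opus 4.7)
The plan is to take any optimal deletion-only solution, given by a deletion set $D$ with well-balanced residue $S = \sigma \setminus D$, and to exhibit an execution of the stack-based algorithm $A$ described in the lemma that also uses exactly $|D|$ deletions. The starting observation is that $S$ admits a unique non-crossing parenthesis matching: each close in $S$ is paired with the most recent unmatched open of $S$ appearing before it, exactly what a push/pop scan over $S$ would produce.

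Given $D$, I would run $A$ on $\sigma$ left to right and specify its behavior at each close $\sigma_i$ via the following rule. If $A$'s stack top $t$ matches $\sigma_i$ by type, the lemma's algorithm must pop $t$. At a genuine type mismatch, I delete $t$ if $t \in D$ and otherwise delete $\sigma_i$. I would then prove, by induction on the scan position $i$, the invariant that after processing $\sigma_1 \dots \sigma_i$ the characters deleted by $A$ form a subset of $D \cap \{1,\dots,i\}$, and $A$'s stack equals the $S$-unmatched opens at position $i$ together with (possibly) some opens of $D$ interleaved above all $S$-opens.

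For the inductive step I would check several sub-cases: pushing an open is free; a greedy match between $t \in S$ and $\sigma_i \in S$ coincides with $S$'s matching by non-crossing; a mismatch with $t \in D$ deletes $t$ and re-examines $\sigma_i$; a mismatch with $t \in S$ forces $\sigma_i \in D$, since otherwise the non-crossing matching of $S$ would pair $t$ with $\sigma_i$, contradicting the type mismatch, so the rule deletes $\sigma_i$, again from $D$. At the end of the scan, every open remaining on $A$'s stack lies in $D$ by the invariant, and $A$ deletes them, giving exactly $|D|$ deletions.

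The hard part will be the remaining sub-case where the greedy rule forces a match between $t \in D$ and $\sigma_i$: here $A$ matches and pops even though $OPT_d$ would have deleted $t$. I would resolve this by an exchange argument, showing that the matched pair $(t,\sigma_i)$ together with the forced loss of some $S$-matching located below $t$ on the stack yields an alternative optimal deletion set $D'$ of the same size. Verifying this exchange preserves well-balancedness uses the structural fact that any $D$-opens lying between $t$'s original position and $\sigma_i$ sit above the relevant $S$-open in $A$'s stack and have already been shed by earlier applications of the mismatch rule, ensuring the re-routing of matches is local and the net deletion count is unchanged.
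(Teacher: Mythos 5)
Your overall architecture (fix an optimal deletion set $D$, simulate the stack algorithm guided by $D$, and repair $D$ by an exchange whenever the forced greedy match disagrees with it) is viable and genuinely different from the paper's proof, which is a short induction on the optimal number of deletions: run the stack algorithm to the \emph{first} mismatch, argue that any optimal algorithm must delete one of the two mismatched symbols, copy that choice, and recurse. However, as written your case analysis has a concrete hole. You enumerate: greedy match with $t\in S,\ \sigma_i\in S$; mismatch with $t\in D$; mismatch with $t\in S$ (forcing $\sigma_i\in D$); and the ``hard'' forced match with $t\in D$. You never treat the forced greedy match with $t\in S$ and $\sigma_i\in D$, i.e.\ the optimum keeps the stack top for a \emph{later} partner $v'$ but deletes the current close, while the type match forces $A$ to pair them. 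This case really occurs --- take $\sigma={\tt (\,)\,)}$ with $D$ deleting the first close, so the optimum matches the open with the last close; at step $2$ the algorithm must match $t\in S$ with $\sigma_2\in D$, and at step $3$ it then deletes a symbol \emph{not} in $D$, so your stated invariant (``deletions of $A$ are a subset of $D$'') is simply false without a repair here. The fix is an exchange of the same flavor you already propose, but on the close side: replace the pair $(t,v')$ by $(t,\sigma_i)$, i.e.\ $D'=D\setminus\{\sigma_i\}\cup\{v'\}$. Relatedly, once any exchange is performed, the algorithm's future rule ``delete $t$ iff $t\in D$'' and the invariant must be restated with respect to the \emph{current} (evolving) optimal deletion set; your write-up keeps both tied to the original $D$.

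Two smaller points. First, the stack-structure clause of your invariant, read as ``the $D$-opens sit above all $S$-opens,'' is wrong: a $D$-open pushed early can remain on the stack while later $S$-opens are pushed and popped above it, so $D$-opens may be interleaved arbitrarily below surviving $S$-opens; the invariant should only say the stack consists of the currently $S$-unmatched opens with some $D$-opens interspersed. Second, the justification you sketch for the exchange (``$D$-opens between $t$ and $\sigma_i$ have already been shed'') is not the property you actually need. What must be verified is that no pair of the current reference matching straddles the new pair $(t,\sigma_i)$ (a pair $(u,v)$ with $u<t<v<\sigma_i$, or with $t<u<\sigma_i<v$). This follows from the stack invariant applied at the earlier time $v$ (respectively at time $i$): such a $u$ would have to remain on, or be absent from, the stack in contradiction with $t$ being present/top, using that $A$ never deletes symbols of the current residue and that exchanges only ever add pairs whose close is the symbol being processed. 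With the missing case added, the invariant maintained against the evolving optimum, and the no-straddling check made explicit, your exchange argument goes through and yields the lemma; it is in fact more detailed than the paper's own argument precisely at the step (consistency of greedy matches with some optimum) that the paper treats as evident.
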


From now onward we fix a specific optimal stack based algorithm, and refer that as the optimal algorithm.

Let us initiate time $t=0$. At every step in {\em Random-deletion} when we either match two parentheses (current close parenthesis in the string with open parenthesis at the stack top) or delete one of them, we increment the time $t$ by $1$.

We define two sets $A_{t}$ and $A_{t}^{OPT}$ for each time $t$.

\begin{definition}
For every time $t\geq 0$, $A_{t}$ is defined as all the indices of the symbols that are matched or deleted by {\bf Random-deletion} up to and including time $t$.
\end{definition}

\begin{definition}
For every time $t \geq 0$, $$A_{t}^{OPT}=\{i \mid i \in A_{t} \mbox{ or } i \mbox{ is matched by {\bf the optimal algorithm} with some symbol with index in } A_{t}\}.$$
\end{definition}

Clearly at all time $t \geq 0$, $A_{t}^{OPT} \supseteq A_{t}$. We now define a correct and wrong move.

\begin{definition}
A comparison at time $t$ in the algorithm leading to a {\it deletion} is a correct move if $|A_{t}^{OPT} \setminus A_{t}| \leq |A_{t-1}^{OPT} \setminus A_{t-1}|$ and is a wrong move
if $|A_{t}^{OPT} \setminus A_{t}| > |A_{t-1}^{OPT} \setminus A_{t-1}|$.
\end{definition}

\begin{lemma}
At any time $t$, there is always a correct move, and hence {\bf Random-deletion} always takes a correct move with probability at least $\frac{1}{2}$.
\end{lemma}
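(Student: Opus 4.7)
My plan is to show that at any comparison step, at most one of the two deletable positions---the stack top $i$ (an open parenthesis) and the current symbol $j$ (a close parenthesis)---can be a wrong move. Since \textbf{Random-deletion} picks between them uniformly, a correct move is made with probability at least $\tfrac{1}{2}$, which in particular yields existence.

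First I would translate the correctness definition into a statement about OPT's matching. A direct bookkeeping of the sets $A_t$ and $A_t^{OPT}$ gives that deleting a position $p \in \{i, j\}$ changes $|A_{t}^{OPT}\setminus A_{t}|$ by $-1$ if $p \in A_{t-1}^{OPT} \setminus A_{t-1}$, by $0$ if $p \notin A_{t-1}^{OPT}$ and OPT also deletes $p$, and by $+1$ if $p \notin A_{t-1}^{OPT}$ and OPT matches $p$ with some position $q \notin A_{t-1}$. In other words, $p$ is a \emph{wrong} move precisely when OPT matches $p$ with a position not yet processed by our algorithm.

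Next I would argue by contradiction: assume both $i$ and $j$ are wrong, so OPT matches $i$ with some $i'$ and $j$ with some $j'$, where $i', j' \notin A_{t-1}$. The complement of $A_{t-1}$ inside $\{1,\dots,j-1\}$ is exactly the set of open parentheses currently on the stack. Since $i'$ is a close parenthesis of the type matching $i$, it cannot be one of those stacked open parentheses, so $i' \geq j$; the type mismatch between $i$ and $j$ forces $i' \neq j$, hence $i' > j$. Symmetrically, $j'$ is an open parenthesis to the left of $j$ with type matching $j$, so $j' \notin A_{t-1}$ forces $j'$ to be on the stack; the type mismatch between $i$ and $j$ (combined with $j$ matching $j'$) gives $j' \neq i$, so $j'$ sits strictly below $i$ in the stack, which, by the LIFO invariant, means $j' < i$ in the string.

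We now have positions $j' < i < j < i'$ with OPT matching the pairs $(j', j)$ and $(i, i')$---a crossing. By Lemma~\ref{lemma:opt-stack}, the fixed optimal algorithm is stack-based and therefore produces a non-crossing matching, a contradiction. Thus at least one of $i, j$ is a correct move, and the uniformly random choice is correct with probability at least $\tfrac{1}{2}$. The main obstacle is precisely the crossing argument: one must deduce both $i' > j$ and $j' < i$ using the combination of the type mismatch between $i$ and $j$ and the LIFO structure of the stack---without the type mismatch, $i'$ or $j'$ could collide with $i$ or $j$ and the ``crossing'' would collapse into a legitimate nesting.
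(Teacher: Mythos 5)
Your main argument is sound and is essentially the paper's own proof of the central case: you characterize a wrong move as one where the optimal algorithm matches the deleted symbol to a position not yet processed, and then show that if both the stack top $i$ and the current close parenthesis $j$ were wrong, the optimal matching would contain the crossing pairs $(j',j)$ and $(i,i')$ with $j'<i<j<i'$, contradicting the non-crossing (stack-based, well-formed) structure guaranteed by Lemma~\ref{lemma:opt-stack}. Your derivation of $i'>j$ and $j'<i$ from the stack contents and the type mismatch is correct and in fact spelled out more carefully than in the paper.

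However, there is a gap: the lemma asserts a correct move exists at \emph{any} time $t$, and {\bf Random-deletion} also deletes in two situations your proof never addresses---when the stack is empty (the current close parenthesis is deleted) and when the input is exhausted (the stack top is deleted). In these situations there is no coin toss; the single available move is taken with probability $1$, so if it could be a wrong move, the conclusion ``takes a correct move with probability at least $\frac{1}{2}$'' would fail at that step, and so would the later coupling to the random walk, which relies on forced deletions moving toward the origin. Fortunately your own characterization closes this immediately: when the stack is empty, every index smaller than $j$ already lies in $A_{t-1}$, so any OPT-partner of $\sigma[j]$ has been processed and the forced deletion is correct; symmetrically, when the input is exhausted, every close parenthesis lies in $A_{t-1}$, so deleting the stack top is correct. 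The paper handles these two cases explicitly, and your write-up should as well.
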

\begin{proof}
Suppose the algorithm compares an open parenthesis $\sigma[i]$ with a close parenthesis $\sigma[j]$, $i < j$ at time $t$, and they do not match. If possible, suppose that there is no correct move.

Since {\em Random-deletion} is stack-based, $A_{t}$ contains all indices in $[i,j]$. It may also contain intervals of indices $[i_1,j_1], [i_2,j_2],...$ because there can be multiple blocks. It must hold $[1,j-1] \setminus A_{t}$ does not contain any close parenthesis. Now for both the two possible deletions to be wrong,
the optimal algorithm must match $\sigma[i]$ with some $\sigma[j']$, $j' > j$, and also match $\sigma[j]$ with $\sigma[i']$, $i' < i, i' \notin A_{t}$. But, this is not possible due to the property
of well-formedness.

Now consider the case that at time $t$, the stack is empty and the current symbol in the string is $\sigma[j]$. In that case {\em Random-deletion} deletes $\sigma[j]$. Clearly, $A_{t}=[1,j]$.
For this to be a wrong move, the optimal algorithm should match $\sigma[j]$ with $\sigma[i]$, $i < 1$ which is not possible. Hence, in this case the move is correct.

Now consider the case that at time $t$, the input is exhausted and the algorithm considers $\sigma[i]$ from the stack top. In that case {\em Random-deletion} deletes $\sigma[i]$. Clearly
$A_{t}$ contains indices of all close parenthesis. For this to be a wrong move, the optimal algorithm should match $\sigma[i]$ with $\sigma[j]$, $j \notin A_{t}$ which is not possible.
Hence, in this case the move is correct too.
\end{proof}

\begin{lemma}
\label{lemma:diff}
If at time $t$ (up to and including time $t$), the number of indices in $A_{t}^{OPT}$ that the optimal algorithm deletes is $d_{t}$ and the number of correct and wrong
moves are respectively $c_{t}$ and $w_{t}$ then $|A_{t}^{OPT}\setminus A_{t}| \leq d_{t}+w_{t}-c_{t}$.
\end{lemma}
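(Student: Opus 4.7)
My plan is to induct on $t$ using the potential $\phi(t)=|A_t^{OPT}\setminus A_t|+c_t-d_t-w_t$ and show $\phi(t)\le 0$ for every $t$; the lemma then follows by rearranging. The base case $\phi(0)=0$ is immediate since $A_0=A_0^{OPT}=\emptyset$ and $c_0=d_0=w_0=0$. For the inductive step it suffices to verify $\phi(t)\le\phi(t-1)$ separately in each kind of action taken at time $t$.

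First I handle the \emph{deletion} steps. Say the algorithm deletes some index $k$, so $A_t=A_{t-1}\cup\{k\}$. Since $|A_t|-|A_{t-1}|=1$ and at most two new elements can enter $A_t^{OPT}$ (namely $k$ itself and, if OPT matches $k$, its OPT-partner), the change $\Delta f:=|A_t^{OPT}\setminus A_t|-|A_{t-1}^{OPT}\setminus A_{t-1}|$ lies in $\{-1,0,+1\}$. A case split on (a) whether $k\in A_{t-1}^{OPT}$ and (b) OPT's fate of $k$ yields three realizable outcomes: $k\in A_{t-1}^{OPT}$ (forcing $k$ to be OPT-matched with something already in $A_{t-1}$, giving $\Delta f=-1$, $\Delta d=0$, a correct move); $k\notin A_{t-1}^{OPT}$ with OPT deleting $k$ ($\Delta f=0$, $\Delta d=1$, a correct move); and $k\notin A_{t-1}^{OPT}$ with OPT matching $k$ to some $k'\notin A_{t-1}$ ($\Delta f=+1$, $\Delta d=0$, a wrong move). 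Plugging each into $\phi$ gives $\Delta\phi=0$.

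The more delicate part is the \emph{match} steps, where the algorithm matches $(i,j)$ with $i<j$, and $\Delta c=\Delta w=0$, so I need $\Delta f\le\Delta d$. I would enumerate OPT's fate of $i$ and $j$: matched to each other, one or both deleted, or each matched with external partners $i'$, $j'$. The first three possibilities give $\Delta f\le\Delta d$ by direct computation, noting that every close parenthesis at position less than $j$ and every already-processed open parenthesis lies in $A_{t-1}$. The subtle case is when both $i$ and $j$ have external partners; I split further on whether $i'$ and $j'$ lie in $A_{t-1}$. The one problematic configuration is $i'\notin A_{t-1}$ and $j'\notin A_{t-1}$, which would give $\Delta f=+2$ while $\Delta d=0$; but this forces $i'>j$ and $j'$ to sit below $i$ on the algorithm's stack (hence $j'<i$ as indices), and then in OPT's own stack $i$ was pushed on top of $j'$ and must remain until its pop at position $i'>j$, so $j'$ cannot be at the OPT-stack top at position $j$---contradicting the stack-based form of OPT guaranteed by Lemma~\ref{lemma:opt-stack}. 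A symmetric argument eliminates the ordering $i<j'<i'<j$; the surviving orderings $j'<i<i'<j$, $i<i'<j'<j$, and $i<j'<j<i'$ each yield $\Delta\phi\le 0$ by direct computation of $\Delta f$ and $\Delta d$.

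The main obstacle is this match-step case analysis: correctly enumerating the OPT-orderings of $\{i,i',j,j'\}$ and invoking the stack discipline of OPT to rule out the pathological configurations that would otherwise let $f$ outrun $d$. Once this is established, $\Delta\phi\le 0$ at every time step, so $\phi(t)\le\phi(0)=0$, yielding $|A_t^{OPT}\setminus A_t|\le d_t+w_t-c_t$ as required.
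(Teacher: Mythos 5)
Your proof is correct and follows essentially the same route as the paper's: induction on $t$ (your potential $\phi$ is just the paper's inequality rearranged) with a case analysis on the algorithm's action and on how OPT disposes of the indices involved, using the stack-based/non-crossing structure of the optimal algorithm to rule out the configurations in which $|A_t^{OPT}\setminus A_t|$ could grow by two in a single step. Your unified trichotomy for deletion steps (based on whether the deleted index lies in $A_{t-1}^{OPT}$ and on OPT's fate of it, which also subsumes the empty-stack and end-of-input deletions) is a slightly cleaner packaging of the paper's ordering-based subcases, but the substance of the argument is the same.
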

\begin{proof}
Clearly, the lemma holds at time $t=0$, since $d_{0}=c_{0}=w_{0}=0$ and $A_{0}=A_{0}^{OPT}=\emptyset$.

Suppose, the lemma holds up to and including time $t-1$. We now consider time $t$. Let at time $t$, the algorithm compares an open parenthesis $\sigma[i]$ with a close parenthesis $\sigma[j]$, $i < j$. The following cases
need to be considered.

\noindent Case 1. $\sigma[i]$ is matched with $\sigma[j]$. $c_{t}=c_{t-1}, w_{t}=w_{t-1}$.

Subcase 1. The optimal algorithm also matches $\sigma[i]$ with $\sigma[j]$, hence $d_{t}=d_{t-1}$. Now $A_{t}^{OPT}=A_{t-1}^{OPT} \uplus \{i,j\}$ and
$A_{t}=A_{t-1}\uplus \{i,j\}$. So we have $|A_{t}^{OPT}\setminus A_{t}|=|A_{t-1}^{OPT}\setminus A_{t-1}|\leq d_{t-1}+w_{t-1}-c_{t-1} (\mbox{by induction hypothesis})=d_{t}+w_{t}-c_{t}$.

Subcase 2. The optimal algorithm does not match $\sigma[i]$ with $\sigma[j]$. It is not possible that the optimal algorithm matches $\sigma[i]$ with $\sigma[j']$, $j' > j$ and also
matches $\sigma[j]$ with $\sigma[i']$, $i' < j$ simultaneously due to the property of well-formedness.
\begin{itemize}
\item  First consider that $\sigma[i]$ and $\sigma[j]$ are both matched with some symbols in the optimal algorithm. So $d_{t}=d_{t-1}$. Let $\sigma[i]$ be matched with $\sigma[j']$ and $\sigma[j]$ be matched with $\sigma[i']$.
Then either $(a) i < j' < i' < j$, or $(b) i < i' < j < j'$, or $(c) i' < i < j' < j$.
\begin{itemize}
\item For $(a)$ $i,j \in A_{t-1}^{OPT}$ and $A_{t}^{OPT}=A_{t-1}^{OPT}$. Also, $A_{t}=A_{t-1} + \{i,j\}$. Hence $|A_{t}^{OPT}\setminus A_{t}|=|A_{t-1}^{OPT}\setminus A_{t-1}|-2 \leq d_{t-1}+w_{t-1}-c_{t-1}-2=d_{t}+w_{t}-c_{t}-2$.
\item For $(b)$, $j \in A_{t-1}^{OPT}$ and $A_{t}^{OPT}=A_{t-1}^{OPT}\uplus \{i,j'\}$. Also, $A_{t}=A_{t-1} \uplus \{i,j\}$. Hence
$|A_{t}^{OPT}\setminus A_{t}|=|A_{t-1}^{OPT}\setminus A_{t-1}|$ (decreases $1$ due to $j$ and increases $1$ due to $j'$) and thus $|A_{t}^{OPT}\setminus A_{t}|=|A_{t-1}^{OPT}\setminus A_{t-1}|\leq d_{t-1}+w_{t-1}-c_{t-1} (\mbox{by induction hypothesis})=d_{t}+w_{t}-c_{t}$.
\item For $(c)$, $i \in A_{t-1}^{OPT}$ and $A_{t}^{OPT}=A_{t-1}^{OPT}\uplus \{i',j\}$. Also, $A_{t}=A_{t-1} \uplus \{i,j\}$. Hence
$|A_{t}^{OPT}\setminus A_{t}|=|A_{t-1}^{OPT}\setminus A_{t-1}|$ (decreases $1$ due to $i$ and increases $1$ due to $i'$) and thus $|A_{t}^{OPT}\setminus A_{t}|=|A_{t-1}^{OPT}\setminus A_{t-1}|\leq d_{t-1}+w_{t-1}-c_{t-1} (\mbox{by induction hypothesis})=d_{t}+w_{t}-c_{t}$.
\end{itemize}

\item Now consider that one of $\sigma[i]$ or $\sigma[j]$ gets deleted. Assume, w.l.o.g, that $\sigma[i]$ is deleted (exactly similar analysis when only $\sigma[j]$ is deleted). So $d_{t}=d_{t-1}+1$. Then if $\sigma[j]$ is matched with $\sigma[i']$ either $(a') i < i' < j$ or $(b') i' < i < j$.
    \begin{itemize}
    \item For $(a')$, $j \in A_{t-1}^{OPT}$, $A_{t}^{OPT}=A_{t-1}^{OPT} \uplus \{i\}$. Also, $A_{t}=A_{t-1} \uplus \{i,j\}$. Hence
$|A_{t}^{OPT}\setminus A_{t}|=|A_{t-1}^{OPT}\setminus A_{t-1}|-1$ (decreases $1$ due to $j$) and thus $|A_{t}^{OPT}\setminus A_{t}|=|A_{t-1}^{OPT}\setminus A_{t-1}|-1\leq d_{t-1}+w_{t-1}-c_{t-1}-1 (\mbox{by induction hypothesis})< d_{t}+w_{t}-c_{t}$.
    \item For $(b')$, $A_{t}^{OPT}=A_{t-1}^{OPT} \uplus \{i,j,i'\}$. Also, $A_{t}=A_{t-1} \uplus \{i,j\}$. Hence
$|A_{t}^{OPT}\setminus A_{t}|=|A_{t-1}^{OPT}\setminus A_{t-1}|+1$ (increases $1$ due to $i'$) and thus $|A_{t}^{OPT}\setminus A_{t}|=|A_{t-1}^{OPT}\setminus A_{t-1}|+1\leq d_{t-1}+w_{t-1}-c_{t-1}+1 (\mbox{by induction hypothesis})= d_{t}+w_{t}-c_{t}$.
    \end{itemize}
\item Now consider that both $\sigma[i]$ and $\sigma[j]$ are deleted. $d_{t}=d_{t-1}+2$. $A_{t}^{OPT}=A_{t-1}^{OPT} \uplus \{i,j\}$. Also, $A_{t}=A_{t-1} \uplus \{i,j\}$. Hence, $|A_{t}^{OPT}\setminus A_{t}|=|A_{t-1}^{OPT}\setminus A_{t-1}| \leq d_{t-1}+w_{t-1}-c_{t-1}< d_{t}+w_{t}-c_{t}$.
\end{itemize}

\noindent Case 2. $\sigma[i]$ is not matched with $\sigma[j]$ and $\sigma[i]$ is deleted.
\begin{itemize}
\item First consider that in the optimal algorithm, $\sigma[i]$ and $\sigma[j]$ are both matched with some symbols. So $d_{t}=d_{t-1}$. Let $\sigma[i]$ be matched with $\sigma[j']$ and $\sigma[j]$ be matched with $\sigma[i']$.
Then either $(a) i < j' < i' < j$, or $(b) i < i' < j < j'$, or $(c) i' < i < j' < j$.
\begin{itemize}
\item For $(a)$ $i,j \in A_{t-1}^{OPT}$ and $A_{t}^{OPT}=A_{t-1}^{OPT}$. Also, $A_{t}=A_{t-1} \uplus \{i\}$. Hence $|A_{t}^{OPT}\setminus A_{t}|=|A_{t-1}^{OPT}\setminus A_{t-1}|-1 \leq d_{t-1}+w_{t-1}-c_{t-1}-1=d_{t}+w_{t}-c_{t}$, since $c_{t}=c_{t-1}+1$.
\item For $(b)$, $j \in A_{t-1}^{OPT}$ and $A_{t}^{OPT}=A_{t-1}^{OPT}\uplus \{i,j'\}$. Also, $A_{t}=A_{t-1} \uplus \{i\}$. Hence
$|A_{t}^{OPT}\setminus A_{t}|=|A_{t-1}^{OPT}\setminus A_{t-1}|+1$ and thus $|A_{t}^{OPT}\setminus A_{t}|=|A_{t-1}^{OPT}\setminus A_{t-1}|+1\leq d_{t-1}+w_{t-1}-c_{t-1} +1(\mbox{by induction hypothesis})=d_{t}+w_{t}-c_{t}$, since $w_{t}=w_{t-1}+1$.
\item For $(c)$, $i \in A_{t-1}^{OPT}$ and $A_{t}^{OPT}=A_{t-1}^{OPT}$. Also, $A_{t}=A_{t-1} \uplus \{i\}$. Hence
$|A_{t}^{OPT}\setminus A_{t}|=|A_{t-1}^{OPT}\setminus A_{t-1}|-1$ (decreases $1$ due to $i$) and thus $|A_{t}^{OPT}\setminus A_{t}|=|A_{t-1}^{OPT}\setminus A_{t-1}|-1\leq d_{t-1}+w_{t-1}-c_{t-1}-1 (\mbox{by induction hypothesis})=d_{t}+w_{t}-c_{t}$, since $c_{t}=c_{t-1}+1$.
\end{itemize}
\item Now consider that one of $\sigma[i]$ or $\sigma[j]$ gets deleted by the optimal algorithm. Assume, first, that $\sigma[i]$ is deleted. So $d_{t}=d_{t-1}+1$. Then if $\sigma[j]$ is matched with $\sigma[i']$ either $(a') i < i' < j$ or $(b') i' < i < j$.
    \begin{itemize}
    \item For $(a')$, $j \in A_{t-1}^{OPT}$, $A_{t}^{OPT}=A_{t-1}^{OPT} \uplus \{i\}$. Also, $A_{t}=A_{t-1} \uplus \{i\}$. Hence
$|A_{t}^{OPT}\setminus A_{t}|=|A_{t-1}^{OPT}\setminus A_{t-1}|$ and thus $|A_{t}^{OPT}\setminus A_{t}|=|A_{t-1}^{OPT}\setminus A_{t-1}|\leq d_{t-1}+w_{t-1}-c_{t-1} (\mbox{by induction hypothesis})< d_{t}+w_{t}-c_{t}$, since $d_{t}=d_{t-1}+1$, $c_{t}=c_{t-1}+1$ and $w_t=w_{t-1}$.
    \item For $(b')$, $A_{t}^{OPT}=A_{t-1}^{OPT} \uplus \{i\}$. Also, $A_{t}=A_{t-1} \uplus \{i\}$. Hence
Hence $|A_{t}^{OPT}\setminus A_{t}|=|A_{t-1}^{OPT}\setminus A_{t-1}|$ and thus $|A_{t}^{OPT}\setminus A_{t}|=|A_{t-1}^{OPT}\setminus A_{t-1}|\leq d_{t-1}+w_{t-1}-c_{t-1} (\mbox{by induction hypothesis})< d_{t}+w_{t}-c_{t}$, since $d_{t}=d_{t-1}+1$, $c_{t}=c_{t-1}+1$ and $w_t=w_{t-1}$.
    \end{itemize}
    Assume now that the optimal algorithm only deletes $\sigma[j]$. So, $d_{t}=d_{t-1}+1$, because {\em Random-deletion} takes action on $\sigma[i]$ and the optimal algorithm matches it. If $\sigma[i]$ is matched with $\sigma[j']$ either $(a'') i < j' < j$, or $(b'') i < j < j'$.
    \begin{itemize}
    \item For $(a'')$, $i \in A_{t-1}^{OPT}$, $A_{t}^{OPT}=A_{t-1}^{OPT}$ and $A_{t}=A_{t-1} \uplus \{i\}$.Hence $|A_{t}^{OPT}\setminus A_{t}|=|A_{t-1}^{OPT}\setminus A_{t-1}|-1$ and thus $|A_{t}^{OPT}\setminus A_{t}|=|A_{t-1}^{OPT}\setminus A_{t-1}|-1\leq d_{t-1}+w_{t-1}-c_{t-1} -1(\mbox{by induction hypothesis})< d_{t}+w_{t}-c_{t}$, since $c_{t}=c_{t-1}+1$ and $w_{t}=w_{t-1}$.
    \item For $(b'')$, $A_{t}^{OPT}=A_{t-1}^{OPT}\uplus \{i,j'\}$ and $A_{t}=A_{t-1} \uplus \{i\}$.Hence $|A_{t}^{OPT}\setminus A_{t}|=|A_{t-1}^{OPT}\setminus A_{t-1}|+1$ and thus $|A_{t}^{OPT}\setminus A_{t}|=|A_{t-1}^{OPT}\setminus A_{t-1}|+1\leq d_{t-1}+w_{t-1}-c_{t-1}+1(\mbox{by induction hypothesis})< d_{t}+w_{t}-c_{t}$, since $w_{t}=w_{t-1}+1$ and $c_{t}=c_{t-1}$.
    \end{itemize}
\item Now consider that both $\sigma[i]$ and $\sigma[j]$ get deleted by the optimal algorithm. $d_{t}=d_{t-1}+1$, since {\em Random-deletion} takes only action on $\sigma[i]$. $A_{t}^{OPT}=A_{t-1}^{OPT} \uplus \{i\}$. Also, $A_{t}=A_{t-1} \uplus \{i\}$. Hence, $|A_{t}^{OPT}\setminus A_{t}|=|A_{t-1}^{OPT}\setminus A_{t-1}| \leq d_{t-1}+w_{t-1}-c_{t-1}=d_{t}+w_{t}-c_{t}$, since $d_{t}=d_{t-1}+1$ and $c_{t}=c_{t-1}+1$.
\end{itemize}
\noindent Case 3. $\sigma[i]$ is not matched with $\sigma[j]$ and $\sigma[j]$ is deleted.

Same as Case 2.

\noindent Case 4. Now consider the case that at time $t$, the stack is empty and the current symbol in the string is $\sigma[j]$. In that case {\em Random-deletion} deletes $\sigma[j]$ and the move is correct.
So $c_t=c_{t-1}+1$ and $w_{t}=w_{t-1}$. We have $A_{t}=A_{t-1} \uplus \{j\}$. If the optimal algorithm also deletes $\sigma[j]$ then $A_{t}^{OPT}=A_{t-1}^{OPT}\uplus \{j\}$ and $d_{t}=d_{t+1}+1$. Hence
 $|A_{t}^{OPT}\setminus A_{t}|=|A_{t-1}^{OPT}\setminus A_{t-1}| \leq d_{t-1}+w_{t-1}-c_{t-1}(\mbox{by induction hypothesis})= d_{t}+w_{t}-c_{t}$, since $d_{t}=d_{t-1}+1, c_{t}=c_{t-1}+1$. On the other hand,
 if the optimal algorithm matches $\sigma[j]$ with some $\sigma[i']$, $i' < j$, then $j \in A_{t-1}^{OPT}$ and $A_{t}^{OPT}=A_{t-1}^{OPT}$.  Hence
 $|A_{t}^{OPT}\setminus A_{t}|=|A_{t-1}^{OPT}\setminus A_{t-1}|-1 \leq d_{t-1}+w_{t-1}-c_{t-1}-1(\mbox{by induction hypothesis})= d_{t}+w_{t}-c_{t}-2$, since $d_{t}=d_{t-1}, c_{t}=c_{t-1}+1$.

\noindent Case 5. Now consider the case that at time $t$, the input is exhausted and the algorithm considers $\sigma[i]$ from the stack top. In that case {\em Random-deletion} deletes $\sigma[i]$. Then, again by
similar analysis as in the previous case, the claim is established.
\end{proof}
Let $S_{t}$ denote the string $\sigma$ at time $t$ after removing all the symbols that were deleted by {\bf Random-deletion} up to and including time $t$.
\begin{lemma}
\label{lemma:a}
Consider $d$ to be the optimal edit distance. If at time $t$ (up to and including $t$), the number of indices in $A_{t}^{OPT}$ that the optimal algorithm deletes be $d_{t}$ and $|A_{t}^{OPT}\setminus A_{t}| =r_{t}$, at most $r_{t}+(d-d_{t})$ edits is sufficient to convert $S_t$ into a well-balanced string.
\end{lemma}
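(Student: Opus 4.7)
My approach is to exhibit a specific set of $r_t+(d-d_t)$ deletions on $S_t$ that produces a well-balanced string, using the fixed optimal stack-based algorithm as a witness. Write $D$ for the set of indices deleted by the optimal algorithm, so $|D|=d$. A first bookkeeping step establishes the partition of $D$: by the definition of $A_t^{OPT}$, every index of $A_t^{OPT}\setminus A_t$ is matched by the optimal algorithm with a partner inside $A_t$, hence is not in $D$; thus $D\cap A_t^{OPT}=D\cap A_t$ has size $d_t$ (by assumption), and the remaining $d-d_t$ optimal deletions must lie entirely in $[n]\setminus A_t^{OPT}$.

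I would then delete from $S_t$ the $r_t$ positions of $A_t^{OPT}\setminus A_t$ together with the $d-d_t$ positions of $D\setminus A_t^{OPT}$. All these indices lie outside $A_t$, so they were not already removed by \textbf{Random-deletion} and genuinely appear in $S_t$, and their total count is exactly $r_t+(d-d_t)$. After these deletions, the symbols left in $S_t$ fall into two groups: (a) indices in $A_t$ that were paired with each other by \textbf{Random-deletion}'s matches, and (b) indices in $([n]\setminus A_t^{OPT})\setminus D$ that are paired with each other by the optimal matching. The latter claim uses a small observation: if $i\notin A_t^{OPT}$ is optimally matched to $j$, then $j\notin A_t$, and so $j\notin A_t^{OPT}$ as well, for otherwise $i$ would itself belong to $A_t^{OPT}$.

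The remaining task, and the one step that is not pure accounting, is to verify that the union of these two matchings is a valid well-balanced parenthesization. Within group (a) the matches are stack-nested by construction, and within group (b) they form a subset of the globally non-crossing optimal matching, hence also non-crossing. To rule out a cross between a pair $(a,b)\subset A_t$ matched by \textbf{Random-deletion} and a pair $(a',b')\subset [n]\setminus A_t^{OPT}$ matched optimally, I would prove the key claim that every index strictly between $a$ and $b$ belongs to $A_t$. Indeed, at the moment \textbf{Random-deletion} pops $a$ to match it with $b$, every open parenthesis pushed after $a$ must already have been popped (by a match, or by a stack-top deletion forced by a later mismatch), and every close parenthesis scanned between the pushing of $a$ and the processing of $b$ was matched or deleted on first inspection; either action places its index in $A_t$. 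Hence neither $a'$ nor $b'$ can lie in the open interval $(a,b)$, forcing $[a,b]$ and $[a',b']$ to be either disjoint or nested, and the well-balancedness of the combined matching follows. This stack-based no-crossing step is where the main obstacle sits; everything else amounts to tracking the three disjoint classes $A_t$, $A_t^{OPT}\setminus A_t$, and $[n]\setminus A_t^{OPT}$ and where the optimal deletions land within them.
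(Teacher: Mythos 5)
Your proposal is correct and follows essentially the same route as the paper: delete the $r_t$ indices of $A_t^{OPT}\setminus A_t$ from $S_t$ and then apply the optimal algorithm's remaining $d-d_t$ deletions, which lie entirely outside $A_t^{OPT}$. The only difference is that you spell out the compatibility step the paper leaves implicit (that every index strictly between a pair matched by \textbf{Random-deletion} already lies in $A_t$, so the two matchings cannot cross), which is a correct and welcome elaboration rather than a new approach.
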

\begin{proof}
Since $|A_{t}^{OPT}\setminus A_{t}| =r_{t}$, there exists exactly $r_{t}$ indices in $S_{t}$ such that if those indices are deleted, the resultant string is same as what the optimal algorithm obtains after processing the symbols in $A_{t}^{OPT}$. For the symbols in remaining $\{1,2,...,n\} \setminus A_{t}^{OPT}$, the optimal algorithm does at most $d-d_{t}$ edits. Therefore a total of $r_{t}+(d-d_{t})$ edits is sufficient to convert $S_t$ into a well-balanced string.
\end{proof}

\begin{lemma}
\label{lemma:b}
The edit distance between the final string $S_{n}$ and $\sigma$ is at most $d+2w_{n}$.
\end{lemma}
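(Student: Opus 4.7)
The plan is to combine Lemma \ref{lemma:diff} with the observation that, at the end of the algorithm's execution, every index of $\sigma$ has been either matched or deleted by \textsc{Random-deletion}. I will let $T$ denote the final time step (so $A_T = \{1,2,\ldots,n\}$), and note that the edit distance between $S_T$ and $\sigma$ is exactly the number of deletions \textsc{Random-deletion} performs, since $S_T$ is obtained from $\sigma$ solely by deleting symbols. By the definition of correct and wrong moves (each of which corresponds to one deletion), the total number of deletions equals $c_T + w_T$. So the lemma reduces to showing $c_T + w_T \leq d + 2w_T$, equivalently $c_T \leq d + w_T$.

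To establish this, I would invoke Lemma \ref{lemma:diff} at $t = T$: it gives $|A_T^{OPT} \setminus A_T| \leq d_T + w_T - c_T$. Since $A_T = \{1,\ldots,n\}$ and $A_T^{OPT} \supseteq A_T$, we have $A_T^{OPT} = \{1,\ldots,n\}$ as well, so $|A_T^{OPT} \setminus A_T| = 0$. Moreover, because $A_T^{OPT}$ now encompasses every index, $d_T$ equals the total number of deletions made by the (fixed) optimal stack-based algorithm, which is exactly $d$ (using Lemma \ref{lemma:deletion} and Lemma \ref{lemma:opt-stack} to justify taking the optimum to be deletion-only and stack-based). Rearranging yields $c_T \leq d_T + w_T = d + w_T$, and hence the total number of edits performed by \textsc{Random-deletion} is $c_T + w_T \leq d + 2w_T = d + 2w_n$.

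As a sanity check that $S_n$ is actually a valid target (rather than, say, an intermediate configuration), I would also note that Lemma \ref{lemma:a} applied at $t = T$ gives $r_T + (d - d_T) = 0 + 0 = 0$, meaning no further edits are needed to make $S_T$ well-formed; this confirms that $S_n$ is indeed in \dy$(s)$ and therefore a legal endpoint for the edit-distance bound. I do not anticipate any real obstacle: the heavy lifting was already done in Lemma \ref{lemma:diff}, and the present statement is essentially its ``terminal'' specialization, reading off the total deletion count from the invariant $|A_t^{OPT} \setminus A_t| \leq d_t + w_t - c_t$ at the last time step. The only point requiring a bit of care is the clean accounting that every step performed by \textsc{Random-deletion} is either a match (free) or a deletion classified as correct or wrong, so that the bound on $c_T$ translates directly into a bound on the algorithm's edit cost.
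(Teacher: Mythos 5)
Your proof is correct and follows essentially the paper's own route: both arguments rest on the invariant of Lemma \ref{lemma:diff} together with the accounting that every deletion performed by {\em Random-deletion} is classified as a correct or a wrong move. The only cosmetic difference is that the paper bounds performed-plus-remaining edits at a generic time $t$ (via Lemma \ref{lemma:a}), whereas you evaluate the invariant at the terminal time, where $A_{n}^{OPT}\setminus A_{n}=\emptyset$ forces $c_n \leq d_n + w_n \leq d + w_n$ --- a clean specialization of the same idea.
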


\begin{proof}
Consider any time $t \geq 0$, if at $t$, the number of deletions by the optimal algorithm in $A_{t}^{OPT}$ is $d_{t}$, the number of correct moves and wrong moves are respectively $c_{t}$ and $w_{t}$, then we have
$|A_{t}^{OPT} \setminus A_{t}| \leq d_{t}+w_{t}-c_{t}$. The number of edits that have been performed to get $S_t$ from $S_0$ is $c_{t}+w_{t}$. Denote this by $E(0,t)$. The number of edits that are
required to transform $S_t$ to well-formed is at most $(d-d_t)+d_{t}+w_{t}-c_{t}=d+w_{t}-c_{t}$ (by Lemma \ref{lemma:a}). Denote it by $E'(t,n)$. Hence the minimum total number of edits required (including those already performed) considering state at time $t$ is $E(0,t)+E'(t,n)=d+2w_{t}$. Since this holds for all time $t$, the lemma is established.
\end{proof}

In order to bound the edit distance, we need a bound on $w_{n}$. To do so we map the process of deletions by {\bf Random-deletion} to a random walk.

\subsection{Mapping into Random Walk}
\label{subsec:map}
We consider the following one dimensional random walk. The random walk starts at coordinate $d$, at each step, it moves one step right ($+1$) with probability $\frac{1}{2}$ and moves one step left ($-1$) with probability $\frac{1}{2}$. We count the number of steps required by the random walk to hit the origin.

We now associate a modified random walk with the deletions performed by {\em Random-deletion} as follows. Every time {\em Random-deletion} needs to take a move (performs one deletion), we consider one step of the modified random walk. If {\em Random-deletion} takes a wrong move, we let this random walk make a right (away from origin) step. On the other hand if {\em Random-deletion} takes a correct move,
we let this random walk take a left step (towards origin move). If the random walk takes $W$ right steps, then  {\em Random-deletion} also  makes $W$ wrong moves. If the random walk takes $W$ right steps before hitting the origin, then it takes in total a $d+2W$ steps, and {\em Random-deletion} also deletes $d+2W$ times. Therefore, hitting time of this modified random walk starting from $d$ characterizes the number of edit operations performed by {\em Random-deletion}. In this random walk, left steps (towards origin) are taken with probability $ \geq \frac{1}{2}$ (sometimes with probability $1$). Therefore, hitting time of this modified random walk is always less than the hitting time of an one-dimensional random walk starting at $d$ and taking right and left step independently with equal probability.

We therefore calculate the probability of a one-dimensional random walk taking right or left steps with equal probability  to have a hitting time $D$ starting from $d$. The computed probability serves as a lower bound on the probability that {\em Random-deletion} takes $D$ edit operations to transform $\sigma$ to well-formed.


The one dimensional random walk is related to {\sc Gambler's Ruin} problem. In gambler's ruin, a gambler enters a casino with $\$d$ and starts playing a game where he wins with probability $1/2$ and loses with probability $1/2$ independently.
The gambler plays the game repeatedly betting $\$1$ in each round. He leaves if he runs out of money or his total fortune reaches $\$N$.
In gambler's ruin problem one is interested in the hitting probability of the absorbing states. For us, we are interested in the probability that
gambler gets ruined. We can set $N=n+d$ because that implies the random walk needs to take $n$ steps at the least to reach fortune and $n$ is a trivial upper bound on the edit distance.

Let $\mathcal{P}_{d}$ denote the probability that gambler is ruined or his fortune reaches $\$N$ on the condition that his current fortune is $\$d$ and also let $\mathcal{E}_{d}$ denote the expected number of steps needed for the gambler to be ruined or get $\$N$ starting from $\$d$.
Then it is easy to see using Markov property that the distribution $\mathcal{P}_{d}$ satisfies the following recursion
$\mathcal{P}_{d}=\frac{1}{2}\mathcal{P}_{d+1}+\frac{1}{2}\mathcal{P}_{d-1}.$ It follows from the above that the expectation $\mathcal{E}_{d}$ satisfies
$\mathcal{E}_{d}=\frac{1}{2}\left(\mathcal{E}_{d+1}+1\right)+\frac{1}{2}\left(\mathcal{E}_{d-1}+1\right)=\frac{1}{2}\left(\mathcal{E}_{d+1}+\mathcal{E}_{d-1}\right)+1.$
Solving the recursion one gets $\mathcal{E}_{d}=d(N-d)$, which is useless in our case. But note that
even though $\mathcal{E}_{1}=N-1$, with probability $\frac{1}{2}$, the gambler is ruined in just $1$ step. This
indicates lack of concentration around the expectation. Indeed, the distribution
$\mathcal{P}_{d}$ is heavy-tailed which can be exploited to bound the hitting time.

We now calculate the probability that the gambler is ruined in $D$ steps precisely.
Let $\mathcal{P}_{d}$ denote the law of a random walk starting in $d \geq 0$, let $\{Y_{i}\}_{0}^{\infty}$ be the i.i.d. steps of the
random walk, let $S_D=d+Y_1+Y_2+...+Y_{D}$ be the position of random walk starting in position $d$ after $D$ steps, and let
$T_{0}=\inf{D : S_D=0}$ denotes the walks first hitting time of the origin. Clearly $T_0=0$ for $\mathcal{P}_{0}$. Then we can show
\begin{lemma}
\label{lemma:1}
For the {\sc Gambler's Ruin} problem $\mathcal{P}_{d}(T_0=D)=\frac{d}{D}\left (
   \begin{array}{c}
   D \\
   \frac{D-d}{2}
   \end{array}
   \right )\frac{1}{2^D}$.
\end{lemma}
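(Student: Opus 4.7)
I would treat this as a standard application of the reflection principle; indeed the lemma is essentially the hitting-time theorem for the simple random walk. A path realising $\{T_0 = D\}$ starts at position $d$, takes $D$ steps of $\pm 1$, has $S_i > 0$ for $i = 1, \dots, D-1$, and $S_D = 0$; each such path has probability exactly $2^{-D}$, so the lemma reduces to counting these paths and showing the count equals $\frac{d}{D}\binom{D}{(D-d)/2}$. A parity check shows that the count (and the binomial) is zero unless $D \geq d$ and $D \equiv d \pmod 2$, so I assume these conditions below. Any such path uses exactly $(D+d)/2$ down-steps and $(D-d)/2$ up-steps.

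Since $S_D = 0$ and $S_{D-1} > 0$, the final step must be $-1$ and $S_{D-1} = 1$. Thus the count equals the number of lattice paths from $d$ to $1$ in $D-1$ steps that remain strictly positive throughout. To get this I would apply the reflection principle: the set of $d \to 1$ paths of length $D-1$ that do touch the axis is in bijection with the set of all (unrestricted) $-d \to 1$ paths of length $D-1$, where the bijection reflects the prefix of the path up to its first visit to $0$. Counting each side by step composition gives
\[
\binom{D-1}{(D-d)/2} \;-\; \binom{D-1}{(D-d)/2 - 1}
\]
strictly-positive paths.

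Finally, writing $k = (D-d)/2$ and factoring out $(D-1)!/(k!(D-k)!)$, the difference of binomials simplifies to
\[
\frac{(D-1)!\,(D - 2k)}{k!(D-k)!} \;=\; \frac{D-2k}{D}\binom{D}{k} \;=\; \frac{d}{D}\binom{D}{(D-d)/2},
\]
and multiplying by the per-path weight $2^{-D}$ yields the stated formula. The only real content of the argument is the reflection bijection (pairing a ``bad'' $d \to 1$ path with an unrestricted $-d \to 1$ path by reflecting the prefix up to the first hitting time of $0$); the parity bookkeeping and the binomial simplification are mechanical, so I do not anticipate any serious obstacle.
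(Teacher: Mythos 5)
Your argument is correct, and it takes a different route from the paper. The paper's proof only counts the paths with $S_D=0$ (getting $\mathcal{P}_d(S_D=0)=\binom{D}{(D-d)/2}2^{-D}$) and then obtains the factor $d/D$ by invoking the Hitting Time Theorem of van der Hofstad and Keane as a black box, so the "first passage" content of the lemma is outsourced to a citation. You instead prove that content directly for the simple symmetric walk: conditioning on the forced last step $1\to 0$, counting strictly positive paths from $d$ to $1$ in $D-1$ steps by the reflection principle, and simplifying $\binom{D-1}{(D-d)/2}-\binom{D-1}{(D-d)/2-1}=\frac{d}{D}\binom{D}{(D-d)/2}$ — in effect a self-contained ballot-style proof of the hitting-time identity in the only case the paper needs (with the correct parity caveat that both sides vanish unless $D\ge d$ and $D\equiv d\pmod 2$, which the paper leaves implicit). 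What the paper's approach buys is brevity and an identity valid for general step distributions with $Y_i\ge -1$; what yours buys is an elementary, verifiable argument that does not rely on the external theorem. The binomial manipulation and the reflection bijection you describe are both standard and check out, so there is no gap.
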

\begin{proof}
We first calculate $\mathcal{P}_{d}(S_D=0)$. In order for a random walk to be at position $0$ starting at $+d$, there must be
$r=(D-d)/2$ indices $i_1, i_2,..., i_r$ such that $Y_{i_1}=Y_{i_2}=...=Y_{i_r}=+1$. Rest of the  $\frac{D-d}{2}+d=r+d$ steps must be $-1$.
Hence $\mathcal{P}_{d}(S_D=0)=\left (
   \begin{array}{c}
   D \\
   \frac{D-d}{2}
   \end{array}
   \right )\frac{1}{2^D}$, and the lemma follows from the following hitting time theorem.

   \begin{theorem*}[Hitting Time Theorem\cite{DBLP:journals/tamm/HofstadK08}]
For a random walk starting in $D \geq 1$ with i.i.d. steps $\{Y_{i}\}_{0}^{\infty}$ satisfying that $Y_{i} \geq -1$ almost surely, the distribution of
$T_0$ under $\mathcal{P}_{d}$ is given by
\begin{equation}
\label{appeneqn:2}
\mathcal{P}_{d}\left(T_0=D\right)=\frac{d}{D}\mathcal{P}_{d}\left(S_D=0\right).
\end{equation}
\end{theorem*}
\end{proof}
We now calculate the probability that a gambler starting with $\$d$ hits $0$ within $cd$ steps. Our goal will be to
minimize $c$ as much as possible, yet achieving a significant probability of hitting $0$.
\begin{lemma}
\label{lemma:2}
In {\sc Gambler's Ruin}, the gambler starting with $\$d$ hits $0$ within $2d^2$ steps with probability at least $0.194$.
\end{lemma}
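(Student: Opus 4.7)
The plan is to lower-bound the tail sum
$$\mathcal{P}_d(T_0 \leq 2d^2) \;=\; \sum_{D \in \{d,\, d+2,\, \ldots,\, 2d^2\}} \frac{d}{D}\binom{D}{(D-d)/2}\frac{1}{2^D}$$
given explicitly by Lemma~\ref{lemma:1} (only $D$ of the parity of $d$ contribute), using Stirling's formula combined with a Riemann-style integral estimate.

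First I would apply Stirling's formula with standard error control, $n! = \sqrt{2\pi n}\,(n/e)^n\, e^{\theta_n/(12n)}$ with $|\theta_n| \leq 1$, to derive the pointwise lower bound
$$\binom{D}{(D-d)/2}\frac{1}{2^D} \;\geq\; \bigl(1 - O(1/d^2)\bigr)\sqrt{\frac{2}{\pi D}}\, \exp\!\left(-\frac{d^2}{2D}\right),$$
valid for $D \in [d^2/2,\, 2d^2]$, where $(1 - d^2/D^2)^{-1/2} \geq 1$ absorbs the combinatorial prefactor $\sqrt{D/(k(D-k))}$ and $\min(k, D-k) = \Theta(d^2)$ makes the Stirling correction $e^{-1/(12k)-1/(12(D-k))}$ equal to $1 - O(1/d^2)$. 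Thus each summand in the tail sum is at least (essentially) $d\sqrt{2/\pi}\, D^{-3/2} e^{-d^2/(2D)}$.

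Next I would restrict attention to the sub-interval $D \in [d^2/2,\, 2d^2]$. On this interval $e^{-d^2/(2D)} \geq e^{-1}$, and the map $D \mapsto D^{-3/2} e^{-d^2/(2D)}$ is monotone decreasing, since its logarithmic derivative $(d^2 - 3D)/(2D^2)$ is negative for $D > d^2/3$. Because the summand is decreasing and the spacing between consecutive valid $D$ is $2$, the discrete sum lower-bounds $\tfrac{1}{2}$ times the Riemann integral over the same range:
$$\mathcal{P}_d(T_0 \leq 2d^2) \;\geq\; \bigl(1 - O(1/d^2)\bigr) \cdot \frac{d\sqrt{2/\pi}\, e^{-1}}{2}\int_{d^2/2}^{2d^2}\! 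D^{-3/2}\, dD.$$
A direct evaluation gives $\int_{d^2/2}^{2d^2} D^{-3/2}\, dD = \sqrt{2}/d$, and substituting collapses the right-hand side to $\tfrac{1}{e\sqrt{\pi}} \bigl(1 - O(1/d^2)\bigr) \approx 0.2076$, comfortably above $0.194$ for all sufficiently large $d$.

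The main obstacle is controlling the $O(1/d^2)$ Stirling correction tightly enough that the resulting constant stays above $0.194$ uniformly in $d$, not merely asymptotically; this amounts to routine bookkeeping with the explicit $e^{\pm 1/(12n)}$ factors. The handful of very small $d$ where the asymptotic analysis is loose---say $d \leq 4$---can be checked directly against the closed form of Lemma~\ref{lemma:1}, which already gives $\mathcal{P}_1(T_0 = 1) = 1/2$ and $\mathcal{P}_2(T_0 \leq 8) \approx 0.508$, both well above $0.194$.
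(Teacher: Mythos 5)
Your proof is correct and reaches the stated bound, but by a technically different route than the paper, so a comparison is in order. Both arguments start from the hitting-time formula of Lemma \ref{lemma:1} and both reduce to showing that $\mathcal{P}_d(T_0=D)=\Omega(1/d^2)$ pointwise on a window of width $\Theta(d^2)$; the difference is in how that pointwise bound is obtained and how the window is summed. The paper invokes the two-sided entropy estimate for binomial coefficients (Lemma 7, Ch.~10 of MacWilliams--Sloane), Taylor-expands $H_2$ around $1/2$ to get $\mathcal{P}_d(T_0=\alpha d^2)\ge A(\alpha)/d^2$ with $A(\alpha)=\bigl(\alpha\sqrt{2\alpha}\,e^{1/(2\alpha)}\bigr)^{-1}$, and then multiplies the minimum value $A(2)=0.194$ by the $\approx d^2$ terms in $[d^2,2d^2]$; this avoids any Stirling error bookkeeping because the entropy bound is valid exactly for all parameters, but the final counting step is crude (and in fact glosses over the parity constraint that only every other integer in the window is an attainable value of $T_0$). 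You instead use Stirling with explicit error terms to get the Gaussian local form $d\sqrt{2/\pi}\,D^{-3/2}e^{-d^2/(2D)}$, verify monotonicity of the summand for $D>d^2/3$, and compare the parity-$2$ lattice sum over $[d^2/2,2d^2]$ to $\tfrac12\int D^{-3/2}e^{-d^2/(2D)}\,dD$, which correctly accounts for the spacing and, thanks to the wider window, still lands at $1/(e\sqrt{\pi})\approx 0.2076>0.194$. What your route buys is a sharper constant and an honest treatment of parity; what it costs is that the $1-O(1/d^2)$ Stirling corrections must be made uniform in $d$ rather than merely asymptotic, which you correctly flag and defer to explicit $e^{\pm 1/(12n)}$ bookkeeping together with direct verification for small $d$ (your quoted values $\mathcal{P}_1(T_0=1)=1/2$ and $\mathcal{P}_2(T_0\le 8)\approx 0.508$ are accurate). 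With that bookkeeping carried out, your argument is a complete and slightly stronger proof of the lemma.
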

\begin{proof}
From Lemma (\ref{lemma:1})
$\mathcal{P}_{d}(T_0= cd)=  \frac{d}{cd}\left (
   \begin{array}{c}
   cd \\
   \frac{cd-d}{2}
   \end{array}
   \right )\frac{1}{2^{cd}}.$
   We now employ the following inequality to bound $\left (
   \begin{array}{c}
   cd \\
   \frac{cd-d}{2}
   \end{array}
   \right )\frac{1}{2^{cd}}$.
   {\small\begin{lemma*}[Lemma 7, Ch. 10 \cite{macwilliams1977theory}](An estimate for a binomial coefficient.) Suppose $\lambda m$ is an integer where $ 0 < \lambda < 1$. Then
   $$\frac{1}{\sqrt{8m\lambda(1-\lambda)}}2^{mH_{2}(\lambda)}\leq \left (
   \begin{array}{c}
   m \\
   \lambda m
   \end{array}
   \right )\leq \frac{1}{\sqrt{2\pi m\lambda(1-\lambda)}}2^{mH_{2}(\lambda)}$$ where $H_{2}$ is the binary entropy function.
   \end{lemma*}}
   We have $\lambda=\frac{1}{2}\left(1-\frac{1}{c}\right)$ and $\left (
   \begin{array}{c}
   cd \\
   \frac{d(c-1)}{2}
   \end{array}
   \right )\geq \frac{2^{cd H_{2}(\frac{1}{2}\left(1-\frac{1}{c}\right))}}{\sqrt{2 cd \left(1-\frac{1}{c}\right) \left(1+\frac{1}{c}\right)}}=\frac{2^{cd H_{2}(\frac{1}{2}\left(1-\frac{1}{c}\right))}}{\sqrt{2 cd \left(1-\frac{1}{c^2}\right) }}.$

   Therefore,
   $\mathcal{P}_{d}(T_0=cd)=\frac{1}{c} \left (
   \begin{array}{c}
   cd \\
   \frac{d(c-1)}{2}
   \end{array}
   \right )\frac{1}{2^{cd}} \geq \frac{1}{c} \frac{2^{cd (H_{2}(\frac{1}{2}\left(1-\frac{1}{c}\right))-1)}}{\sqrt{2 cd \left(1-\frac{1}{c^2}\right) }}.$
   We now use the Taylor series expansion for $H(x)$ around $\frac{1}{2}$, $1-H(\frac{1}{2}-x)= \frac{2}{\ln{2}}(x^2+O(x^{4})).$
Hence,
\begin{align*}
\mathcal{P}_{d}(T_0=cd)&\ge \frac{1}{c \sqrt{2 cd \left(1-\frac{1}{c^2}\right) }} {2^{-\frac{2}{\ln 2} cd \frac{1}{4c^2}}} =  \frac{1}{cd \sqrt{2 \frac{c}{d} \left(1-\frac{1}{c^2}\right) }} {2^{- \frac{1}{2\frac{c}{d}\ln 2}}} >  \frac{e^{- \frac{d}{2c}}}{c \sqrt{2cd }} = \frac{1}{c \sqrt{2cd }e^{ \frac{d}{2c}}},
\end{align*}
Now set $c=\alpha d$, $\alpha > 0$ to get $\mathcal{P}_{d}(T_0=\alpha d^2) \geq \frac{1}{\alpha d^2 \sqrt{2\alpha }e^{ \frac{1}{2\alpha}}}=\frac{A(\alpha)}{d^2}.$
where $A(\alpha) = \frac{1}{\alpha \sqrt{2 \alpha }e^{ \frac{1}{2\alpha}}}$.
Now $A(\alpha)$ is a decreasing function of $\alpha \geq \frac{1}{3}$ (derivative of $\ln{A(\alpha)}$ is negative for $\alpha \geq \frac{1}{3}$).
Therefore, we have, $
\mathcal{P}_{d}( d^2 \le T_0 \le 2 d^2) \ge d^2\frac{A(2)}{d^2} = A(2)=0.194.
$
\end{proof}

\begin{corollary}
\label{corol:1}
In {\sc Gambler's Ruin}, the gambler starting with $\$d$ hits $0$ within $\frac{1}{\epsilon}\frac{d^2}{\log{d}}$ steps for any constant $\epsilon > 0$ with probability at least $\frac{\sqrt{\epsilon\log{d}}}{d^{\epsilon}}$.
\end{corollary}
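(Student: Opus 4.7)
The plan is to specialize the lower bound derived in the proof of Lemma \ref{lemma:2}, but to choose the scaling $c$ according to $\epsilon$ and $\log d$ rather than as a function of $d$ alone, and then to sum the per-step probabilities over a range, since no single value of $T_0$ has probability close to what we want.

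First, I would start from the intermediate bound established in the proof of Lemma \ref{lemma:2}:
\[
\mathcal{P}_{d}\bigl(T_0=cd\bigr)\;\geq\;\frac{e^{-d/(2c)}}{c\sqrt{2cd}},
\]
valid for $cd$ and $d(c-1)/2$ integers. Reparameterizing by $k=cd$, this reads
\[
\mathcal{P}_{d}(T_0=k)\;\geq\;\frac{d\,e^{-d^2/(2k)}}{\sqrt{2}\,k^{3/2}}
\]
for every $k$ of the same parity as $d$. I would then set $K=\tfrac{1}{\epsilon}\tfrac{d^2}{\log d}$, corresponding to $c=d/(\epsilon\log d)$.

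Next, I would consider the range $k\in[K/2,K]$ and argue that the right-hand side above, viewed as a function of $k$, is monotone increasing on this interval. A direct derivative calculation shows the function $k\mapsto k^{-3/2}e^{-d^2/(2k)}$ is increasing for $k<d^2/3$, and $K=d^2/(\epsilon\log d)<d^2/3$ holds for large enough $d$. Hence the minimum of the lower bound on $[K/2,K]$ is attained at $k=K/2$, where
\[
\mathcal{P}_{d}(T_0=k)\;\geq\;\frac{d\,e^{-d^2/K}}{\sqrt{2}\,(K/2)^{3/2}}
\;=\;\frac{d\cdot d^{-\epsilon}\cdot 2^{3/2}(\epsilon\log d)^{3/2}}{\sqrt{2}\,d^{3}}
\;=\;\frac{2(\epsilon\log d)^{3/2}}{d^{2+\epsilon}}.
\]

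Finally, summing over the $\Theta(K)$ parity-compatible integers in $[K/2,K]$ (there are at least $K/4$ of them), I would conclude
\[
\mathcal{P}_{d}\!\left(T_0\leq\tfrac{1}{\epsilon}\tfrac{d^2}{\log d}\right)
\;\geq\;\frac{K}{4}\cdot\frac{2(\epsilon\log d)^{3/2}}{d^{2+\epsilon}}
\;=\;\frac{d^2/(4\epsilon\log d)\cdot 2(\epsilon\log d)^{3/2}}{d^{2+\epsilon}}
\;=\;\Omega\!\left(\frac{\sqrt{\epsilon\log d}}{d^{\epsilon}}\right),
\]
which is the claimed bound up to an absorbable constant. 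The only real obstacle is the monotonicity/parity bookkeeping needed to turn the pointwise estimate into an estimate on a range; the Taylor expansion of $H_2$ around $1/2$ and the entropy estimate of the binomial coefficient used in Lemma \ref{lemma:2} carry over verbatim and do the heavy lifting.
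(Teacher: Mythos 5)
Your proposal is correct and takes essentially the same route as the paper: the paper likewise specializes the pointwise bound from Lemma \ref{lemma:2} at the scaling $\alpha=\frac{1}{2\epsilon\log d}$ (i.e.\ $k\approx\frac{1}{2\epsilon}\frac{d^2}{\log d}$), invokes monotonicity of $A(\alpha)$ for $\alpha<\frac13$, and sums the per-step bound over $T_0\in\bigl[\frac{1}{2\epsilon}\frac{d^2}{\log d},\frac{1}{\epsilon}\frac{d^2}{\log d}\bigr]$. Your explicit parity bookkeeping merely loses a factor of about $2$ in the constant (a point the paper silently glosses over by counting all integers in the range), which is immaterial to how the corollary is used.
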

\begin{proof}
Let $\epsilon'=2\epsilon$. Set $c=\frac{1}{\epsilon'} \frac{d}{\log{d}}$ in the above proof, that is $\alpha=\frac{1}{\epsilon'\log{d}}$ to get $\mathcal{P}_{d}(T_0=\frac{1}{\epsilon'}\frac{d^2}{\log{d}})\geq \frac{(\epsilon' \log{d})^{3/2}}{d^2 \sqrt{2 }e^{ \frac{\epsilon' \log{d}}{2}}}=\frac{(\epsilon' \log{d})^{3/2}}{d^2 \sqrt{2 }}\frac{1}{d^{\epsilon'/2}}.$ Considering $A(\alpha)$ is an increasing function when $\alpha < \frac{1}{3}$, we get
$$
\mathcal{P}_{d}( \frac{1}{\epsilon'}\frac{d^2}{\log{d}} \le T_0 \le \frac{2}{\epsilon'}\frac{d^2}{\log{d}}) \ge \sqrt{\frac{\epsilon'\log{d}}{2}}\frac{1}{d^{\epsilon'/2}}.
$$
Now putting $\epsilon=\frac{\epsilon'}{2}$, we get the result.
\end{proof}

\begin{theorem}
{\bf Random-deletion} obtains a $2d$-approximation for edit distance with deletions only from strings  to \dy$(s)$ for any $s \geq 2$ in linear time with constant probability.
\end{theorem}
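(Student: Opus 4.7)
The plan is to combine Lemma~\ref{lemma:b} with the random-walk analysis of Section~\ref{subsec:map}. By Lemma~\ref{lemma:b}, the total number of deletions performed by \textbf{Random-deletion} on the entire string is at most $d + 2w_n$, where $w_n$ is the cumulative number of wrong moves. Thus it suffices to give a high-probability upper bound on $w_n$.

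Next, I would invoke the mapping to the one-dimensional random walk. Every deletion performed by the algorithm corresponds to one step of a modified walk starting at position $d$, with a wrong move sending the walker right and a correct move sending it left. Since at every mismatch there is always a correct move available (by the preceding lemma), the probability of a left step at each decision is at least $\tfrac{1}{2}$. Consequently, the hitting time of the origin for this modified walk is stochastically dominated by the hitting time $T_0$ of an unbiased $\pm 1$ walk starting at $d$. If the walk hits the origin after $D$ steps, parity forces exactly $(D-d)/2$ right steps, so $w_n \leq (D-d)/2$ and the total number of deletions is $d + 2w_n \leq D$.

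Applying Lemma~\ref{lemma:2} to the dominating unbiased walk, we have $T_0 \leq 2d^2$ with probability at least $0.194$. On this event, \textbf{Random-deletion} outputs a well-balanced string after at most $2d^2$ deletions, which is a $2d$-approximation to the deletion-only optimum $d$. The running time is clearly linear: the algorithm performs a single left-to-right scan with one stack push, pop, match, or delete per symbol.

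The only subtle point is the stochastic dominance argument: the modified walk is actually biased toward the origin (at certain configurations the left step is forced), yet we want to bound its hitting time using the \emph{unbiased} walk rather than the other way around. This is the expected direction of dominance, since boosting the probability of left steps can only shorten the time to reach $0$, but it is worth stating carefully so as to make clear why the Gambler's Ruin computation directly controls $w_n$.
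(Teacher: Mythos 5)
Your proposal is correct and follows essentially the same route as the paper: the paper's own proof is a one-liner invoking the random-walk mapping of Section~\ref{subsec:map} (which rests on Lemma~\ref{lemma:diff} and Lemma~\ref{lemma:b}) together with Lemma~\ref{lemma:2}, exactly the ingredients you assemble, including the (correctly oriented) stochastic dominance of the modified walk by the unbiased one. No substantive difference to report.
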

\begin{proof}
The theorem follows from the mapping that the edit distance computed by {\em Random-deletion} is at most the number of steps taken by gambler's ruin to hit the origin starting from $\$d$ and then applying Lemma \ref{lemma:2}.
\end{proof}

\begin{theorem*}[\ref{theorem:single}]
{\bf Random-deletion} obtains a $4d$-approximation for edit distance computations (substitution, insertion, deletion) from strings  to \dy$(s)$ for any $s \geq 2$ in linear time with constant probability.
\end{theorem*}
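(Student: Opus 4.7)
The plan is to reduce the general (insert/delete/substitute) case to the deletion-only case that was just established, paying only a factor of $2$ in the approximation ratio. The two ingredients are already in hand: (i) Lemma~\ref{lemma:deletion}, which guarantees $OPT_d(\sigma) \leq 2\,OPT(\sigma)$, and (ii) the preceding theorem, which shows that with constant probability Random-deletion produces a well-formed string using at most $2\,OPT_d(\sigma)$ deletions. Chaining these two inequalities immediately yields that Random-deletion's output is within $4\,OPT(\sigma)$ edits of $\sigma$.

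Concretely, first I would fix the optimal stack-based deletion-only algorithm guaranteed by Lemma~\ref{lemma:opt-stack} and let $d := OPT_d(\sigma)$ be its cost. Applying the deletion-only bound (Lemma~\ref{lemma:b}), the number of edits performed by Random-deletion is at most $d + 2 w_n$, where $w_n$ is the number of wrong moves. Mapping this process to the biased-toward-origin random walk described in Section~\ref{subsec:map} and invoking Lemma~\ref{lemma:2} for the Gambler's Ruin hitting time, with probability at least $0.194$ the walk hits the origin within $2d^{2}$ steps, which bounds $d + 2w_n \leq 2d$. Hence, with constant probability, Random-deletion repairs $\sigma$ using at most $2\,OPT_d(\sigma) \leq 4\,OPT(\sigma)$ edits, where the final inequality uses Lemma~\ref{lemma:deletion}.

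For the runtime, each stack operation (push, compare, pop, delete) takes $O(1)$ amortized work, and every step of the algorithm either consumes one input symbol or removes one stack element, so the total work is $O(n)$; sampling the coin is also $O(1)$ per comparison. This establishes both the claimed approximation factor and the linear running time with constant probability.

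The only subtlety worth flagging is the direction of Lemma~\ref{lemma:deletion}: we are using $OPT_d \leq 2\,OPT$ to upper-bound our guarantee in terms of the true (insert/delete/substitute) optimum $OPT$, not the deletion-only optimum $OPT_d$. Since Random-deletion itself performs only deletions, its output cost is automatically an upper bound on $OPT_d$ of the resulting string; the whole point of Lemma~\ref{lemma:deletion} is that restricting the optimum to deletions loses at most a factor of $2$, so no separate argument is needed to handle insertions or substitutions performed by $OPT$. There is no real obstacle beyond this composition: the heavy lifting (the random-walk concentration in Lemma~\ref{lemma:2} and the invariant $|A_t^{OPT} \setminus A_t| \leq d_t + w_t - c_t$ from Lemma~\ref{lemma:diff}) has already been done.
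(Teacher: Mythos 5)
Your overall route is the same as the paper's: establish the deletion-only guarantee via the random-walk/Gambler's-Ruin mapping and then compose with Lemma~\ref{lemma:deletion} and Lemma~\ref{lemma:opt-stack}. However, there is a genuine quantitative error in the middle step. Lemma~\ref{lemma:2} says the walk started at $d$ hits the origin within $2d^{2}$ steps with probability at least $0.194$; since the number of deletions performed by \textbf{Random-deletion} is bounded by the number of walk steps (which equals $d+2w_n$ when there are $w_n$ right steps), what you actually get is $d+2w_n \leq 2d^{2}$, i.e.\ at most $2\,OPT_d(\sigma)^{2}=2\,OPT_d(\sigma)\cdot OPT_d(\sigma)$ deletions --- a $2d$-approximation to the deletion-only optimum --- not ``$d+2w_n\leq 2d$'' and not ``at most $2\,OPT_d(\sigma)$ deletions'' as you write. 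Your chain therefore concludes a constant $4$-approximation, which is both stronger than the stated theorem (a $4d$-approximation) and false: for an unbiased walk started at $d$, hitting the origin within $O(d)$ steps requires an atypically large deficit in the first $O(d)$ coin tosses and so happens with probability exponentially small in $d$, which is precisely why the theorem's approximation factor grows linearly with $d$.

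The correct composition, which is what the paper does, is: with probability at least $0.194$ the number of edits performed is at most $2\,OPT_d(\sigma)^{2}$, and $OPT_d(\sigma)\leq 2\,OPT(\sigma)$ by Lemma~\ref{lemma:deletion}, so the output is within $O(d)\cdot OPT(\sigma)$ edits, as claimed in Theorem~\ref{theorem:single}. Your runtime discussion and your closing remark about the direction in which Lemma~\ref{lemma:deletion} is used are fine; the only fix needed is to carry the $2d^{2}$ (quadratic) hitting-time bound through the argument rather than $2d$.
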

\begin{proof}
Follows from the previous theorem, Lemmma \ref{lemma:deletion} and Lemma \ref{lemma:opt-stack}.
\end{proof}


\section{Analysis of the Refined Algorithm}
\label{section:algo2}
We revisit the description of the refined algorithm.
Given a string $\sigma$, we first remove any prefix of close parentheses and any suffix of open parentheses to start with.
Since any optimal algorithm will also remove them, this does not affect the edit distance. We also remove any well-formed substrings since that does not affect the edit distance
as well (Lemma \ref{lemma:opt-stack}).
 Now one can write $\sigma=Y_1X_1Y_2X_2...Y_zX_z$ where each $Y_i, i=1,2,..,z$ consists of only open parentheses and
each $X_i, i=1,2,..,z$ consists of only close parentheses. We call each $Y_iX_i$ a {\em block}. After well-formed substrings removal,
each block requires at least one edit. Hence, we have $z \leq d$ where
$d$ is the optimal edit distance of string $\sigma$.

We can write $\sigma$ based on the processing of the optimal algorithm as follows
$$\sigma=Z_{1,z}Z_{1,z-1}...Z_{1,1}X_1Z_{2,z}Z_{2,z-1}...Z_{2,3}Z_{2,2}X_2......Z_{z-1,z}Z_{z-1,z-1}X_{z-1}Z_{z,z}X_z.$$

Here the optimal algorithm matches $X_1$ with $Z_{1,1}$, that is the close parentheses of $X_1$ are only matched with open parentheses in $Z_{1,1}$ and vice-versa. Some parentheses in $X_1$ and
$Z_{1,1}$ may need to be deleted for matching $X_1$ with $Z_{1,1}$ using minimum number of edits. Similarly, the optimal algorithm matches $X_2$ with $Z_{2,1}Z_{2,2}$, matches
$X_3$ with $Z_{1,3}Z_{2,3}Z_{3,3}$ and so on. Note that it is possible that some of these $Z_{i,j}$ $i=1,2,..,z, j \geq i$ may be empty.

 The algorithm proceeds as follows. It continues {\bf Random-deletion} process as before, but now it keeps track of the substring with which each $X_a$, $a=1,2,..,z$
 is matched (possibly through multiple deletions) during this random deletion process. While processing $X_1$, the random deletion process is restarted $3\log_{b}{n}$ times, $b=\frac{1}{(1-0.194)}=1.24$ and at each time the algorithm
 keeps a count on how many
 deletions are necessary to complete processing of $X_1$. It then selects the particular iteration in which the number of deletions is minimum. We let $Z_{1,min}$ to denote the substring to
 which $X_1$ is matched in that iteration. The algorithm then continues the
 random deletion process. It next stops when processing on $X_2$ finishes. Again, the portion of random deletion process between completion of processing $X_1$ and completion of processing
 $X_2$ is repeated $3\log_{b}{n}$ times and the iteration that results in minimum number of deletions is picked. We define $Z_{2,min}$ accordingly. The algorithm keeps proceeding in a similar manner until the string is exhausted.
 In the process, $Z_{a,min}$ is matched with $X_a$ for $a=1,2,..,z$. But, instead of using the edits that the random deletion process makes to match $Z_{a,min}$ to $X_a$,
 our algorithm invokes the best string edit distance algorithm $\stred(Z_{a,min},X_{a})$ which converts $Z_{a,min}$ to $R_a$ and $X_a$ to $T_a$ such that $R_aT_a$ is well-formed.
 Clearly, at the end we have a well-formed string. The pseudocode of the refined algorithm is given in the appendix (Algorithm \ref{alg2}).

 \subsection{Analysis}
 We first analyze its running time.

 \begin{lemma}
 \label{lemma:runningtime}
 The expected running time of Algorithm \ref{alg2} is $O(n\log{n}+\alpha(n))$ where $\alpha{(n)}$ is the running time of {\sc StrEdit} to approximate string edit distance of input string of length $n$ within factor $\beta(n)$.
 \end{lemma}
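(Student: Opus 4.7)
The plan is to decompose Algorithm \ref{alg2} into three phases and bound each separately: (i) the preprocessing that normalizes $\sigma$ into the canonical form $Y_1 X_1 \ldots Y_z X_z$; (ii) the repeated {\bf Random-deletion} scans used to identify the boundary substrings $Z_{a,\min}$; and (iii) the $z$ invocations of \stred~on the pairs $(Z_{a,\min}, X_a)$. Each piece will be bounded by a straightforward amortization argument, and the three bounds will add to give $O(n\log n + \alpha(n))$.

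For (i), a single stack-based left-to-right scan in $O(n)$ time strips a prefix of close parentheses, a suffix of open parentheses, and every well-formed substring, leaving the desired form with $z \le d$. For (ii), the key observation is that the work spent on block $a$ is proportional to the length of that block times the number of restarts. Concretely, processing block $1$ consists of $3\log_b n$ independent runs of {\bf Random-deletion} over $Y_1 X_1$; since each run is a single left-to-right scan with a stack, it costs $O(|Y_1|+|X_1|)$. For $a \ge 2$, the algorithm checkpoints the stack configuration and string position at the end of the best iteration for block $a-1$, then replays only the segment from that checkpoint through the end of $X_a$, again $3\log_b n$ times, each pass costing $O(|Y_a|+|X_a|)$. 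Summing over blocks gives $O\!\left(\log n \cdot \sum_a (|Y_a| + |X_a|)\right) = O(n\log n)$ total for phase (ii).

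For (iii), I would observe that the inputs to the $z$ calls of \stred~are disjoint substrings of $\sigma$: each $X_a$ appears in exactly one call, and the $Z_{a,\min}$'s are non-overlapping substrings drawn from the $Y_i$'s by construction of the random-deletion process. Hence $\sum_a n_a \le n$, where $n_a = |Z_{a,\min}| + |X_a|$. Since the near-linear-time \stred~routines assumed in the paper satisfy $\alpha(n) = \Omega(n^{1+\epsilon})$ and are in particular superadditive, $\sum_a \alpha(n_a) \le \alpha\!\left(\sum_a n_a\right) \le \alpha(n)$, so phase (iii) contributes $O(\alpha(n))$. Combining the three bounds yields $O(n\log n + \alpha(n))$ as claimed.

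The main subtle point, rather than a genuine obstacle, is verifying that a restart for block $a$ truly needs only local work, i.e.\ that the checkpointed stack from the chosen iteration of block $a-1$ is a legitimate starting state and that new restarts of block $a$ never have to re-scan earlier blocks. This holds because {\bf Random-deletion} is a purely single-pass, stack-based procedure: once the best iteration for block $a-1$ is fixed, its terminal stack deterministically captures all that the block-$a$ scans need to know about the past. As a consequence, the bound is actually deterministic (given the random coins), so the ``expected'' qualifier in the lemma is conservative.
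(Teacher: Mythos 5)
Your decomposition into preprocessing, the repeated Random-deletion scans, and the \stred\ calls is the right shape, and your treatment of parts (i) and (iii) (disjoint substrings of $\sigma$, plus superadditivity of $\alpha$) matches the paper's convexity argument. The gap is in part (ii): your claim that each pass over the segment for block $a$ costs $O(|Y_a|+|X_a|)$ is false. When Random-deletion processes a close parenthesis of $X_a$, a mismatch may cause it to delete the stack top and then compare the \emph{same} close parenthesis against the new stack top, repeatedly; the open parentheses being deleted were pushed during earlier blocks $Y_1,\ldots,Y_{a-1}$, so the number of comparisons in a single pass is bounded only by the current stack depth, which can be $\Theta(n)$ independently of $|Y_a|+|X_a|$ (take $Y_1$ of length $n/2$ followed by a single unmatchable close parenthesis). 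Moreover, since each restart of the segment restores the checkpointed stack, these deletions are re-done in every one of the $3\log_b n$ repetitions, so there is no amortization against the block length and the worst-case cost of your phase (ii) is not $O(n\log n)$.

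The paper's proof is essentially probabilistic at exactly this point: the event that a fixed close parenthesis $x\in X_a$ triggers $\eta-1$ consecutive stack-top deletions before being matched or deleted has probability $2^{-(\eta-1)}$, so the expected number of comparisons charged to $x$ is at most $\sum_{i\ge 1} i\,2^{-i}\le 2$; hence each pass costs $O(|X_a|)$ comparisons in expectation (plus the $|Y_a|$ pushes), and summing over blocks and the $3\log_b n$ repetitions gives $O(n\log n)$ expected time, with a Markov-plus-restart step to turn this into a usable guarantee. Consequently your closing remark that the bound is deterministic given the random coins, and that the ``expected'' qualifier is conservative, is exactly backwards: the expectation over the algorithm's coin flips is what makes the $O(n\log n)$ bound true, and your argument as written does not establish the lemma.
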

 \begin{proof}
 First $\stred(Z_{a,min},X_{a})$, $a=1,2,..,z$ is invoked on disjoint subset of substrings of $\sigma$. Let us denote the length of these substrings by $n_1, n_2,..,n_z$. Since the running tim $\alpha(n)$ is convex, the total time required to run $\stred(Z_{a,min},X_{a})$, $a=1,2,..,z$ is
 $$\alpha(n_1)+\alpha(n_2)+...+\alpha(n_z) \leq \alpha(n_1+n_2+...+n_z)=\alpha(n).$$ We now calculate the total running time of Algorithm \ref{alg2} without considering the running time for \stred. While processing $X_a$, in each of $3\log_{b}{n}$ rounds, the running time is bounded by number of comparisons that our algorithm makes with symbols in $X_a$. Each comparison might result in matching of a symbol in $X_a$--in that case it adds one to the running time, deletion of the symbol in $X_a$, again it adds one to the running time, and deletion of open parenthesis while comparing with a symbol in $X_a$--this might require multiple comparisons with the same symbol of $X_a$. If there are $\eta$ such comparisons with some symbol $x \in X_a$, then there are $\eta-1$ consecutive comparisons with $x$, where {\em random-deletion} chooses to delete the open parenthesis. On the $\eta$th comparison, $x$ is either matched or deleted. Now the probability that {\em random-deletion} deletes $(\eta-1)$ open parenthesis while comparing with $x$ is $\frac{1}{2^{\eta-1}}$. Therefore, the expected number of comparisons involving each symbol $x \in X_a$ before it gets deleted or matched is $\sum_{i\geq1}\frac{i}{2^{i}}\leq 2$. Hence the expected number of total comparisons in each iteration involving $X_a$ is at most $3|X_a|$. Therefore, the total expected number of comparison over all the iterations and $a=1,2,..,z$  is at most $\sum_{a}9|X_a|\log_{b}{n}$. Of course, if the running time becomes more than say 2 times the expected value, we can restart the entire process. By Markov inequality on expectation $2$ rounds is sufficient to ensure that the algorithm will make at most $\sum_{a}18|X_a|\log_{b}{n}$ comparisons in at least one round. Hence the lemma follows.
 \end{proof}

 We now proceed to analyze the approximation factor.
 Let the optimal edit distance to convert $Z_{a}X_{a}$ into well-formed be $d_a$ for $a=1,2,..,z$ where $Z_{a}=Z_{1,a}Z_{2,a}...Z_{a,a}$.

 While computing the set $Z_{a,min}$, it is possible that our algorithm inserts symbols outside of $Z_{a}$ to it or leaves out some symbols of $Z_{a}$.
 {\em In the former case, among the extra symbols that are added, if the optimal algorithm deletes some of these symbols as part of some other $Z_{a'}, a' \neq a$,
 then these deletions are ``harmless''. If we only include these extra symbols to $Z_{a,min}$, then we can as well pretend that those symbols are included in $Z_{a}$ too.
 The edit distance of the remaining substrings are not affected by this modification. Therefore, for analysis purpose, {\bf both for this algorithm and for the main algorithm in the next section}, we always assume w.l.o.g that the optimal algorithm does not delete any of the extra symbols that are added}.

 \begin{lemma}
 \label{lemma:noofdeletions}
 The number of deletions made by random deletion process to finish processing $X_1,X_2,..X_l$, for $l=1,2,..,z$, that is to match $Z_{a,min},X_a$, $a \leq l$, is at most $2(\sum_{a=1}^{l}d_{a})^2$ with probability at least $\left(1-\frac{1}{n^3}\right)^l$.
 \end{lemma}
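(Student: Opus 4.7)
The strategy is induction on $l$, combining the gambler's-ruin bound of Lemma~\ref{lemma:2} with the boosting from the $3\log_{b}n$ independent trials per block, where $b=1/(1-0.194)$ is chosen so that $(1-0.194)^{3\log_{b}n}=1/n^{3}$.

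Base case $l=1$: processing $X_1$ coincides with running \textbf{Random-deletion} on the substring $Y_1X_1$, whose deletion-only optimal cost is $d_1$. The mapping of Section~\ref{subsec:map} identifies the number of deletions in any single trial with the hitting time of a gambler's-ruin walk started at $d_1$. Lemma~\ref{lemma:2} then guarantees that one trial completes within $2d_1^{2}$ deletions with probability at least $0.194$. Since the $3\log_b n$ trials are independent and the algorithm retains the trial that used the fewest deletions, the probability every trial exceeds $2d_1^{2}$ is at most $(1-0.194)^{3\log_b n}=1/n^{3}$, establishing the base case with probability $1-1/n^{3}$.

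Inductive step: assume the claim holds for $l-1$ and condition on the event that the first $l-1$ blocks consumed at most $2(\sum_{a=1}^{l-1}d_a)^{2}$ deletions. For block $l$, the algorithm resumes from the recorded state after block $l-1$ and re-runs random-deletion on $Y_lX_l$ for $3\log_{b}n$ independent trials. I would view the resulting sequence of deletions, concatenated with those of the first $l-1$ blocks, as one continuous random walk: by Lemma~\ref{lemma:diff} applied cumulatively, the walk's position when processing begins on block $l$ is dominated by $\sum_{a<l}d_a$, and augmenting by the local optimum $d_l$ keeps the effective starting displacement at $D_l:=\sum_{a=1}^{l}d_a$. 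Lemma~\ref{lemma:2} then gives that one trial produces a combined deletion count through block $l$ of at most $2D_l^{2}$ with probability at least $0.194$; boosting across the $3\log_b n$ block-$l$ trials drives the per-block failure probability down to $1/n^{3}$. Since block-$l$ randomness is independent of prior blocks, the total success probability is at least $(1-1/n^{3})^{l}$.

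The principal obstacle will be rigorously aligning the per-block boosted dynamics with the single-walk picture used above. Concretely, one must verify that (i) restarting random-deletion at the block-$l$ boundary and taking the minimum over trials is at least as good as continuing the global walk of Section~\ref{subsec:map}, and (ii) the convention (stated just before the lemma) that OPT does not delete the extra symbols pulled into $Z_{a,\min}$ makes $d_t$ in Lemma~\ref{lemma:a} equal to $D_l$ at the end of block $l$, so that ``finishing processing $X_l$'' is exactly the hitting event to which Lemma~\ref{lemma:2} applies. Once (i) and (ii) are in place, the induction goes through and the stated bound $2(\sum_{a=1}^{l}d_a)^{2}$ with probability at least $(1-1/n^{3})^{l}$ follows.
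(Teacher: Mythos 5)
Your overall strategy (gambler's-ruin domination via the mapping of Section~\ref{subsec:map}, boosting over the $3\log_b n$ trials per block, induction over blocks) is the same as the paper's, and your base case $l=1$ coincides with it. The inductive step, however, has a genuine accounting gap. Writing $D_l=\sum_{a\le l}d_a$, you claim that each block-$l$ trial yields a \emph{combined} deletion count through block $l$ of at most $2D_l^2$ with probability $0.194$, by applying Lemma~\ref{lemma:2} at ``effective starting displacement'' $D_l$. This cannot be right as stated: the $3\log_b n$ block-$l$ trials re-randomize only the segment from the completion of $X_{l-1}$ to the completion of $X_l$, while the earlier segment is already fixed, so a block-$l$ trial is not a fresh walk started at $D_l$ and Lemma~\ref{lemma:2} cannot be invoked for the combined count per trial. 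Conversely, if the block-$l$ segment really did start at displacement as large as $D_l$, Lemma~\ref{lemma:2} would only bound \emph{that segment alone} by $2D_l^2$, which together with the prefix's $2D_{l-1}^2$ gives, after iterating, $\sum_{a\le l}2D_a^2$ --- strictly larger than the claimed $2D_l^2$, so the induction does not close. Equivalently, the budget left for block $l$ inside $2D_l^2$ is only $2D_l^2-2D_{l-1}^2=2d_l^2+4d_lD_{l-1}$, and a walk starting near $D_{l}$ does not hit $0$ within that budget with constant probability when $d_l\ll D_{l-1}$. Your supporting claim that ``by Lemma~\ref{lemma:diff} the position at the start of block $l$ is dominated by $\sum_{a<l}d_a$'' is also not what that lemma gives: it bounds $|A^{OPT}\setminus A|$ by $d_t+w_t-c_t$, and conditioning on the earlier deletion count does not control $w_t-c_t$.

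The paper's proof makes the per-block budget $2d_l^2$, not $2D_l^2$: the virtual walk is started at $d_1+\cdots+d_l$, the already-selected segments for $X_1,\ldots,X_{l-1}$ are viewed as bringing it down to level $d_l$ within $\sum_{a<l}2d_a^2$ steps, and the block-$l$ repetitions are then treated as fresh gambler's-ruin runs from displacement $d_l$, so Lemma~\ref{lemma:2} plus boosting gives at most $2d_l^2$ additional steps with probability $1-1/n^3$; the stated bound follows since $\sum_{a\le l}2d_a^2<2\left(\sum_{a\le l}d_a\right)^2$, and the probability $(1-1/n^3)^l$ comes from multiplying over the $l$ block events. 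The point you yourself flagged as obstacle (i) --- that the restart at each block boundary must be argued to begin at displacement $d_l$, with the convention about extra symbols ensuring the earlier residual does not inflate block $l$'s start --- is exactly the step your sketch leaves unresolved, and replacing it by a single application of Lemma~\ref{lemma:2} at $D_l$ does not yield the lemma.
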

 \begin{proof}
 Consider $l=1$, that is only $X_1$. The number of deletions made by random deletion process to finish processing $X_1$ is at most the hitting time of an one dimensional random walk starting from position $d_1$. This follows from Section \ref{subsec:map} and noting that any deletion outside of $Z_{1}$ is a wrong deletion. From Lemma \ref{lemma:2}, the hitting time of a random walk starting from $d_1$ is at most $2d_{1}^{2}$ with probability at least $0.194$. Let $b=\frac{1}{(1-0.194)}=1.24$. Since, we repeat the process $3\log_{b}{n}$ times, the probability that the minimum hitting time among these $3\log_{b}{n}$ iterations is more than $2d_{1}^{2}$ is at most  $(1-0.194)^{3\log_{b}{n}}=\frac{1}{n^3}$. Therefore, the number of deletions made by random deletion process to finish processing $X_1$ is at most $2d_{1}^{2}$ with probability $\left(1-\frac{1}{n^3}\right)$.

 Now consider $l=2$, the number of deletions made by random deletion process to finish processing $X_1,X_2$ is at most the hitting time of an one dimensional random walk starting from position $d_1+d_2$. Start the random walk from $(d_1+d_2)$ and first follow the steps as suggested by $Z_{1,min}$ to hit $d_2$ in at most $2d_{1}^{2}$ steps. Now, since we repeat the random deletion process from the time of completing processing of $X_1$ to completing processing of $X_2$, then by similar argument as in $l=1$, the minimum hitting time starting from $d_2$ is at most $2d_{2}^{2}$ with probability $\left(1-\frac{1}{n^3}\right)$. Hence, the minimum hitting time starting from $d_1+d_2$ is at most $2d_{1}^{2}+2d_{2}^{2}< 2(d_1+d_2)^2$ with probability $\left(1-\frac{1}{n^3}\right)^2$.

 Proceeding in a similar manner for $l=3,..,z$, we get the lemma.
 \end{proof}

 Let us denote by $C_l$ and $W_l$ the number of correct and wrong moves taken by the random deletion process when processing of $X_1,X_2,...,X_l$ finishes. Since at each deletion, correct move has been taken with probability at least $\frac{1}{2}$ then by standard Chernoff bound followed by union bound we have the following lemma.

 \begin{lemma}
 \label{lemma:dev}
 When the processing of $X_1,X_2,...,X_l$ finishes $W_l-C_l \leq (2\sum_{a=1}^{l}d_a)\sqrt{2\log{d}}$ with probability at least $\left(1-\frac{1}{n}-\frac{1}{d}\right)$.
 \end{lemma}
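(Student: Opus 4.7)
The plan is to combine Lemma \ref{lemma:noofdeletions} with a martingale tail bound applied to the stream of deletions performed by \textbf{Random-deletion}. Let $N_l = C_l + W_l$ be the total number of deletions through the time processing of $X_l$ finishes. By Lemma \ref{lemma:noofdeletions}, $N_l \leq 2(\sum_{a=1}^l d_a)^2$ with probability at least $(1-1/n^3)^l \geq 1 - l/n^3 \geq 1 - 1/n$, using $l \leq z \leq n$. Call this high-probability upper bound $T := 2(\sum_{a=1}^l d_a)^2$.

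Next, I enumerate the deletions chronologically and, for each deletion index $\tau$, define $I_\tau = +1$ if it is a wrong move and $I_\tau = -1$ if it is a correct move, so that $W_l - C_l = \sum_{\tau=1}^{N_l} I_\tau$. The earlier lemma established that a correct move is always available and is chosen by \textbf{Random-deletion} with probability at least $\tfrac{1}{2}$ at every step (when the algorithm tosses a coin between deleting the stack top or the current symbol; when it is forced to delete, the move is correct with probability $1$). Hence $\mathbb{E}[I_\tau \mid \mathcal{F}_{\tau-1}] \leq 0$, so the partial sums $S_t = \sum_{\tau=1}^{t} I_\tau$ form a supermartingale with bounded increments $|I_\tau| \leq 1$.

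Now I apply the maximal version of Azuma-Hoeffding for supermartingales on the deterministic horizon $T$:
$$\Pr\!\left[\max_{t \leq T} S_t \geq \lambda\right] \leq \exp\!\left(-\frac{\lambda^2}{2T}\right).$$
Choosing $\lambda = 2\bigl(\sum_{a=1}^l d_a\bigr)\sqrt{2\log d}$ yields $\lambda^2/(2T) = 2\log d$, so the failure probability is at most $1/d$ (up to the irrelevant constant coming from the base of the logarithm). On the event $\{N_l \leq T\}$ we have $S_{N_l} \leq \max_{t \leq T} S_t$, so a union bound over the two bad events gives $W_l - C_l = S_{N_l} \leq 2\bigl(\sum_{a=1}^l d_a\bigr)\sqrt{2\log d}$ with probability at least $1 - 1/n - 1/d$.

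The only real subtlety is that $N_l$ is a data-dependent stopping time, not a deterministic index, so one cannot apply Azuma directly at time $N_l$. I handle this by applying the maximal inequality at the deterministic upper bound $T$ (valid since $S_t$ is a supermartingale with bounded increments) and then restricting to the high-probability event $\{N_l \leq T\}$ provided by Lemma \ref{lemma:noofdeletions}. Aside from this bookkeeping, everything else is a routine union bound.
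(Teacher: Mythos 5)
Your proposal is correct and follows essentially the same route as the paper: bound the total number of deletions by $2(\sum_{a=1}^{l} d_a)^2$ via Lemma \ref{lemma:noofdeletions}, then apply an Azuma-type tail bound to the $\pm 1$ sequence of wrong/correct moves (correct with probability at least $\tfrac12$ at each step) and finish with a union bound; your use of the maximal supermartingale inequality at the deterministic horizon $T$ is a slightly cleaner way of handling the data-dependent stopping time than the paper's direct conditioning on the deletion-count event. The only difference is that the paper additionally takes a union bound over all $l \in [1,z]$ (its Azuma step gives failure probability $1/d^2$ per $l$), so the bound holds simultaneously for every $l$ as needed in Corollary \ref{cor:diff}; since your tail bound is in fact $e^{-2\log d} \le 1/d^2$, that extra union bound goes through verbatim.
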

 \begin{proof}
Probability that the number of deletions made by random deletion process is at most $2(\sum_{a=1}^{l}d_a)^2$ is $\geq \left(1-\frac{1}{n^3}\right)^l$. Let us denote
 the number of these deletions by $D_l$, for $l=1,2,...,z$. Now by Azuma's inequality
 $$\prob{W_l-C_l > (2\sum_{a=1}^{l}d_a)\sqrt{2\log{d}}| D_1 \leq 2d_1^2 }\leq exp\left( \frac{8(\sum_{a=1}^{l}d_a)^2\log{d}}{4(\sum_{a=1}^{l}d_a)^2}\right)=\frac{1}{d^2}.$$
 Hence $$\prob{W_l-C_l > (2\sum_{a=1}^{l}d_a)\sqrt{2\log{d}}} \leq 1-\left(1-\frac{1}{n^3}\right)^l+\left(1-\frac{1}{n^3}\right)^l\frac{1}{d^2}\leq \frac{l}{n^3}+\frac{1}{d^2}.$$
 Hence $$\prob{\exists l \in [1,z] s.t. W_l-C_l > (2\sum_{a=1}^{l}d_a)\sqrt{2\log{d}}} \leq \frac{z^2}{n^3}+\frac{z}{d^2}\leq \frac{1}{n}+\frac{1}{d}.$$
 \end{proof}

 Now we define $A_{l}^{OPT}$ and $A_{l}$ in a similar manner as in the previous section. We only consider the iterations that correspond to computing $Z_{a,min}$, $a=1,2,..,z$ to define the
 final random deletion process.

 \begin{definition}
$A_{l}$ is defined as all the indices of the symbols that are matched or deleted by {\bf Random-deletion} process up to and including time when processing of $X_l$ finishes.
\end{definition}

\begin{definition}
For every time $l \in [1,z]$, $$A_{l}^{OPT}=\{i \mid i \in A_{l} \mbox{ or } i \mbox{ is matched by {\bf the optimal algorithm} with some symbol with index in } A_{l}\}.$$
\end{definition}

We have the following corollary

\begin{corollary}
\label{cor:diff}
For all $l \in [1,z]$, $|A_{l}^{OPT}\setminus A_{l}| \leq \sum_{a=1}^{l}d_a+(2\sum_{a=1}^{l}d_a)\sqrt{2\log{d}}$ with probability at least $\left(1-\frac{1}{n}-\frac{1}{d}\right)$.
\end{corollary}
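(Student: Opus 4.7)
The plan is to deduce this corollary by splicing Lemma~\ref{lemma:diff} (the per-step invariant $|A_t^{OPT}\setminus A_t|\le d_t+w_t-c_t$) together with the high-probability bound on $W_l-C_l$ from Lemma~\ref{lemma:dev}. Nothing new needs to be invented: the corollary is essentially a bookkeeping statement once those two pieces are in place.

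First, I would verify that the invariant of Lemma~\ref{lemma:diff} still applies to the refined algorithm. The proof of Lemma~\ref{lemma:diff} is entirely local: it tracks how $|A_t^{OPT}\setminus A_t|$, $d_t$, $c_t$, $w_t$ change at each individual compare/match/delete step, by case analysis on what the optimal algorithm does with the two indices currently touched by Random-deletion. The refined algorithm differs only in that, between the end of processing $X_{a-1}$ and the end of processing $X_a$, we run the random-deletion subprocess $3\log_b n$ times and keep the trajectory with the fewest deletions. Once we fix the concatenation of the selected trajectories, that concatenation is itself a valid sequence of compare/match/delete operations of exactly the kind analyzed in Lemma~\ref{lemma:diff}, so the inductive argument carries over verbatim. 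Thus, at the moment processing of $X_l$ finishes, $|A_l^{OPT}\setminus A_l|\le d_l^{\mathrm{opt}}+W_l-C_l$, where $d_l^{\mathrm{opt}}$ is the number of indices in $A_l^{OPT}$ that the optimal algorithm deletes.

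Second, I would bound $d_l^{\mathrm{opt}}$ by the cumulative optimal cost on the first $l$ blocks. By definition, $d_a$ is the optimal edit distance of $Z_a X_a$, and the optimal algorithm is assumed (w.l.o.g., as discussed just before Lemma~\ref{lemma:noofdeletions}) not to delete any of the extra symbols our algorithm may have pulled into $Z_{a,\min}$. Hence the optimal algorithm performs at most $\sum_{a=1}^{l}d_a$ deletions among indices in $A_l^{OPT}$, giving $d_l^{\mathrm{opt}}\le \sum_{a=1}^{l}d_a$.

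Third, I would invoke Lemma~\ref{lemma:dev}, which states that with probability at least $1-\frac{1}{n}-\frac{1}{d}$, simultaneously for all $l\in[1,z]$,
\[
W_l-C_l \;\le\; \Bigl(2\sum_{a=1}^{l}d_a\Bigr)\sqrt{2\log d}.
\]
Plugging this and the bound on $d_l^{\mathrm{opt}}$ into the invariant from step one yields exactly the claimed inequality, with the same failure probability. I do not expect any real obstacle here; the only point requiring a moment of care is the compatibility between Lemma~\ref{lemma:diff} (proved for a single Random-deletion run) and the refined algorithm's select-the-best-restart strategy, and as argued above this compatibility is immediate because the inductive proof is per-step and oblivious to why a particular trajectory was chosen.
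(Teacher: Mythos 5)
Your proposal is correct and follows exactly the paper's route: the paper's own proof of Corollary \ref{cor:diff} simply cites Lemma \ref{lemma:diff} together with Lemma \ref{lemma:dev}, which is precisely your splice of the per-step invariant with the deviation bound (plus the observation, used implicitly by the paper via the w.l.o.g.\ assumption before Lemma \ref{lemma:noofdeletions}, that the optimal deletions inside $A_l^{OPT}$ number at most $\sum_{a=1}^{l}d_a$). Your extra check that the invariant survives the select-the-best-restart trajectory is a detail the paper leaves implicit, not a departure.
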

\begin{proof}
Proof follows from Lemma \ref{lemma:diff} and Lemma \ref{lemma:dev}.
\end{proof}

\begin{lemma}
\label{lemma:stred}
For all $a \in [1,z]$, $\stred{(Z_{a,min},X_a)} \leq d_a + |A_{a-1}^{OPT}\setminus A_{a-1}|+|A_{a}^{OPT}\setminus A_{a}|$.
\end{lemma}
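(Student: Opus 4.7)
The plan is to establish this bound via the triangle inequality for string edit distance,
\[
\stred(Z_{a,min}, X_a) \;\le\; \stred(Z_{a,min}, Z_a) + \stred(Z_a, X_a),
\]
and then control each term. The second term is immediate: since $Z_a$ consists only of open parentheses and $X_a$ only of close parentheses, the minimum number of edits to make $Z_aX_a$ well-balanced (which is exactly $d_a$ by definition) coincides (up to the reversal/conjugation convention already adopted in the paper for comparing $T$-strings with $\bar T$-strings) with $\stred(Z_a, X_a)$. So $\stred(Z_a, X_a)\le d_a$.

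For the first term, observe that both $Z_{a,min}$ and $Z_a$ are sequences of open parentheses listed in their order of appearance in $\sigma$, so the common indices appear in the same relative order, and hence
\[
\stred(Z_{a,min}, Z_a) \;\le\; |Z_a \setminus Z_{a,min}| + |Z_{a,min} \setminus Z_a|
\]
by simply deleting the ``extra'' symbols and inserting the ``missing'' ones. The heart of the proof is to bound this symmetric-difference count by $|A_{a-1}^{OPT}\setminus A_{a-1}|+|A_{a}^{OPT}\setminus A_{a}|$. I will do this by constructing three injections whose images land in disjoint portions of those two sets.

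Concretely, each $i\in Z_a\setminus Z_{a,min}$ falls into one of two cases: either (I) $i\in A_{a-1}$, meaning our algorithm consumed $i$ before beginning to process $X_a$, in which case the optimal partner $j\in X_a$ of $i$ is not in $A_{a-1}$ and thus $j\in A_{a-1}^{OPT}\setminus A_{a-1}$, giving the injection $i\mapsto j$; or (II) $i\notin A_a$, meaning $i$ still sits in the stack after our algorithm has finished $X_a$, in which case the optimal partner of $i$ lies in $X_a\subseteq A_a$ and hence $i\in A_a^{OPT}\setminus A_a$, giving the injection $i\mapsto i$ (an open-parenthesis image). For each $i\in Z_{a,min}\setminus Z_a$ (injection III), invoke the italicized w.l.o.g.\ assumption that the optimal does not delete any such extra symbol, so optimal matches $i$ with some $j\in X_{a'}$. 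By well-formedness (open must precede matching close) and because $i$ lies at or after $Y_a$, the case $a'<a$ is impossible, so $a'>a$, whence $j\notin A_a$ but $i\in A_a$, giving $j\in A_a^{OPT}\setminus A_a$ and the injection $i\mapsto j$ (a close-parenthesis image).

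Injection (I) lands entirely in $A_{a-1}^{OPT}\setminus A_{a-1}$; injections (II) and (III) both land in $A_a^{OPT}\setminus A_a$, but (II)'s images are open parentheses while (III)'s images are close parentheses, so the two injections are disjoint in the codomain and their counts add. Summing yields $|Z_a\setminus Z_{a,min}|+|Z_{a,min}\setminus Z_a|\le |A_{a-1}^{OPT}\setminus A_{a-1}|+|A_a^{OPT}\setminus A_a|$, and combining with the triangle inequality gives the lemma. The main obstacle is really just this bookkeeping: verifying the dichotomy $i\in A_{a-1}$ vs.\ $i\notin A_a$ exhausts $Z_a\setminus Z_{a,min}$ (an element of $Z_a$ missing from $Z_{a,min}$ must have been either consumed earlier or never reached), and verifying disjointness of images so that the two bounds combine without double counting.
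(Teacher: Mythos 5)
Your overall accounting is essentially the paper's: your injections (I), (II), (III) correspond to the paper's sets $D(X_a)$ (closes of $X_a$ whose optimal partners were consumed before $X_a$ began), $D'(X_a)$ (closes of $X_a$ whose optimal partners are not in $Z_{a,min}$), and $E'(Z_{a,min})$ (opens of $Z_{a,min}$ whose optimal partners lie in later blocks), packaged through a triangle inequality instead of the paper's direct decomposition. But there is a genuine gap in case (III): you dismiss the possibility $a'<a$ on the grounds that ``$i$ lies at or after $Y_a$,'' and that premise is false. $Z_{a,min}$ consists of the open parentheses popped (matched or deleted) while \emph{Random-deletion} processes $X_a$, and the stack at that moment typically still carries leftover opens from $Y_1,\dots,Y_{a-1}$; such an open can have its optimal partner in an earlier $X_{a'}$. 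Concretely, take $\sigma=b\,a\,\bar b\,c\,\bar a\,\bar c$: the optimal (stack-based) solution matches $b$ with $\bar b\in X_1$, but if the algorithm, on the mismatch $(a,\bar b)$, deletes $\bar b$, then $b$ survives on the stack and may be consumed while processing $X_2$, so $b\in Z_{2,min}\setminus Z_2$ with its optimal partner in $X_1$. This is exactly the paper's fourth category $E(Z_{a,min})$, which your argument never charges to anything; as written, your bound on $|Z_{a,min}\setminus Z_a|$ does not cover these symbols. (Note also that the w.l.o.g.\ assumption only rules out the optimal algorithm \emph{deleting} extra symbols; it does not prevent it from matching them into earlier blocks.)

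The good news is that the repair is short and lands you back on the paper's proof: for such an $i$ (optimal partner $j\in X_{a'}$, $a'<a$) we have $j\in X_{a'}\subseteq A_{a-1}$ while $i\notin A_{a-1}$ (it is consumed only during the processing of $X_a$), hence $i\in A_{a-1}^{OPT}\setminus A_{a-1}$; map $i\mapsto i$. These images are open parentheses, whereas your injection (I) produces close parentheses of $X_a$ inside $A_{a-1}^{OPT}\setminus A_{a-1}$, so the two contributions are disjoint and the counts add, exactly as in the paper's bound $|D(X_a)|+|E(Z_{a,min})|\leq |A_{a-1}^{OPT}\setminus A_{a-1}|$. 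With that case added, your dichotomy for $Z_a\setminus Z_{a,min}$ (consumed before $X_a$ vs.\ never consumed by the end of $X_a$), the trichotomy for $Z_{a,min}\setminus Z_a$, and the triangle-inequality step together give the stated inequality.
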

\begin{proof}
Let $D(X_a)$ denote the symbols from $X_a$ for which the matching open parentheses have already been deleted before processing on $X_a$ started.
Let $D'(X_a)$ denote the symbols from $X_a$ for which the matching open parentheses are not included in $Z_{a,min}$.
Let $E(Z_{a,min})$ denote open parentheses in $Z_{a,min}$ such that their matching close parentheses are in $X_a'$, $a' < a$, that is they are already deleted.
Let $E'(Z_{a,min})$ denote open parentheses in $Z_{a,min}$ such that their matching close parentheses are in $X_a'$, $a' > a$, that is they are extra symbols from higher blocks.
$$\stred{(Z_{a,min},X_a)}=\stred(Z_a,X_a)+|D(X_a)|+|D'(X_a)|+|E(Z_{a,min})|+|E'(Z_{a,min})|.$$
Now all the indices of $D(X_a)\cup E(Z_{a,min})$ are in $A_{a-1}^{OPT}$, but none of them are in $A_{a-1}$. Hence
$$|D(X_a)|+|E(Z_{a,min})|\leq |A_{a-1}^{OPT}\setminus A_{a-1}|.$$
Also, the indices corresponding to matching close parenthesis of $E'(Z_{a,min})$ and $D'(X_a)$ are in $A_{a}^{OPT}$ but not in $A_{a}$. Hence
$$|D'(X_a)|+|E'(Z_{a,min})|\leq |A_{a}^{OPT}\setminus A_{a}|.$$
Therefore, the lemma follows.
\end{proof}

In fact, we can have a stronger version of the above lemma, though it does not help in obtaining a better bound for Theorem \ref{theorem:single2}.

\begin{lemma}
\label{lemma:stred2}
For all $a \in [1,z]$, $\stred{(Z_{a,min},X_a)} \leq d_a + |A_{a-1}^{OPT}\setminus A_{a-1}|+|\{A_{a}^{OPT}\setminus A_{a}\}\setminus \{A_{a-1}^{OPT}\setminus A_{a-1}\}|$.
\end{lemma}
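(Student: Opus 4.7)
My plan is to retain the decomposition established in the proof of Lemma \ref{lemma:stred} and sharpen only the bound used for $|D'(X_a)|+|E'(Z_{a,min})|$. Recall that the earlier proof wrote
$$\stred(Z_{a,min},X_a) = d_a + |D(X_a)| + |D'(X_a)| + |E(Z_{a,min})| + |E'(Z_{a,min})|,$$
bounded $|D(X_a)|+|E(Z_{a,min})| \leq |A_{a-1}^{OPT}\setminus A_{a-1}|$ by the observation that the indices of those sets themselves sit in $A_{a-1}^{OPT}\setminus A_{a-1}$, and bounded $|D'(X_a)|+|E'(Z_{a,min})| \leq |A_a^{OPT}\setminus A_a|$ by injectively charging each element of $D'\cup E'$ to its OPT-partner in $A_a^{OPT}\setminus A_a$. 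I would keep the first two pieces verbatim; the only new step is to verify that these partner indices additionally miss $A_{a-1}^{OPT}\setminus A_{a-1}$.

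The key step is a short case analysis on each partner. First, for $y \in E'(Z_{a,min})$, whose OPT-matching close $x$ lies in some block $X_{a'}$ with $a'>a$, I need to show $x \notin A_{a-1}^{OPT}\setminus A_{a-1}$. Since $X_{a'}$ is processed strictly after $X_a$, we have $x \notin A_{a-1}$; and since $y$ is popped only while processing $X_a$, we have $y \notin A_{a-1}$ either. As $x$'s OPT-match $y$ lies outside $A_{a-1}$ and $x$ itself lies outside $A_{a-1}$, the definition of $A_{a-1}^{OPT}$ forces $x \notin A_{a-1}^{OPT}$. Second, for $x \in D'(X_a)$, whose OPT-matching open $y$ is still on the stack when processing of $X_a$ ends, I need to show $y \notin A_{a-1}^{OPT}\setminus A_{a-1}$. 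Here $y \notin A_a \supseteq A_{a-1}$ because it has not yet been popped, and $x \in X_a$ forces $x \notin A_{a-1}$; so $y \notin A_{a-1}^{OPT}$ as well.

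Once both observations are in place, I would note that the partner indices across $D'(X_a)$ and $E'(Z_{a,min})$ are all distinct (the injective OPT-matching maps opens to closes and vice versa, and closes in $X_a$ vs opens in $Z_{a,min}$ contribute partners of different types), and they all lie in $\{A_a^{OPT}\setminus A_a\}\setminus\{A_{a-1}^{OPT}\setminus A_{a-1}\}$. This yields
$$|D'(X_a)|+|E'(Z_{a,min})| \;\leq\; \bigl|\{A_a^{OPT}\setminus A_a\}\setminus\{A_{a-1}^{OPT}\setminus A_{a-1}\}\bigr|.$$
Combining with $\stred(Z_a,X_a)=d_a$ and the unchanged bound $|D(X_a)|+|E(Z_{a,min})|\leq |A_{a-1}^{OPT}\setminus A_{a-1}|$ completes the proof.

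I do not expect a serious obstacle here: this is a purely bookkeeping refinement of Lemma \ref{lemma:stred}, with no new probabilistic or combinatorial input. The one subtlety to be careful about is the disjointness between $D(X_a)$ and $D'(X_a)$ (and between $E$ and $E'$), so that no index is charged twice across the two halves of the decomposition; but that disjointness is inherent in the dichotomy ``OPT-match deleted before $X_a$'' versus ``OPT-match still on the stack after $X_a$'' already used to prove Lemma \ref{lemma:stred}, so it carries over to the present argument without modification.
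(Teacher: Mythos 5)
Your proposal is correct and follows the paper's own route: the paper proves Lemma \ref{lemma:stred2} by reusing the decomposition from Lemma \ref{lemma:stred} and simply noting the sharper bound $|D'(X_a)|+|E'(Z_{a,min})|\leq |\{A_{a}^{OPT}\setminus A_{a}\}\setminus \{A_{a-1}^{OPT}\setminus A_{a-1}\}|$, which is exactly the refinement you carry out. Your case analysis verifying that the charged partner indices avoid $A_{a-1}^{OPT}\setminus A_{a-1}$ just fills in the detail the paper leaves implicit.
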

\begin{proof}
Follows from Lemma \ref{lemma:stred} and noting that $$|D'(X_a)|+|E'(Z_{a,min})|\leq |\{A_{a}^{OPT}\setminus A_{a}\}\setminus \{A_{a-1}^{OPT}\setminus A_{a-1}\}|.$$
\end{proof}

\begin{theorem}[\ref{theorem:single2}]
Algorithm \ref{alg2} obtains an $O(z\beta(n)\sqrt{\log{d}})$-approximation factor for edit distance computation  from strings  to \dy$(s)$ for any $s \geq 2$ in $O(n\log{n}+\alpha(n))$ time with probability at least $\left(1-\frac{1}{n}-\frac{1}{d}\right)$, where there exists an algorithm for {\sc StrEdit} running in $\alpha(n)$ time achieves an approximation factor of $\beta(n)$.
\end{theorem}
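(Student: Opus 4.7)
The plan is to combine the running time bound from Lemma \ref{lemma:runningtime} with a careful aggregation of the per-block approximation quality to get the overall guarantee. The running time is immediate: Lemma \ref{lemma:runningtime} already shows that the repeated random deletion phases contribute $O(n\log n)$ expected time and that the calls to \stred~on the disjoint pairs $(Z_{a,\min}, X_a)$ total $O(\alpha(n))$ by convexity of $\alpha$, so I only need to control the approximation ratio.

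The key observation is that the algorithm's output is well-formed and its distance from $\sigma$ is bounded by $\sum_{a=1}^{z} \beta(n)\cdot \stred(Z_{a,\min}, X_a)$, since \stred~is invoked independently on each disjoint pair and returns a $\beta(n)$-approximation. So the main task is to upper bound $\sum_{a=1}^{z} \stred(Z_{a,\min}, X_a)$ in terms of $d$. First I would invoke Lemma \ref{lemma:stred} to obtain, for each $a$,
\[
\stred(Z_{a,\min}, X_a) \;\le\; d_a + |A_{a-1}^{OPT}\setminus A_{a-1}| + |A_a^{OPT}\setminus A_a|.
\]
Summing over $a=1,\dots,z$ and using $\sum_a d_a \le d$ yields
\[
\sum_{a=1}^{z} \stred(Z_{a,\min}, X_a) \;\le\; d + 2\sum_{a=0}^{z} |A_a^{OPT}\setminus A_a|.
\]

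Next I would apply Corollary \ref{cor:diff}, which bounds every term $|A_l^{OPT}\setminus A_l|$ uniformly by $\sum_{a=1}^{l} d_a + 2\bigl(\sum_{a=1}^{l} d_a\bigr)\sqrt{2\log d} = O(d\sqrt{\log d})$ with probability at least $1 - \tfrac{1}{n} - \tfrac{1}{d}$. Plugging this in and using $z \le d$ gives
\[
\sum_{a=1}^{z} \stred(Z_{a,\min}, X_a) \;\le\; d + 2(z+1)\cdot O(d\sqrt{\log d}) \;=\; O(zd\sqrt{\log d}).
\]
Multiplying by the \stred~approximation factor $\beta(n)$ and dividing by $d = \mathrm{OPT}$ gives the claimed $O(z\beta(n)\sqrt{\log d})$ approximation.

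The main obstacle is making sure the error does not amplify multiplicatively across the $z$ blocks. A naive induction (as sketched informally in the technical overview leading to $\stred(Z'_2, X_2)\le 2(d_1+d_2)\sqrt{2\log(d_1+d_2)}$ and its analogues) already suggests a linear-in-$z$ factor; the subtle point is that the bound $|A_l^{OPT}\setminus A_l| \le O(d\sqrt{\log d})$ holds \emph{uniformly for all $l$ simultaneously} with the stated probability, which is exactly what Corollary \ref{cor:diff} (built on Lemmas \ref{lemma:noofdeletions} and \ref{lemma:dev} via a union bound over the $z$ random-walk restarts) provides. The convention set up earlier in the paper, that the optimal algorithm is modified to never delete the ``extra'' symbols introduced by random deletion into a block, is also essential so that each $d_a$ remains a valid charge against the true optimum $d$ when we sum $\sum_a d_a \le d$.
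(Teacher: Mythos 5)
Your proposal is correct and follows essentially the same route as the paper's proof: bound the output cost by $\beta(n)\sum_a \stred(Z_{a,\min},X_a)$, apply Lemma \ref{lemma:stred} blockwise, sum to get $d+2\sum_a |A_a^{OPT}\setminus A_a|$, and invoke Corollary \ref{cor:diff} (which holds uniformly over all $l$ with the stated probability) together with Lemma \ref{lemma:runningtime} for the running time. The only cosmetic difference is that you track the constants slightly more loosely (the paper writes the explicit bound $7zd\sqrt{2\log d}$), which does not affect the $O(z\beta(n)\sqrt{\log d})$ conclusion.
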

\begin{proof}
The edit distance computed by the refined algorithm is at most $\sum_{a=1}^{z} \stred(Z_{a,min},X_a)$. Hence by Lemma \ref{lemma:stred} the computed edit distance (assuming we use optimal algorithm for
\stred~) is at most
$$\sum_{a=1}^{z} d_a + |A_{a-1}^{OPT}\setminus A_{a-1}|+|A_{a}^{OPT}\setminus A_{a}|=d+2\sum_{a=1}^{z} |A_{a}^{OPT}\setminus A_{a}|$$
$$\leq 7zd\sqrt{2\log{d}}, \mbox{ by Corollary \ref{cor:diff}.}$$

Since, we use a $\beta(n)$-approximate algorithm for \stred~ and that runs in $\alpha(n)$ time, we get an $O(\beta(n) z\sqrt{\log{d}})$-approximation with running time  $O(n\log{n}+\alpha(n))$ (Lemma \ref{lemma:runningtime}).
Hence the theorem follows.
\end{proof}

\noindent{\bf Note.}
 Instead of repeating different portions of random deletion process and then stitching the random deletions corresponding to $Z_{a,min}, a=1,2,..,z$, we can simply repeat the entire {\em Random-deletion} $\Theta(\log{n})$ time. For each repetition, run the entire algorithm and finally return the edit distance corresponding to the iteration that returns the minimum value. We get the same approximation factor and asymptotic running time by doing this. However, we need to calculate \stred~in each iterations, which is not the case in the described algorithm. Therefore, in practice, we save some running time.

\section{Further Refinement: Main Algorithm \& Analysis}
\label{section:main}
\sloppy
We again view our input $\sigma$ as $Y_1X_1Y_2X_2...Y_zX_z$, where each $Y_a$ is a sequence of open parentheses and each $X_a$ is a sequence of
close parentheses. Each $Y_a,X_a$, $a=1,2,..,z$, is a block. We know by preprocessing, $z \leq d$, where $d$ is the optimal edit distance to \dy$(s)$.

As before, we first run the process of {\bf Random-deletion}. For each run of {\em Random-deletion}, the algorithm proceeds in phases with at most $\lceil \log_{2}{z} \rceil +1$ phases.
We repeat this entire procedure $3\log_{b}{n}$ times, $b=1.24$ (as before) and return the minimum
edit distance computed over these runs and the corresponding well-formed string.
We now describe the algorithm corresponding to a single run of {\em random-deletion} (also shown pictorially in Figure \ref{fig:algo3}).

Let us use $X_{a}^{1}=X_{a}, Y_{a}^{1}=Y_{a}$ to denote the blocks in the first phase.
Consider the part of {\em Random-deletion} from the start of processing $X_a^1$ to finish either $X_a^1$ or $Y_a^1$ whichever happens first.
 Since this part of the random deletion (from the start of $X_a^1$ to the completion of either $X_a^1$ or $Y_a^1$) remains confined within block $Y_a^1X_a^1$, we call this part {\em $local^1$ to block $a$}. Let $A_{a}^{local^1}$ denote the indices of all the symbols that are matched or deleted during the $local^1$ steps in block $a$. Let $A_{a}^{OPT, local^1}$ be the union of $A_{a}^{local^1}$ and the indices of symbols that are matched with some symbol with indices in $A_{a}^{local^1}$ in the optimal solution. We call $A_{a}^{OPT,local^1} \setminus A_{a}^{local^1}$ the $local^1$ error, denoted $local\mbox{-}error^1(Y_a^1,X_a^1)$.

Create substrings $L_{a}^1$ corresponding to $local^1$ moves in block $a$, $a=1,..,z$. Compute {\sc StrEdit} between $L_{a}^1\cap Y_{a}^1$ to $L_a^1 \cap X_a^1$. Remove all these substrings from
further consideration. The phase $1$ ends here.

 We can now view the remaining string as $Y_1^{2}X_1^{2}Y_2^{2}X_2^{2}...Y_{z^{2}}^{2}X_{z^{2}}^{2}$, after deleting any prefix of open parentheses and any suffix of close parentheses.
 Consider any $Y_a^2,X_a^2$. Let they span the original blocks $Y_{a_1}X_{a_1}Y_{a_1+1}X_{a_1+1}...Y_{a_2}X_{a_2}$.
 Consider the part of {\em Random-deletion} from the start of processing $X_{a_1}$ to the completion of either $Y_{a_1}$ or $X_{a_2}$ whichever happens first.
 Since this part of the random deletion remains confined within block $Y_a^2X_a^2$, we call this part {\em $local^2$ to block $a$}. Let $A_{a}^{local^2}$ denote the indices of all the symbols that are matched or deleted during the $local^2$ steps in block $a$. Let $A_{a}^{OPT, local^2}$ be the union of $A_{a}^{local^2}$ and the indices of symbols that are matched with some symbol with indices in $A_{a}^{local^2}$ in the optimal solution. We call $A_{a}^{OPT,local^2} \setminus A_{a}^{local^2}$ the $local^2$ error, denoted $local\mbox{-}error^2(Y_a^2,X_a^2)$.

 Create substrings $L_{a}^2$ corresponding to $local^2$ moves in block $a$, $a=1,..,z^2$. Compute {\sc StrEdit} between $L_{a}^2\cap Y_{a}^2$ to $L_a^2 \cap X_a^2$. Remove all these substrings from
 further consideration.

 We continue in this fashion until the remaining string becomes empty. We can define $local^i$ moves, $A_{a}^{local^i}$, $A_{a}^{OPT,local^i}$ and  $local\mbox{-}error^i(Y_a^i,X_a^i)$ accordingly.

 \begin{definition}
 For $i$th phase blocks $Y_{a}^{i}X_{a}^{i}$, if they span original blocks $Y_{a_1}X_{a_1}Y_{a_1+1}X_{a_1+1}...Y_{a_2}X_{a_2}$, then
 part of random deletion from the start of processing $X_{a_1}$ to finish either $Y_{a_1}$ or $X_{a_2}$ whichever happens first, remains confined in block $Y_{a}^{i}X_{a}^{i}$ and
 is defined as $local^i$ move.
 \end{definition}

 \begin{definition}
 For any $i \in \mathbb{N}$, $A_{a}^{local^i}$ denote the indices of all the symbols that are matched or deleted during the $local^i$ steps in block $a$.
 \end{definition}

 \begin{definition}
For any $i \in \mathbb{N}$, $A_{a}^{OPT, local^i}$ denote the union of $A_{a}^{local^i}$ and the indices of symbols matched with some symbol in $A_{a}^{local^i}$.
 \end{definition}

 \begin{definition}
For any $i \in \mathbb{N}$, $A_{a}^{OPT,local^i} \setminus A_{a}^{local^i}$ is defined as the $local^i$ error, $local\mbox{-}error^i$.
 \end{definition}

We now summarize the algorithm below.

\noindent{\bf Algorithm:}

Given the input $\sigma=Y_1X_1Y_2X_2...Y_zX_z$, the algorithm is as follows
\begin{itemize}
\item $MinEdit=\infty$,
\item For $iteration=1, iteration \leq 3\log_{b}{n}, iteration++$
\begin{itemize}
\item Run {\bf Random-deletion} process.
\item Set $i=1, z^1=z$, $edit=0$, and  for $a=1,2,...,z^1$, $X_{a}^{1}=X_{a}, Y_{a}^{1}=Y_a$.
\item While $\sigma$ is not empty
\begin{itemize}
\item Consider the part of random-deletion from the start of processing $X_{a}^{i}$ to finish either $X_{a}^{i}$ or $Y_{a}^{i}$ whichever happens first.
\item Create substrings $L_{a}^i$, $a=1,2,..,z^i$ which correspond to $local^{i}$ moves. Compute ${\sc StrEdit}(L_{a}^i\cap Y_{a}^i, L_a^i \cap X_a^i)$ to match $L_{a}^i\cap Y_{a}^i$ to $L_a^i \cap X_a^i$ and add the required number of edits to $edit$.
\item Remove $L_a^i$, $a=1,2,..,z^i$ from $\sigma$, write the remaining string as $Y_1^{i+1}X_1^{i+1}Y_2^{i+1}X_2^{i+1}...Y_{z^{i+1}}^{i+1}X_{z^{i+1}}^{i+1}$, possibly by deleting any prefix of close parentheses and any suffix of open parentheses. The number of such deletions is also added to $edit$. Set $i=i+1$
\end{itemize}
\item End While
\item If ($edit < MinEdit$) set $MinEdit=edit$
\item End For
\end{itemize}
\item Return $MinEdit$ as the computed edit distance.
\end{itemize}

Of course, the algorithm can compute the well-formed string by editing the parentheses that have been modified in the process through {\sc StrEdit} operations.

%

\begin{lemma}
 \label{lemma-algo3:noofdeletions}
 There exists at least one iteration among $3\log_{b}{n}$, such that for all $a' \leq b'$, (P1) the number of deletions made by random deletion process between the start of processing $X_{a'}$ and finishing either $X_{b'}$ or $Y_{a'}$ whichever happens first is at most $2d(a',b')^2$, where
 $d(a',b')$ is the number of deletions the optimal algorithm does starting from the beginning of $X_{a'}$ to complete either $X_{b'}$ or $Y_{a'}$ whichever happens first with probability at least $\left(1-\frac{1}{n}\right)$.
 \end{lemma}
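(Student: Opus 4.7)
The plan is to reduce to the single-pair, single-iteration case by reusing the Gambler's Ruin coupling of Section~\ref{subsec:map}, then amplify by independence of iterations and close with a union bound over pairs.

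First I would fix one iteration $i$ and one pair $(a',b')$. The stretch of the Random-deletion process from the moment the algorithm first reads a symbol of $X_{a'}$ until the first moment either $X_{b'}$ or $Y_{a'}$ is exhausted can be analyzed in isolation: every comparison-that-leads-to-a-deletion happening in this window is a Random-deletion coin flip, and the ``correct move exists'' argument (the lemma just before Lemma~\ref{lemma:diff}) applies locally to this restricted segment against the restriction of the fixed optimal stack-based algorithm. Thus, exactly as in Section~\ref{subsec:map}, the number of deletions performed inside the segment is dominated by the hitting time of an unbiased $\pm 1$ random walk started at position $d(a',b')$, since each Random-deletion coin flip goes toward the origin with probability at least $1/2$.

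Applying Lemma~\ref{lemma:2} to this walk, the segment finishes in at most $2d(a',b')^2$ deletions with probability at least $0.194=1-1/b$. Because the $3\log_b n$ iterations of Random-deletion are independent, the probability that pair $(a',b')$ violates its bound in every single iteration is at most $(1-1/b)^{3\log_b n}=n^{-3}$. Finally I would union-bound over all pairs $a'\le b'$: there are at most $\binom{z+1}{2}\le z^2\le n^2$ such pairs (using $z\le d\le n$ from the preprocessing), so the probability that some pair has no good iteration is at most $n^2\cdot n^{-3}=1/n$. This gives the $(1-1/n)$ guarantee claimed in the lemma.

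The main obstacle is the tension between the per-pair Gambler's Ruin bound, which yields only constant success probability per iteration, and the need to cover $\Theta(z^2)$ pairs simultaneously; the choice of $3\log_b n$ repetitions with $b=1/(1-0.194)$ is precisely calibrated so that amplification to $1/n^3$ per pair plus the union bound over $\le n^2$ pairs just closes. A secondary subtlety is that the lemma is phrased as ``there exists a single iteration good for all pairs,'' while the union bound above strictly gives ``for every pair there exists an iteration.'' This is sufficient for the downstream analysis because the algorithm returns the minimum edit distance over iterations and the bounds are only invoked on the $O(z)$ pairs actually arising from the phase decomposition; if a stronger simultaneous statement is required, one may reduce the union bound to those $O(z)$ algorithm-relevant pairs and restate the amplification accordingly.
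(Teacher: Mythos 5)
Your proof follows essentially the same route as the paper's: couple each local stretch (from the start of $X_{a'}$ to exhausting $Y_{a'}$ or $X_{b'}$) to the {\sc Gambler's Ruin} walk of Lemma~\ref{lemma:2} started at $d(a',b')$, amplify over the $3\log_{b}{n}$ independent iterations to failure probability $n^{-3}$ per pair, and union bound over the fewer than $z^2$ pairs to obtain the $1-\frac{1}{n}$ guarantee. The quantifier-order subtlety you flag (``for every pair some iteration is good'' versus ``some iteration is good for all pairs'') is present, and left unaddressed, in the paper's own proof, so your treatment is if anything slightly more explicit about it.
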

 \begin{proof}
 Consider the random source $\mathcal{S}$ that supplies the random coins for deciding which directional deletions to execute, and consider the outcomes of the random source for
 each of the $3\log_{b}{n}$ iterations of {\em Random-deletion}.

 Now consider any $a',b', a \leq b'$. Suppose that the optimal algorithm finishes $Y_{a'}$ first (the argument when the optimal algorithm finishes $X_{b'}$ first is identical), and
 let the processing on $Y_{a'}$ completes while comparing with $X_{b''}$, $a' \leq b'' \leq b'$. Consider this portion of the substring, and let the number of deletions
 performed by the optimal algorithm in this substring be $\gamma$. Then the number of deletions made by the random deletion process to finish processing $Y_{a'}$ is at most the hitting time of an one dimensional random walk starting from position $\gamma$. From Lemma \ref{lemma:2}, the hitting time of a random walk starting from $\gamma$ is at most $2\gamma^2$ with probability at least $0.194$. Let $b=\frac{1}{(1-0.194)}=1.24$. If, we repeat the process $3\log_{b}{n}$ times, then the probability that there exists at least one iteration with hitting time not more than $2\gamma^2$ is at least  $1-(1-0.194)^{3\log_{b}{n}}=1-\frac{1}{n^3}$. Now use the outcomes of the random source that have been applied for repeating the entire random walk corresponding to the portion of the random walk starting from $X_{a'}$ and finishing either $X_{b'}$ or $Y_{a'}$ whichever happens first. This has the same effect as repeating the process only for the portion of the random walk.

 There are at most $z+(z-1)+(z-2)+...+1< z^2$ possible values of $a',b'$. Hence the probability that there exists at least one iteration in which (P1) holds for all $a',b'$ is by union bound at least $1-\frac{z^2}{n^3}=\left(1-\frac{1}{n}\right)$.

 \end{proof}

\begin{figure}[t!]
  \centering
  \includegraphics[width=\textwidth]{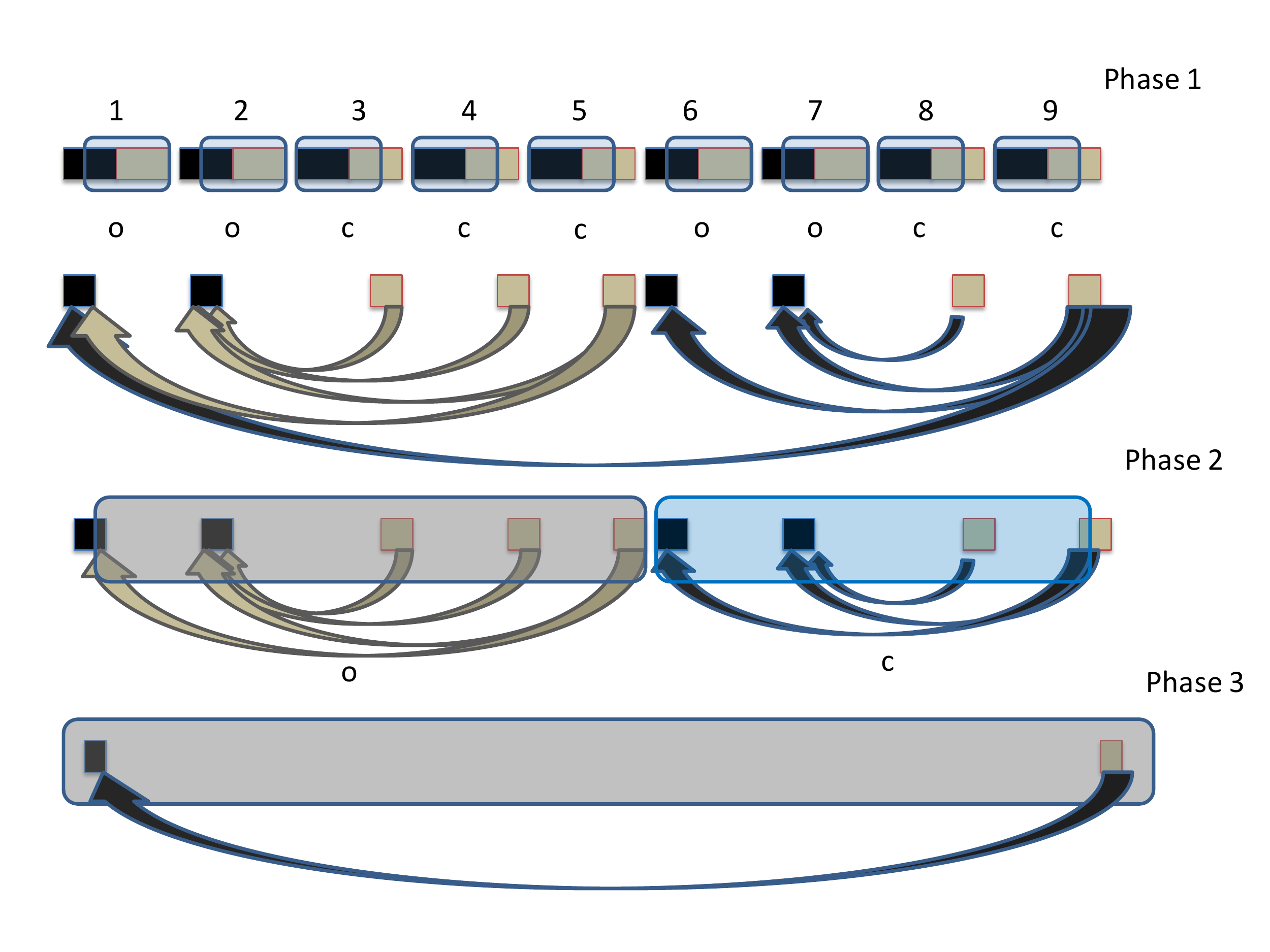}\\
  \caption{Phase by phase execution of the main algorithm}\label{fig:algo3}
\end{figure}

%
\subsection{Analysis of Approximation Factor}
\begin{lemma}
\label{lemma:phaseL-edit}
Considering phase $l$ which has $z^l$ blocks, the total edit cost paid in phase $l$ of the returned solution is
$\sum_{a=1}^{z^l} \stred{(L_a^l\cap Y_{a}^l, L_a^l\cap X_{a}^l)}  \leq \beta(n)(d + l*d\sqrt{20\log{d}})$ with probability at least $1-\frac{1}{n}-\frac{1}{d}$.
\end{lemma}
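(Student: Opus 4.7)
The plan is to mimic the two-stage argument used for the refined algorithm (Lemma \ref{lemma:stred} plus Corollary \ref{cor:diff}), but applied phase-by-phase so that a single phase only contributes one "local" slack term rather than the accumulated $z$-factor slack from the refined algorithm.

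First I would establish a per-block analogue of Lemma \ref{lemma:stred}. For each block $Y_a^l X_a^l$ in phase $l$, I will write
\[
\stred(L_a^l\cap Y_a^l,\ L_a^l\cap X_a^l) \le d_a^l + \bigl|A_a^{OPT,local^l}\setminus A_a^{local^l}\bigr| + R_{a}^{l-1},
\]
where $d_a^l$ is the number of OPT edits whose endpoints both lie in $A_a^{local^l}$ and $R_{a}^{l-1}$ counts the symbols in the current block that OPT matches to indices that were already removed in phases $1,\dots,l-1$. The decomposition is exactly as in Lemma \ref{lemma:stred}: the symbols in $L_a^l\cap Y_a^l$ that have no OPT-partner inside $L_a^l\cap X_a^l$ are accounted for either by the $local^l$ error (partner still in the remaining string) or by $R_a^{l-1}$ (partner already handled). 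Summing over the $z^l$ blocks and noting that the $d_a^l$ pieces are OPT operations on disjoint index ranges, $\sum_a d_a^l \le d$.

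Next I would control the two error terms via the random walk. Conditioning on the high-probability event of Lemma \ref{lemma-algo3:noofdeletions}, the number of deletions inside any prefix of $local^l$ moves is at most $2d(a',b')^2$. Treating the sequence of $local^l$ deletions as a martingale (each deletion is a wrong move with probability at most $1/2$ by the ``correct move exists'' lemma), Azuma's inequality gives
\[
W^{local^l}_{a}-C^{local^l}_{a} \le 2 \bigl(\textstyle\sum_{a'\le a} d_{a'}^l\bigr)\sqrt{2\log d}
\]
block-by-block with failure probability $1/d^2$ per block, analogously to Lemma \ref{lemma:dev}. Combined with the deletion-count bound, this yields $\sum_a |A_a^{OPT,local^l}\setminus A_a^{local^l}| \le O\bigl(d\sqrt{\log d}\bigr)$ contributed by phase $l$ alone. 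The carry-over term $\sum_a R_a^{l-1}$ is bounded inductively: whatever boundary errors phase $l-1$ produced become phase-$l$ carry-over, and by induction the total carry-over after $l-1$ phases is at most $(l-1)\cdot O(d\sqrt{\log d})$. Adding the new phase-$l$ contribution and choosing the constant so that $\sqrt{2\log d}$ absorbs into $\sqrt{20\log d}$ gives the claimed $d + l\,d\sqrt{20\log d}$ bound. Multiplying by the StrEdit approximation factor $\beta(n)$ produces the stated bound on the returned cost. Union-bounding the failure probabilities ($1/n$ for the deletion-count event and $z^l/d^2 \le 1/d$ for the Azuma tails across the at most $d$ blocks in phase $l$) gives success probability $1-1/n-1/d$.

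The hard part will be pinning down the carry-over $R_a^{l-1}$ precisely and proving it behaves inductively. The subtlety is that the "extra'' symbols that phase $l-1$ left in the residual string (by overshooting a boundary) can appear inside block $a$ of phase $l$ without being OPT-matched to anything inside phase $l$; I must carefully use the convention stated earlier in Section \ref{section:algo2} (w.l.o.g.\ OPT deletes none of the extra symbols) to charge each such symbol once, and argue that the per-phase $O(d\sqrt{\log d})$ overshoot is the only new contribution, so the carry-over grows at most additively in $l$ rather than multiplicatively. Once that induction is set up correctly, the rest is a clean combination of the per-block decomposition, Azuma, and the union bound.
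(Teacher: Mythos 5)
Your proposal follows essentially the same route as the paper's proof: the per-block decomposition into OPT edits plus current-phase local error plus carry-over from earlier phases is exactly the paper's $d_{g}^{h}+|D(R,T)|+|E(R,T)|$ split, the local error is controlled by Lemma \ref{lemma-algo3:noofdeletions} plus an Azuma/Chernoff bound just as in the paper, and your charging of carried-over symbols (each charged once to the earlier phase whose local error created it, giving additive growth $(l-1)\cdot O(d\sqrt{\log d})$) is precisely the paper's argument for bounding $|E(R,T)|$ by the sum of earlier phases' local errors. The union-bound and the final multiplication by $\beta(n)$ also match, so this is the same proof in substance.
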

\begin{proof}
Consider the iteration in which Lemma \ref{lemma-algo3:noofdeletions} holds, that is we have property (P1). We again fix an optimal stack based algorithm and refer to it as
the optimal algorithm.\\
\noindent{{\bf Phase 1.}}

Consider $Y_{a}^{1},X_{a}^{1}$. We know $Y_{a}^1=Y_a$ and $X_a^1=X_a$.

\begin{itemize}
\item If no symbols of either of $Y_a$ or $X_a$ is matched by the optimal algorithm outside of block $a$ (that is they are matched to each other),
then let $d_a^1$ denote the optimal edit distance to match $Y_a$ with $X_a$.

By Lemma \ref{lemma-algo3:noofdeletions}, the random deletion process takes at most $2(d_a^1)^2$ steps within
$local^1_a$ moves. Now from Lemma \ref{lemma:diff}, local error, $A_{a}^{OPT,local^1} \setminus A_{a}^{local^1} \leq d_{a}^1+W_a^{local^1}-C_{a}^{local^1}$ where $W_a^{local^1}$ is the number of wrong moves taken during $local^1$ steps in block $a$ and similarly $C_{a}^{local^1}$ is the number of correct steps taken during the $local^1$ steps in block $a$. Since the total number of local steps is at most $2(d_{a}^1)^{2}$ and wrong steps are taken with probability at most $\frac{1}{2}$, hence by standard application of Chernoff bound or by Azuma's inequality as in Lemma \ref{lemma:dev},
local error, $local\mbox{-}error^1(Y_a^1,X_a^1) \leq d_{a}^1+d_{a}^1\sqrt{12\log{d_a^1}}\leq  d_{a}^1\sqrt{20\log{d_a^1}}$ with probability at least $1-\frac{1}{d^3}$.

Hence
$\stred{(L_a^1\cap Y_{a}^1, L_a^1\cap X_{a}^1)} \leq d_a+ local\mbox{-}error^1(Y_a^1,X_a^1) \leq d_{a}^1+ d_{a}^1\sqrt{20\log{d_a^1}}$.

\item If some suffix of $X_a^1$ is matched outside of block $a$, then let $X_a^{1,p}$ be the prefix of $X_a^1$ which is matched inside. Consider $Y_a^1X_a^{1,p}$.
Let $d_a^1$ denote the optimal edit distance to match $Y_a$ with $X_a^{p}$. Now again, the total number of local steps is at most $2(d_{a}^1)^{2}$ and wrong steps are taken with probability at most $\frac{1}{2}$, hence by standard application of Chernoff bound or by Azuma's inequality as in Lemma \ref{lemma:dev},
local error, $local\mbox{-}error^1(Y_a^1,X_a^1) \leq d_{a}^1+d_{a}^1\sqrt{12\log{d_a^1}}\leq d_{a}^1\sqrt{20\log{d_a^1}} $ with probability at least $1-\frac{1}{d^3}$.

Hence again
$\stred{(L_a^1\cap Y_{a}^1, L_a^1\cap X_{a}^1)} \leq d_a+ local\mbox{-}error^1(Y_a^1,X_a^1) \leq d_{a}^1+d_{a}^1\sqrt{20\log{d_a^1}}$.

\item If some prefix of $Y_a^1$ is matched outside of block $a$, then let $Y_a^{1,s}$ be the suffix of $Y_a^1$ which is matched inside. Consider $Y_a^{1,a}X_a^{1}$ and let
$d_a^1$ denote the optimal edit distance to match $Y_a^{1,s}$ with $X_a$. Again similar arguments lead to
$\stred{(L_a^1\cap Y_{a}^1, L_a^1\cap X_{a}^1)} \leq d_a^1+ local\mbox{-}error^1(Y_a^1,X_a^1) \leq d_{a}^1+d_{a}^1\sqrt{20\log{d_a^1}}$.
\end{itemize}

Hence due to phase 1, the total edit cost paid is at most $\sum_{a=1}^{z} \stred{(L_a^1\cap Y_{a}^1, L_a^1\cap X_{a}^1)}  \leq \sum_{a=1}^{z} d_{a}^1+d_{a}^1\sqrt{20\log{d_a^1}} \leq d + d\sqrt{20\log{d}}.$\\

\noindent{\bf Phase 2.}

Consider $Y_{a}^2,X_{a}^2$. Let $Y_a^2,X_a^2$ spans original blocks $Y_{g}X_{g}Y_{g+1}X_{g+1}....,Y_{h}X_{h}$ (we drop the superscript $1$ for the original blocks).

\begin{itemize}
\item If both $X_a^2$ and $Y_a^2$ are matched by the optimal algorithm inside $g$ to $h$ blocks, then let $d_{g}^{h}$ denote the optimal edit distance to match $Y_{g}X_{g}Y_{g+1}X_{g+1}....,Y_{h}X_{h}$.
By Lemma \ref{lemma-algo3:noofdeletions}, the random deletion process takes at most $2(d_{g}^{h})^2$ steps within blocks $g$ to $h$. Hence
$local\mbox{-}error^2(Y_a^2,X_a^2)=A_{a}^{OPT,local^2} \setminus A_{a}^{local^2}\leq d_{g}^{h}+ d_{g}^{h}\sqrt{12\log{d_{g}^{h}}}\leq d_{g}^{h}\sqrt{20\log{d_{g}^{h}}}$.
Suppose, {\em Random-deletion} selects $R \subset Y_{a}^{2}$ and $T \subset X_{a}^{2}$ to match. Note that either $R=Y_{a}^{2}$ or $T=X_{a}^{2}$.
Let $D(R,T)$ denote all the symbols in $R,T$ such that their matching parentheses belong to blocks either outside of index $[g,h]$ or they exist at phase 2 but are not included.
Let $E(R,T)$ denote all the symbols in $R,T$ such that their matching parentheses belong to blocks $[g,h]$ but have already been deleted in phase 1.

Then $$\stred(L_a^2\cap Y_a^2, L_a^2 \cap Y_a^2) \leq d_{g}^{h}+|D(R,T)|+|E(R,T)|$$
Now, $$|D(R,T)| \leq  local\mbox{-}error^2(Y_a^2,X_a^2)=A_{a}^{OPT,local^2} \setminus A_{a}^{local^2}\leq d_{g}^{h}\sqrt{20\log{d_{g}^{h}}}$$
For each $x \in E(R,T)$ either its matching parentheses $y$ belongs to the same block in phase $1$, say $a_i\in [g,h]$, or in different blocks, say $a_i$ and $a_{i'} \in [g,h]$.
In the first case, $x$ can be charged to some $y$ (with which it matches) which contributes $1$ to $local\mbox{-}error^1(Y_{a_i}^1,X_{a_i}^1)$. In the second case,
$x$ can again be charged to some $y$ which contributes $1$ to $local\mbox{-}error^1(Y_{a_{i'}}^1,X_{a_{i'}}^1)$. Hence,
$$|E(R,T)| \leq \sum_{i=g}^{h}local\mbox{-}error^1(Y_i^1,X_i^1).$$
Thus, we get
$$\stred(L_a^2\cap Y_a^2, L_a^2 \cap Y_a^2) \leq d_{g}^{h} + local\mbox{-}error^2(Y_a^2,X_a^2)+\sum_{i=g}^{h}local\mbox{-}error^1(Y_i^1,X_i^1)$$
$$\leq d_{g}^{h}+ d_{g}^{h}\sqrt{20\log{d_{g}^{h}}}+ d_{g}^{h}\sqrt{20\log{d_{g}^{h}}}$$
$$\leq d_{g}^{h}+ 2 d_{g}^{h}\sqrt{20 \log{d_{g}^{h}}}$$

\item If some symbol of $X_{a}^2$ is matched outside, let us denote the set of those indices by $\Upsilon$. Consider the subsequence of $X_{a}^{2}$ excluding $\Upsilon$. Let us denote it by $X_{a}^{2,\bar{\Upsilon}}$.
Consider $Y_a^2 X_{a}^{2,\bar{\Upsilon}}$, and proceed as in the previous case.

\item If some symbol of $Y_{a}^2$ is matched outside, let us denote the set of those indices by $\Upsilon$. Consider the subsequence of $Y_{a}^{2}$ excluding $\Upsilon$. Let us denote it by $Y_a^{2,\bar{\Upsilon}}$.
Consider $Y_a^{2,\bar{\Upsilon}} X_a^{2}$, and proceed as in the previous case.
\end{itemize}

Hence due to phase 2, the total edit cost paid is at most $\sum_{a=1}^{z^2} \stred{(L_a^1\cap Y_{a}^1, L_a^1\cap X_{a}^1)}  \leq d + 2 d\sqrt{20 \log{d}}$. \\

\noindent{\bf Phase l.}

Proceeding exactly as in Phase 2, let $Y_a^l$ and $X_a^l$ contains blocks from index $[g^{l-1},h^{l-1}]$ of level $l-1$, all of which together contain blocks $[g^{l-2},h^{l-2}]$ from
level $l-2$ and so on, finally $[g^1,h^1]=[g,h]$ of blocks from level 1. Let $d_{g}^{h}$ denote the optimal edit distance to match $Y_a^l$ and $X_a^l$ excluding the
symbols that are matched outside of blocks $[g,h]$ by the optimal algorithm.

Suppose, {\em Random-deletion} selects $R \subset Y_{a}^{l}$ and $T \subset X_{a}^{l}$ to match. Note that either $R=Y_{a}^{L}$ or $T=X_{a}^{L}$.
Let $D(R,T)$ denote all the symbols in $R,T$ such that their matching parentheses belong to blocks either outside of index $[g,h]$ or they exist at phase $l$ but are not included.
Let $E(R,T)$ denote all the symbols in $R,T$ such that their matching parentheses belong to blocks $[g,h]$ but have already been deleted in phases $1,2,..,l-1$.

Then $$\stred(L_a^l\cap Y_a^l, L_a^l \cap Y_a^l) \leq d_{g}^{h}+|D(R,T)|+|E(R,T)|$$
Now, $$|D(R,T)| \leq  local\mbox{-}error^l(Y_a^l,X_a^l)=A_{a}^{OPT,local^l} \setminus A_{a}^{local^l}\leq d_{g}^{h}+ d_{g}^{h}\sqrt{12\log{d_{g}^{h}}}\leq d_{g}^{h}\sqrt{20\log{d_{g}^{h}}}.$$

For each $x \in E(R,T)$, consider the largest phase $\eta \in [1,2,..,l-1]$ such that its matching parenthesis $y$ existed before start of the $\eta$th phase but does not exist after the end of the
 $\eta$th phase. It is possible, either $y$ belongs to the same block in phase $\eta$, say $r\in [g^{\eta},h^{\eta}]$, or in different blocks, say $r$ and $s \in [g^{\eta},h^{\eta}]$.
In the first case, $x$ can be charged to $y$ which contributes $1$ to $local\mbox{-}error^{\eta}(Y_{r}^{\eta},X_{r}^{\eta})$. In the second case,
$x$ can again be charged to $y$ which contributes $1$ to $local\mbox{-}error^{\eta}(Y_{s}^{\eta},X_{s}^{\eta})$.

Hence,
$$|E(R,T)| \leq \sum_{j=l-1}^{1}\sum_{i=g^{j}}^{h^{j}}local\mbox{-}error^j(Y_i^j,X_i^j).$$
$$\stred(L_a^l\cap Y_a^l, L_a^l \cap Y_a^l) \leq d_{g}^{h} + local\mbox{-}error^l(Y_a^l,X_a^l)+\sum_{j=l-1}^{1}\sum_{i=g^{j}}^{h^{j}}local\mbox{-}error^j(Y_i^j,X_i^j)$$
$$\leq d_{g}^{h}+ d_{g}^{h}\sqrt{20\log{d_{g}^{h}}}+(l-1)* d_{g}^{h}\sqrt{20\log{d_{g}^{h}}} \leq d_{g}^{h}+ ld_{g}^{h}\sqrt{20\log{d_{g}^{h}}}.$$

Hence due to phase $l$, the total edit cost paid is at most $$\sum_{a=1}^{z^l} \stred{(L_a^l\cap Y_{a}^l, L_a^l\cap X_{a}^l)}  \leq d + l*d\sqrt{20\log{d}}.$$

Now, since we are using a $\beta(n)$-approximation algorithm for {\sc \stred}, we get the total edit cost paid during phase $l$ is at most $\beta(n)(d + l*d\sqrt{20\log{d}})$.

For the above bound to be correct, all the $local\mbox{-}error$ estimates have to be correct. The number of blocks reduces by $\frac{1}{2}$ from one phase to the next. Hence, the total number of
 local error estimates is $\Theta(z)$. We have considered the iteration such that property (P1) holds for all sequence of blocks (see Lemma \ref{lemma-algo3:noofdeletions}).
Given (P1) holds, since there are a total of $\Theta(z)$ blocks over all the phases, with probability at least $1-\frac{\Theta(z)}{d^3} > 1-\frac{1}{d}$, all the $local\mbox{-}error$ bounds used in the analysis are
correct. Since, (P1) holds with probability at least $(1-\frac{1}{n})$, with probability at least $\left(1-\frac{1}{n}\right)\left(1-\frac{1}{d}\right)>1-\frac{1}{n}-\frac{1}{d}$, we get a total edit cost paid during phase $l$ is at most $\beta(n)(d + l*d\sqrt{20\log{d}})$.
\end{proof}
\begin{lemma}
\label{lemma:totaleditcost}
The total edit cost paid is at most $O((\log{z})^2 \beta(n)\sqrt{\log{d}})$ with probability at least $1-\frac{1}{n}-\frac{1}{d}$.
\end{lemma}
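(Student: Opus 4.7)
The plan is to sum the per-phase edit cost bound from Lemma \ref{lemma:phaseL-edit} over all phases executed by the algorithm, and to verify that the stated probability bound holds simultaneously across these phases without any further union bounding.

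First I would bound the number of phases by $\lceil \log_{2}{z} \rceil + 1$. This uses the halving property of the algorithm: at the end of phase $l$ every remaining block is either of type {\sc O} (only open parentheses) or of type {\sc C} (only close parentheses), and consecutive blocks of the same type are merged into a single block in the next phase. Hence any two adjacent non-empty blocks at phase $l+1$ come from two distinct non-empty blocks of phase $l$ of alternating types, so $z^{l+1} \leq \lceil z^l / 2 \rceil$, and after $\lceil \log_{2}{z} \rceil$ such halvings at most one block remains, which the next phase clears. Note that the prefix/suffix trimming performed between phases only deletes leading close or trailing open symbols, so it cannot increase the block count.

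Next, by Lemma \ref{lemma:phaseL-edit}, conditional on the good iteration in which property (P1) of Lemma \ref{lemma-algo3:noofdeletions} holds and all $local\mbox{-}error$ estimates are simultaneously correct, the edit cost paid in phase $l$ is at most $\beta(n)\bigl(d + l\cdot d\sqrt{20\log{d}}\bigr)$. Summing over $l=1,2,\ldots,\lceil\log_2 z\rceil+1$ gives
\[
\sum_{l=1}^{\lceil \log_{2}{z} \rceil+1}\beta(n)\bigl(d + l\cdot d\sqrt{20\log{d}}\bigr) \;=\; O\bigl(\beta(n)\, d\log{z}\bigr) + O\bigl(\beta(n)\, d\sqrt{\log{d}}\,(\log{z})^2\bigr) \;=\; O\bigl((\log{z})^2\, \beta(n)\, d\,\sqrt{\log{d}}\bigr),
\]
which matches the claimed bound (the optimum $d$ factor is absorbed into the total-cost convention used in the lemma statement, since elsewhere the result is phrased as an approximation ratio).

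For the probability, I would rely on the fact that the proof of Lemma \ref{lemma:phaseL-edit} already performs its union bound over \emph{all} $\Theta(z)$ blocks appearing across every phase within the good iteration, so the joint event that every per-phase cost bound holds occurs with probability at least $1-\tfrac{1}{n}-\tfrac{1}{d}$ already. The hard part will be conceptual rather than computational: one must carefully interpret Lemma \ref{lemma:phaseL-edit} as a simultaneous-over-phases statement rather than a per-phase statement that would require re-union-bounding over $\Theta(\log z)$ phases and degrade the probability. Once that interpretation is fixed, the remaining work is only the routine arithmetic above and the halving bookkeeping.
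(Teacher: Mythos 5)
Your proposal is correct and follows essentially the same route as the paper: sum the per-phase bound $\beta(n)\bigl(d + l\cdot d\sqrt{20\log d}\bigr)$ of Lemma \ref{lemma:phaseL-edit} over the at most $\lceil \log_2 z\rceil+1$ phases guaranteed by the block-halving property, yielding $O\bigl(\beta(n)(\log z)^2 d\sqrt{\log d}\bigr)$. Your observation that the probability $1-\frac{1}{n}-\frac{1}{d}$ needs no further union bound over phases—because the proof of Lemma \ref{lemma:phaseL-edit} already union-bounds over the $\Theta(z)$ blocks across all phases within the good iteration—is exactly the (implicit) reasoning the paper uses, stated more explicitly.
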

\begin{proof}
By Lemma \ref{lemma:phaseL-edit}, summing up to and including phase l, the total edit cost paid is at most $O(\beta(n)l^2 d \sqrt{\log{d}})$ with probability $1-\frac{1}{n}-\frac{1}{d}$ when the number of phases is $l$.
Now each phase reduces the number of blocks at least by a factor of $2$. Hence the total number of phases $\leq \lceil \log{z}\rceil +1$. Therefore, total edit cost paid is at most $O(\beta(n)(\log{z})^2 d \sqrt{\log{d}})$.
\end{proof}

\begin{theorem}
There exists an algorithm that obtains an $O(\beta(n)\sqrt{\log{d}}(\log{z})^2)$-approximation factor for edit distance computation  from strings  to \dy$(s)$ for any $s \geq 2$ in $O(n\log{n}+\alpha(n))$ time with probability at least $\left(1-\frac{1}{n}-\frac{1}{d}\right)$, where there exists an algorithm for {\sc StrEdit} running in $\alpha(n)$ time that achieves an approximation factor of $\beta(n)$.
\end{theorem}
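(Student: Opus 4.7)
The plan is to assemble the final approximation bound from the three preceding technical lemmas and then handle the running time separately. First, I would invoke Lemma \ref{lemma-algo3:noofdeletions} to conclude that, among the $3\log_b n$ outer repetitions of the main algorithm, there exists with probability at least $1-1/n$ at least one iteration in which the key event (P1) holds: for every pair $(a',b')$ the number of random-deletion steps between the start of $X_{a'}$ and the completion of the earlier of $X_{b'}$ and $Y_{a'}$ is bounded by twice the square of the corresponding optimal deletion count. All subsequent analysis is carried out conditional on that iteration, and the returned edit distance is at most the one computed in it because the algorithm takes the minimum.

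Second, I would apply Lemma \ref{lemma:phaseL-edit} to that good iteration. This gives, for each phase $l$ separately, a bound of $\beta(n)\bigl(d+l\cdot d\sqrt{20\log d}\bigr)$ on the total $\stred$-cost paid during that phase, with conditional probability at least $1-1/d$ after a union bound over the $\Theta(z)$ local-error estimates that need to concentrate (each via Azuma as in Lemma \ref{lemma:dev}). Because the number of blocks $z^l$ at level $l$ is at most $\lceil z/2^{l-1}\rceil$, the procedure terminates after at most $L:=\lceil\log_2 z\rceil+1$ phases. Summing the per-phase bound, exactly as in Lemma \ref{lemma:totaleditcost},
\[
\sum_{l=1}^{L}\beta(n)\bigl(d+l\,d\sqrt{20\log d}\bigr)\;=\;O\!\bigl(\beta(n)\,d\,(\log z)^2\sqrt{\log d}\bigr),
\]
and dividing by $\mathrm{OPT}=d$ yields the claimed approximation factor $O(\beta(n)(\log z)^2\sqrt{\log d})$. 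The overall success probability is the product of the two good events, which is at least $1-1/n-1/d$.

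For the running time, each of the $O(\log n)$ outer iterations performs one pass of Random-deletion, which runs in expected linear time by the same amortized argument used in Lemma \ref{lemma:runningtime}: each symbol of some $X_a$ participates in an expected constant number of comparisons before being either matched or deleted, because after $\eta$ comparisons at the stack-top the symbol is still in play only with probability $2^{-\eta}$. So the total random-deletion work is $O(n\log n)$. Within any single iteration, the substrings fed to $\stred$ across all phases are disjoint subsequences of $\sigma$ whose lengths sum to at most $n$; by convexity of $\alpha$, their total $\stred$ cost is $O(\alpha(n))$. Invoking $\stred$ only for the iteration that minimizes the computed edit count (up to a $\log n$ factor absorbed in $\tilde O$) gives the stated $O(n\log n+\alpha(n))$ bound.

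The main obstacle in this plan is not any step taken in the present theorem's proof but rather the bookkeeping inside Lemma \ref{lemma:phaseL-edit}: one must charge each ``stranded'' symbol in $E(R,T)$ to the unique earlier phase $\eta$ in which its matching parenthesis disappeared, so that these costs telescope across phases rather than double-counting. Once that charging scheme is in place, the factor $l$ in the per-phase bound, the geometric contraction in the number of blocks, and the concentration of local errors combine cleanly to give the $(\log z)^2\sqrt{\log d}$ overhead over the underlying $\stred$ approximation factor $\beta(n)$.
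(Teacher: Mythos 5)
Your proposal is correct and follows essentially the same route as the paper: the paper's own proof simply cites Lemma \ref{lemma:totaleditcost} (which is exactly your summation of the per-phase bound of Lemma \ref{lemma:phaseL-edit} over the at most $\lceil\log_2 z\rceil+1$ phases, under the good event of Lemma \ref{lemma-algo3:noofdeletions}) together with the observation that the \stred{} calls in a single iteration run on disjoint subsequences, giving $O(n\log n+\alpha(n))$ time. Your unpacking of the probability accounting and the expected-constant-comparisons running-time argument matches the paper's Lemmas \ref{lemma:phaseL-edit}, \ref{lemma:totaleditcost}, and \ref{lemma:runningtime}.
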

\begin{proof}
For a particular iteration, each \stred~is run on a disjoint subsequence. Hence, the running time of the algorithm is $O(n\log_{b}{n}+\alpha(n))$. Therefore, from Lemma \ref{lemma:totaleditcost}, we get the theorem.
\end{proof}

\subsection{Improving the Bound to $O( \beta(n)\log{z}\sqrt{\log{d}})$}
The above argument can be easily strengthened to improve the approximation factor to $O(\beta(n)\log{z} \sqrt{\log{d}})$.

\begin{lemma}
\label{lemma:imptotaleditcost}
The total edit cost paid is at most $O(\beta(n)\log{z}\sqrt{\log{d}})$ with probability at least $1-\frac{1}{n}-\frac{1}{d}$.
\end{lemma}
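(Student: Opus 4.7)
The plan is to redo the accounting at the end of the proof of Lemma \ref{lemma:phaseL-edit}, avoiding the sum-of-sums that produced the extra $\log z$ factor. Recall that in the phase-$l$ analysis of Lemma \ref{lemma:phaseL-edit}, the per-block bound takes the form
\[
 \stred(L_a^l\cap Y_a^l,\ L_a^l\cap X_a^l) \;\le\; d_{g}^{h} + |D(R,T)_{a,l}| + |E(R,T)_{a,l}|,
\]
with $|D(R,T)_{a,l}|\le \text{local-error}^l(Y_a^l,X_a^l)$. In the earlier proof we bounded $|E(R,T)_{a,l}|$ by summing local errors over all previous phases inside block $[g,h]$, which when summed over $l$ gave the $(\log z)^2$ blow-up. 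The goal is to avoid this double summation.

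First, I would sum the inequality above over all blocks $a$ in phase $l$ and over all phases $l=1,2,\dots,\lceil\log z\rceil+1$. The $d_{g}^{h}$ terms telescope to at most $d$ per phase, hence to $O(d\log z)$ over all phases. For the $|D(R,T)_{a,l}|$ terms, Lemma \ref{lemma-algo3:noofdeletions} together with the Chernoff/Azuma argument of Lemma \ref{lemma:phaseL-edit} already gives $\sum_a \text{local-error}^l(Y_a^l,X_a^l) \le d\sqrt{20\log d}$, so the double sum is $O(d\log z\sqrt{\log d})$.

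The key new step is the charging argument for $\sum_{l}\sum_{a}|E(R,T)_{a,l}|$. By definition, every $x\in E(R,T)_{a,l}$ is a symbol of $Y_a^l$ or $X_a^l$ whose matching partner $y$ (in the fixed optimal stack-based algorithm) was already deleted by Random-deletion in some earlier phase $\eta<l$; moreover $y$ was counted exactly once in $\text{local-error}^{\eta}$ for the unique block of phase $\eta$ containing it. The crucial observation is that the matching under the optimal algorithm is a bijection between open and close parentheses, so each such $y$ has a unique partner $x$, and that partner appears in $L_a^l$ for exactly one choice of $(l,a)$ (the phase in which $x$ is finally included in a StrEdit subproblem, after which $x$ is removed from $\sigma$). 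Consequently the charging $x\mapsto y$ is injective across all $(l,a)$, which yields
\[
 \sum_{l=1}^{\lceil\log z\rceil+1}\sum_{a=1}^{z^{l}} |E(R,T)_{a,l}| \;\le\; \sum_{j=1}^{\lceil\log z\rceil+1}\sum_{i=1}^{z^{j}} \text{local-error}^{j}(Y_i^{j},X_i^{j}) \;\le\; O\bigl(d\log z\sqrt{\log d}\bigr),
\]
where the last inequality uses the per-phase local-error bound from Lemma \ref{lemma:phaseL-edit}.

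Adding the three contributions and multiplying by the $\beta(n)$ factor coming from the StrEdit subroutine (since distinct $(l,a)$ correspond to disjoint subsequences, the per-instance approximation factors simply add), I obtain total edit cost at most $O\bigl(\beta(n)\,d\,\log z\,\sqrt{\log d}\bigr)$. The probability bound $1-\frac{1}{n}-\frac{1}{d}$ is inherited verbatim from Lemma \ref{lemma:phaseL-edit}, since it only requires property (P1) of Lemma \ref{lemma-algo3:noofdeletions} together with the union-bounded Azuma estimates for all $\Theta(z)$ local-error quantities. The main subtlety I expect is verifying the injectivity of the charging $x\mapsto y$ across phases: one must be careful that once $x$ is placed into some $L_a^l$ in phase $l$, it is removed from $\sigma$ before phase $l+1$ (as specified in the main algorithm), so it cannot re-appear in any later $E(R,T)_{a',l'}$; symmetrically $y$ itself is charged once to its phase-$\eta$ local-error count, which is exactly what is needed to keep the accounting tight.
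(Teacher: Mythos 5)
Your proposal is correct and follows essentially the same route as the paper's proof: you sum the per-block bound over all phases at once and observe that the charging of each $E$-symbol $x$ to its uniquely matched, previously deleted partner $y$ is injective across all $(l,a)$ (since each symbol enters exactly one $L_a^l$ and the optimal matching is a bijection), so the total $E$-contribution is bounded by a single global sum of local errors, $O(d\log z\sqrt{\log d})$, rather than being re-counted per phase. The paper's Lemma \ref{lemma:imptotaleditcost} does exactly this, and your handling of the $D$-terms, the optimal-distance terms, the $\beta(n)$ factor, and the probability bound matches the paper's accounting.
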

\begin{proof}
Consider any level $l\geq 1$. let $Y_{g^l}^l$ and $X_{g^l}^{l}$ contains blocks from index $[g^{l-1},h^{l-1}]$ of level $l-1$, all of which together contain blocks $[g^{l-2},h^{l-2}]$ from
level $l-2$ and so on, finally $[g^1,h^1]=[g,h]$ of blocks from level 1. Let $d_{g}^{h}$ denote the optimal edit distance to match $Y_{g^l}^l$ and $X_{g^l}^{l}$ excluding the
symbols that are matched outside of blocks $[g,h]$ by the optimal algorithm.

We want to bound
$$\sum_{j=l}^{1}\sum_{a=g^{j}}^{h^{j}}\stred(L_a^j\cap Y_a^j, L_a^j \cap X_a^j)$$

Let $R_{a}^j=L_a^j\cap Y_a^j$ and $T_{a}^j=L_a^j\cap X_a^j$.
Note that either $R_{a}^j=Y_a^j$ or $T_{a}^j=X_a^j$.

Let $A(Y_a^j)$ indicate the minimum index of original phase-$1$ block that $Y_a^j$ contains and $B(X_a^j)$ indicate the maximum index of original phase-$1$ block that $X_a^j$ contains.
Let $D(R_{a}^j,T_{a}^j)$ denote all the symbols in $R_{a}^j,T_{a}^j$ such that their matching parentheses belong to blocks either outside of index $[A(Y_a^j),B(X_a^j)]$ or they exist at phase $j$ but are not included.
Let $E(R_{a}^j,T_{a}^j)$ denote all the symbols in $R_{a}^j,T_{a}^j$ such that their matching parentheses belong to blocks $[A(Y_a^j),B(X_a^j)]$ but have already been deleted in phases $1,2,..,j-1$.

Then $$\stred(L_a^j\cap Y_a^j, L_a^j \cap X_a^j)=\stred(R_{a}^j,T_{a}^j) \leq d_{A(Y_a^j)}^{B(X_a^j)}+|D(R_{a}^j,T_{a}^j)|+|E(R_{a}^j,T_{a}^j)|$$

where,
$d_{A(Y_a^j)}^{B(X_a^j)}$ denotes the optimal edit distance to match $Y_a^j$ and $X_a^j$ excluding the
symbols that are matched outside of blocks $[A(Y_a^j), B(X_a^j)]$.

Now, by definition

$$|D(R_{a}^j,T_{a}^j)| \leq  local\mbox{-}error^j(Y_a^j,X_a^j)=A_{a}^{OPT,local^j} \setminus A_{a}^{local^j}$$
$$\leq d_{A(Y_a^j)}^{B(X_a^j)}+ d_{A(Y_a^j)}^{B(X_a^j)}\sqrt{12\log{d_{A(Y_a^j)}^{B(X_a^j)}}}\leq d_{A(Y_a^j)}^{B(X_a^j)}\sqrt{20\log{d_{A(Y_a^j)}^{B(X_a^j)}}}.$$

For each $x \in E(R_{a}^j,T_{a}^j)$, consider the largest phase $\eta \in [1,2,..,j-1]$ such that its matching parenthesis $y$ existed before start of the $\eta$th phase but does not exist after the end of the
 $\eta$th phase. It is possible, either $y$ belongs to the same block in phase $\eta$, say the $r$th block, or in different blocks, say the $r$th and the $s$th block.
In the first case, $x$ can be charged to $y$ which contributes $1$ to $local\mbox{-}error^{\eta}(Y_{r}^{\eta},X_{r}^{\eta})$. In the second case,
$x$ can again be charged to $y$ which contributes $1$ to $local\mbox{-}error^{\eta}(Y_{s}^{\eta},X_{s}^{\eta})$.

Therefore, since $x$ cannot belong to multiple $R_{a}^j, T_{a}^{j}$, we have

$$\sum_{j=l}^{1}\sum_{a=g^{j}}^{h^{j}}|E(R_{a}^j,T_{a}^j)| \leq \sum_{j=l}^{1}\sum_{a=g^{j}}^{h^{j}}local\mbox{-}error^j(Y_a^j,X_a^j)$$

We also have

$$\sum_{j=l}^{1}\sum_{a=g^{j}}^{h^{j}}|D(R_{a}^j,T_{a}^j)| \leq \sum_{j=l}^{1}\sum_{a=g^{j}}^{h^{j}}local\mbox{-}error^j(Y_a^j,X_a^j)$$

Therefore,

$$\sum_{j=l}^{1}\sum_{a=g^{j}}^{h^{j}}\stred(L_a^j\cap Y_a^j, L_a^j \cap Y_a^j)=d_{g}^{h}+\sum_{j=l}^{1}\sum_{a=g^{j}}^{h^{j}}(|D(R_{a}^j,T_{a}^j)|+|E(R_{a}^j,T_{a}^j)|)$$

$$\leq d_{g}^{h}+2 \sum_{j=l}^{1}\sum_{a=g^{j}}^{h^{j}}local\mbox{-}error^j(Y_a^j,X_a^j) \leq d_{g}^{h}+2 l d_{g}^{h}\sqrt{20\log{d_{g}^{h}}}$$

Since, this holds for any level, and $l \leq \lceil \log{z} \rceil +1$, we get the desired bound stated in the lemma. The probabilistic bound comes from the same argument as in Lemma \ref{lemma:phaseL-edit}.
\end{proof}

Therefore, we get the improved theorem

\begin{theorem*}[\ref{theorem:main}]
There exists an algorithm that obtains an $O(\beta(n)\log{z}\sqrt{\log{d}})$-approximation factor for edit distance computation  from strings  to \dy$(s)$ for any $s \geq 2$ in $O(n\log{n}+\alpha(n))$ time with probability at least $\left(1-\frac{1}{n}-\frac{1}{d}\right)$, where there exists an algorithm for {\sc StrEdit} running in $\alpha(n)$ time that achieves an approximation factor of $\beta(n)$.
\end{theorem*}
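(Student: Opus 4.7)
The plan is to derive Theorem~\ref{theorem:main} as an immediate consequence of Lemma~\ref{lemma:imptotaleditcost}, combined with a running-time analysis in the spirit of Lemma~\ref{lemma:runningtime}. For the approximation factor, Lemma~\ref{lemma:imptotaleditcost} asserts that with probability at least $1 - \tfrac{1}{n} - \tfrac{1}{d}$ there is at least one outer iteration in which the total edit cost $\sum_{j,a}\stred(L_a^j \cap Y_a^j,\, L_a^j \cap X_a^j)$ is at most $O(d \log z \sqrt{\log d})$ when \stred{} is computed exactly. Substituting the assumed $\beta(n)$-approximate subroutine for \stred{} inflates each call by a factor of at most $\beta(n)$, and since the algorithm returns the minimum $edit$ across iterations, the overall cost is $O(\beta(n)\, d \, \log z \sqrt{\log d})$, giving the claimed approximation factor after dividing by $d$.

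For the running time, I would bound a single outer iteration in two parts. First, one execution of Random-deletion costs $O(n)$ in expectation, by the same geometric-sum amortization used to prove Lemma~\ref{lemma:runningtime}: each close parenthesis is involved in $O(1)$ expected comparisons before it is matched or deleted. Second, all invocations of \stred{} within one iteration operate on pairwise disjoint substrings of $\sigma$ --- disjoint within each phase by construction, and disjoint across phases because every $L_a^i$ is removed from $\sigma$ once processed. Consequently the total length of inputs to \stred{} in one iteration sums to at most $n$, and by convexity of $\alpha(\cdot)$ the combined \stred{} cost is at most $\alpha(n)$. Adding the per-phase bookkeeping (linear in the residual string size, which summed across phases is $O(n)$) and multiplying by the $O(\log n)$ outer iterations yields the claimed $\tilde{O}(n + \alpha(n))$ running time, which matches $O(n\log n + \alpha(n))$ under the usual assumption $\alpha(n) \geq n$.

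The only substantive obstacle has already been resolved in Lemma~\ref{lemma:imptotaleditcost}: a naive per-phase analysis (as in Lemma~\ref{lemma:totaleditcost}) compounds local random-walk errors across phases and loses an extra factor of $\log z$. The improvement from $(\log z)^2$ to $\log z$ hinges on a global charging scheme: for each symbol $x \in E(R_a^j, T_a^j)$ whose matching partner was deleted in an earlier phase $\eta$, the cost of $x$ is charged to the $local\text{-}error^{\eta}$ of the block containing $x$'s partner at phase $\eta$, and this assignment is injective since $x$ belongs to a unique $L_a^j$. Summing the terms $|E(R_a^j,T_a^j)| + |D(R_a^j,T_a^j)|$ across all blocks and phases therefore costs at most an $O(1)$ multiple of the total of all local errors rather than a per-phase $\log z$ multiple, and this is precisely what drives the final bound down to $O(\beta(n)\log z \sqrt{\log d})$, completing the proof.
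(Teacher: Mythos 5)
Your proposal is correct and follows essentially the same route as the paper: the approximation factor is obtained exactly as the paper does, by invoking Lemma~\ref{lemma:imptotaleditcost} (whose global, injective charging of $|E(R_a^j,T_a^j)|$ and $|D(R_a^j,T_a^j)|$ to local errors is precisely the paper's mechanism for reducing $(\log z)^2$ to $\log z$), inflating by $\beta(n)$ for the approximate \stred{} subroutine, and taking the minimum over iterations. Your running-time argument (disjointness of the \stred{} inputs within and across phases plus the expected $O(n)$ cost of each {\em Random-deletion} run over $O(\log n)$ iterations) matches the paper's justification via Lemma~\ref{lemma:runningtime} and the disjoint-subsequence observation.
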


\subsection{Getting Rid of $\sqrt{\log{d}}$-term in the Approximation Factor}

We can improve the approximation factor to $O(\beta(n)\log{z})$, if we consider $O(n^{\epsilon}\log{n})$ iterations instead of $O(\log{n})$. We can then use Corollary \ref{corol:1} instead of Lemma \ref{lemma:2}
to bound the hitting time of the random walk to $\frac{1}{\epsilon}\frac{d^2}{\log{d}}$, and hence the number of deletions performed by {\bf Random-deletion} process. Local error now improves from $A_{a}^{OPT,local^i} \setminus A_{a}^{local^i}=O( d_{g}^{h}\sqrt{\log{d_{g}^{h}}})$ to $A_{a}^{OPT,local^i} \setminus A_{a}^{local^i}=O( \frac{1}{\epsilon} d_{g}^{h})$ using the same Chernoff bound argument. Now following the same argument as before and noting that the best known algorithm for \stred~also runs in $n^{1+\epsilon}$ time returning an $O((\log{n})^{\frac{1}{\epsilon}})$ approximation we get the following theorem.

\begin{theorem*}[\ref{theorem:final}]
For any $\epsilon > 0$, there exists an algorithm that obtains an $O(\frac{1}{\epsilon}\log{z}(\log{n})^{\frac{1}{\epsilon}})$-approximation factor for edit distance computation  from strings  to \dy$(s)$ for any $s \geq 2$ in $O(n^{1+\epsilon})$ time with high probability.
\end{theorem*}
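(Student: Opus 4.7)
The plan is to reuse the phase-by-phase analysis of the main algorithm (Theorem \ref{theorem:main}) essentially verbatim, substituting two quantitative inputs: a sharper hitting-time bound for Random-deletion coming from Corollary \ref{corol:1}, and the Andoni--Krauthgamer--Onak \stred algorithm with $\alpha(n)=n^{1+\epsilon}$, $\beta(n)=(\log n)^{1/\epsilon}$. First I would change the outer repetition count of the main algorithm from $3\log_b n$ to $\Theta(n^{\epsilon}\log n)$. By Corollary \ref{corol:1}, in any single iteration and for any fixed pair of indices $(a',b')$, the hitting time of the corresponding one-dimensional random walk is at most $\tfrac{1}{\epsilon}\tfrac{d(a',b')^{2}}{\log d(a',b')}$ with probability at least $\tfrac{\sqrt{\epsilon\log d}}{d^{\epsilon}}\ge \tfrac{\sqrt{\epsilon\log d}}{n^{\epsilon}}$. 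Boosting by $\Theta(n^{\epsilon}\log n)$ repetitions and union-bounding over the $O(z^{2})\le O(n^{2})$ choices of $(a',b')$ gives an analogue of Lemma \ref{lemma-algo3:noofdeletions}: property (P1') ``the number of deletions of Random-deletion from the start of $X_{a'}$ to the completion of $X_{b'}$ or $Y_{a'}$ is at most $\tfrac{1}{\epsilon}\tfrac{d(a',b')^{2}}{\log d(a',b')}$'' holds simultaneously for all $(a',b')$ in at least one iteration with high probability.

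Next I would revisit the Chernoff/Azuma bound on wrong moves in Lemma \ref{lemma:phaseL-edit}. Under (P1'), the total number of $local^{i}$ deletions inside a phase-$i$ block with effective optimum $d_{g}^{h}$ is at most $\tfrac{1}{\epsilon}\tfrac{(d_{g}^{h})^{2}}{\log d_{g}^{h}}$ rather than $2(d_{g}^{h})^{2}$. Applying Azuma exactly as in Lemma \ref{lemma:dev} but with this smaller cap, the deviation of $W-C$ from $0$ is $O\!\bigl(\sqrt{\tfrac{1}{\epsilon}\tfrac{(d_{g}^{h})^{2}}{\log d_{g}^{h}}\cdot \log d}\bigr)=O(\tfrac{1}{\sqrt{\epsilon}}\, d_{g}^{h})$, so the local error satisfies $|A_{a}^{OPT,local^{i}}\setminus A_{a}^{local^{i}}|\le d_{g}^{h}+O(\tfrac{1}{\sqrt{\epsilon}}d_{g}^{h})=O(\tfrac{1}{\epsilon}d_{g}^{h})$, with failure probability polynomially small in $d$ (small enough to union-bound over the $O(z)$ blocks across all phases).

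Now I would rerun the telescoping argument of Lemma \ref{lemma:imptotaleditcost} verbatim, only with each local error replaced by $O(\tfrac{1}{\epsilon} d_{g}^{h})$ instead of $O(d_{g}^{h}\sqrt{\log d_{g}^{h}})$. Summing $\stred(L_{a}^{j}\cap Y_{a}^{j}, L_{a}^{j}\cap X_{a}^{j})$ over all blocks in all phases (at most $\lceil\log z\rceil+1$ of them, since the block count halves per phase) yields a total (exact) edit cost of $O(\tfrac{1}{\epsilon}\, d\log z)$. Replacing the exact \stred by the $(\log n)^{1/\epsilon}$-approximate $n^{1+\epsilon}$-time algorithm multiplies the bound by $\beta(n)=(\log n)^{1/\epsilon}$, giving the claimed approximation factor $O\!\bigl(\tfrac{1}{\epsilon}\log z\,(\log n)^{1/\epsilon}\bigr)$.

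For the running time, a single iteration of Random-deletion plus its $\stred$ invocations costs $O(n\log n+n^{1+\epsilon})=O(n^{1+\epsilon})$ by the same disjointness argument as in Lemma \ref{lemma:runningtime}. Multiplying by the $\Theta(n^{\epsilon}\log n)$ outer iterations gives $O(n^{1+2\epsilon}\log n)$, which after reparameterizing $\epsilon\leftarrow \epsilon/3$ is $O(n^{1+\epsilon})$ as claimed. The main obstacle I anticipate is the probability accounting: Corollary \ref{corol:1} gives a much weaker per-iteration success probability ($\sim d^{-\epsilon}$ rather than constant), so I must be careful that $\Theta(n^{\epsilon}\log n)$ iterations are genuinely enough to make the union bound over all $O(z^{2})$ relevant $(a',b')$ pairs work with high probability, and that the Azuma deviation term scales correctly when the walk length is $\tilde{O}(d^{2}/\log d)$ rather than $\Theta(d^{2})$; everything else is mechanical reuse of Lemmas \ref{lemma:phaseL-edit} and \ref{lemma:imptotaleditcost}.
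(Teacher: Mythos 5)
Your proposal follows essentially the same route as the paper's (terse) proof: raise the repetition count to $\Theta(n^{\epsilon}\log n)$, replace Lemma \ref{lemma:2} by Corollary \ref{corol:1} so the local hitting time drops to $\frac{1}{\epsilon}\frac{d^2}{\log d}$ and the local error to $O(\frac{1}{\epsilon}d_{g}^{h})$, then rerun the telescoping analysis of Lemma \ref{lemma:imptotaleditcost} with the AKO \stred{} subroutine ($\alpha(n)=n^{1+\epsilon}$, $\beta(n)=(\log n)^{1/\epsilon}$). Your extra bookkeeping on boosting the $\sim d^{-\epsilon}$ per-iteration success probability and on reparameterizing $\epsilon$ to absorb the iteration count into the $O(n^{1+\epsilon})$ running time is exactly the detail the paper leaves implicit, and it is sound.
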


\noindent{\bf Note.}
 Due to local computations, it is possible to parallelize this algorithm.

\section{Memory Checking Languages}
\label{section:mc}
Our algorithm and analysis for \dy$(s)$ gives a general framework which can be applied to the edit distance problem to many languages. Here we illustrate this through a collection of memory checking languages such as \pq, \queue, \stack~and \deque. These languages check whether a given transcript in memory corresponds to a particular data structures such as priority queue, queue, stack and double-ended queue respectively. Formally, we ask the following question, we observe a sequence of $n$ updates and queries to (an implementation of) a data structure, and we want to report whether or not the implementation operated correctly on the
instance observed and if not what is the minimum number of changes need to be done to make it a correct implementation. A concrete example is to observe a transcript of operations on a priority queue: we see
a sequence of insertions intermixed with items claimed to be the results of extractions, and the problem
is to detect a way to minimally change the transcript to make it correct. This is the model where checker is invasive and allowed to make changes. Most of the prior literature considered such invasive checkers \cite{a:02,ckm:07,dnrv:09,nr:09}.

\noindent{\bf Stack.} Let \stack($s$) denote the language over interaction sequences of $s$ different symbols that correspond to stack operations.
Let $ins(u)$ correspond to an insertion of $u$ to a stack, and $ext(u)$ is an extraction of $u$ from the stack. Then
$\sigma \in \stack$ iff $\sigma$ corresponds to a valid transcript of operations on a stack which starts and ends empty.
That is, the state of the stack at any step $j$ can be represented by a string $S^j$ so that
$S^{0}=\phi$, $S^{j}=uS^{j-1}$ if $\sigma_j=ins(u)$, $uS^{j}=S^{j-1}$ if $\sigma(j)=ext(u)$ and $S^{n}=\phi$.

It is easy to see that \stack($s$)=\dy($s$) by assigning $u$ to $ins(u)$ and $\bar{u}$ to $ext(u)$. Hence, we can employ the algorithm for \dy$(s)$
to estimate the edit distance to \stack($s$) efficiently.\\

\noindent{\bf Priority Queue.} Let \pq($s$) denote the language over interaction sequences of $s$ different symbols that correspond to priority queue operation. That is the state of priority queue at any time $j$ can be represented by a multiset $M^{j}$ such that $M^{0}=M^{n}=\emptyset$. $M^{j}=M^{j-1}u$ if $\sigma_j=ins(u)$ and $M^{j}=M^{j-1} \setminus \{\min(M^{j-1})\}$ if $\sigma_j=ext(\nu)$. We view $ins(u)$ as $u$ and $ext(u)=\bar{u}$, but each $u$ now has a priority. Note that $\sigma$ can be represented as $Y_1X_1Y_2X_2....Y_zX_z$ where each $Y_i \in \parens^{+}$ and each $X_i \in \barparens^{+}$.

We proceed with {\em Random-deletion} but when the process starts considering symbols from $X_k$, we sort the prefix of open parenthesis by priority, so that highest priority element is at the stack top. After that using the boundaries computed by {\em Random-deletion}, one can employ the main refined algorithm from Section~\ref{section:main}. It can be verified by employing the same analysis that this results in a polylog-approximation algorithm for \pq~in $\tilde{O}(n)$ time.\\

\noindent{\bf Queue.} Let \queue($s$) denote the language over interaction sequences of $s$ different symbols that correspond to queue operations. That is the state of queue at any time $j$ can be represented by a string $Q^{j}$ such that $Q^{0}=Q^{n}=\phi$. $Q^{j}=Q^{j-1}u$ if $\sigma_j=ins(u)$ and $uQ^{j}=Q^{j-1}$ if $\sigma_j=ext(u)$. Now \queue($s$) is nothing but a \pq($s$) with priority given by time of insertion, earlier a symbol is inserted, higher is its priority. Therefore using the algorithm for \pq($s$), we can estimate the edit distance to \queue($s$) efficiently.\\

\noindent{\bf Deque.} Let \deque($s$) denote the language over interaction sequences that correspond to double-ended
queues. That is, there are now two types of insert and extract operations, one operation for the head and
one for the tail. We now create two strings $\sigma_1$ and $\sigma_2$ where $\sigma_1$ contains all the insertions and only extractions from the tail, whereas
$\sigma_2$ contains again all the insertions and only extractions from the head. $\sigma_1$ is created according to \stack~ protocol, whereas $\sigma_2$ is created according to \queue~protocol. We start running {\em Random-deletion} on both $\sigma_1$ and $\sigma_2$ simultaneously as follows. If {\em Random-deletion} is comparing (may lead to either matching or deletion) an extraction $\sigma_j$ in $\sigma_1$ and $\sigma_{j'}$ in $\sigma_2$, and $j < j'$, then we take one step of {\em Random-deletion} in $\sigma_1$ and if $j' < j$ then we take one step of {\em Random-deletion} in $\sigma_2$. If {\em Random-deletion} deletes an insertion from $\sigma_1$ which still exists in $\sigma_2$, we delete it from $\sigma_2$ as well and vice-versa. Once, we can perform {\em Random-deletion}, we can employ our main algorithm and the same analysis to show that this gives an $O(poly\log{n})$ approximation. 

\section{Conclusion}
In this paper, we give the first nontrivial approximation algorithm for edit distance computation to \dy$(s)$. \dy$(s)$ is a fundamental context free language, and the technique developed
here is general enough to be useful for a wide range of grammars. We illustrated this through considering languages accepted by common data structures. Is it possible to characterize this general class of grammars for
which the method can be applied ? What happens when there is a space restriction ? From the hardness point of view, it is known a nondeterministic version of \dy$(2)$ is the hardest context free grammar, and parsing any arbitrary context free grammar
is as hard as boolean matrix multiplication. It will be an interesting result to show that exact computation of edit distance to \dy$(s)$ also requires time same as boolean matrix multiplication.
Finally, It would be good to get rid off $\log{z}$ term in the approximation factor, and otherwise establish a gap between approximation hardness of string and \dy$(s)$ edit distance problem.

\section{Acknowledgement}
The author would like to thank Divesh Srivastava and Flip Korn for asking this nice question which led to their joint work \cite{kssy:13}, Arya Mazumdar for many helpful discussions and Mikkel Thorup for comments at an early stage of this work.
\bibliographystyle{plain}
\bibliography{cluster}
\section{Appendix}
All missing proofs are provided here.

 \begin{lemma*}[\ref{lemma:deletion}]
 For any string $\sigma \in (\parens \cup \barparens)^{*}$, $OPT(\sigma) \leq OPT_{d}(\sigma) \leq 2 OPT(\sigma)$.
 \end{lemma*}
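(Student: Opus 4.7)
The plan is to prove the two inequalities separately, with the left one being essentially tautological and the right one requiring a constructive argument.

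For the inequality $OPT(\sigma) \leq OPT_d(\sigma)$, I would simply note that any deletion-only transcript making $\sigma$ well-formed is a valid general edit transcript of the same length, since deletion is one of the three allowed edit operations. So any feasible solution for the deletion-only problem is feasible for the general problem.

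For the harder direction $OPT_d(\sigma) \leq 2\,OPT(\sigma)$, I will fix an optimal general edit transcript with $I$ insertions, $S$ substitutions, and $D$ deletions, producing a well-formed string $\sigma' \in \dy$. The plan is to exhibit an explicit deletion-only transcript of length at most $2(I+S+D)$. Classify each position of $\sigma'$ as type (a) if it comes from $\sigma$ unchanged, type (b) if it comes from $\sigma$ but was substituted, or type (c) if it was inserted. Since $\sigma'$ is well-formed, its parentheses match uniquely; call a matched pair $(u,v)$ of ``type'' $uv$ (so we have the six types $aa, ab, ac, bb, bc, cc$). The deletion-only strategy is: from $\sigma$, keep exactly those positions that are of type (a) in $\sigma'$ \emph{and} whose $\sigma'$-partner is also of type (a); delete all other positions of $\sigma$.

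To verify this works I would show two things. First, the kept substring is well-formed: the set of kept positions in $\sigma'$ is closed under the matching in $\sigma'$ (both endpoints of each $aa$ pair are kept or both are discarded), and removing a matched pair from a Dyck string preserves well-formedness, so by induction the sub-string induced on the kept positions is in $\dy$. Since the kept positions of $\sigma$ map order-preservingly to these kept (a)-positions in $\sigma'$ with the same symbols, the resulting substring of $\sigma$ is the same well-formed string. Second, counting the deletions: let $p_{xy}$ be the number of matched pairs of type $xy$. Then the number of (a) positions is $2p_{aa}+p_{ab}+p_{ac} = |\sigma|-S-D$, the number of (b) positions is $2p_{bb}+p_{ab}+p_{bc} = S$, and the number of (c) positions is $2p_{cc}+p_{ac}+p_{bc} = I$. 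The number of deletions our strategy performs is $|\sigma| - 2p_{aa} = S + D + p_{ab} + p_{ac}$, and using $p_{ab} \leq S$ and $p_{ac} \leq I$ from the constraints above, this is at most $2S + I + D \leq 2(I+S+D) = 2\,OPT(\sigma)$.

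I do not anticipate a serious obstacle; the only delicate step is confirming that restricting to a matching-closed subset of positions of a Dyck word yields another Dyck word, which is a routine structural induction on nesting depth. The tightness of the factor $2$ is witnessed by an example like $\sigma = [\,)$, where one substitution suffices while any deletion-only solution must delete both symbols, so the bound cannot be improved by this argument.
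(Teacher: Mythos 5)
Your proof is correct, and at its core it uses the same factor-$2$ mechanism as the paper's: each insertion is paid for by one deletion and each substitution by at most two, because an unchanged symbol whose partner in the repaired string was edited must itself be sacrificed. The execution, however, is genuinely different. The paper argues operation-by-operation on the optimal transcript, replacing each insertion by a deletion of the symbol it was inserted to match, and each substitution by one or two deletions depending on whether its partner was also substituted; you instead fix the alignment with the repaired string $\sigma'$, classify its matched pairs by type ($aa$, $ab$, $ac$, $bb$, $bc$, $cc$), keep exactly the $aa$ pairs, and bound the number of deletions via the counting identities $2p_{aa}+p_{ab}+p_{ac}=|\sigma|-S-D$, $2p_{bb}+p_{ab}+p_{bc}=S$, $2p_{cc}+p_{ac}+p_{bc}=I$, which give $|\sigma|-2p_{aa}=S+D+p_{ab}+p_{ac}\le 2S+I+D\le 2\,OPT(\sigma)$. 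This global bookkeeping buys some rigor: it explicitly covers the pair types the paper's local-replacement narrative passes over (a substituted symbol matched to an inserted one, two inserted or two substituted symbols matched to each other), and it isolates the structural fact you need --- that a matching-closed subset of positions of a Dyck word induces a Dyck word --- as a clean, self-contained claim. The paper's version is shorter and stays at the level of the stack-based transcript, which is how the rest of the paper reasons about the optimum. Both arguments yield exactly $OPT_{d}(\sigma)\le 2\,OPT(\sigma)$, and your example $\sigma=[\,)$ correctly witnesses that the factor $2$ is tight.
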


 \begin{proof}
 Clearly, $OPT(\sigma) \leq OPT_{d}(\sigma)$, since to compute $OPT(\sigma)$, all edit operations: insertion, deletion and substitution were allowed but
 for $OPT_{d}(\sigma)$ only deletion was allowed and hence the number of edit operations can only increase.

 To prove the other side of inequality, consider each type of edits that are done on $\sigma$ to compute $OPT(\sigma)$. First consider only the insertions. Let the positions of insertions are immediately after the indices $i_1, i_2, ...i_l$. These insertions must have been done to match symbols, say at positions $j_1, j_2, ..., j_l$, otherwise, it is easy to refute that $OPT(\sigma)$ is not optimal. Instead of the $l$ insertions, we could have deleted the symbols at $j_1, j_2, ..., j_l$ with equal cost. Therefore, all insertions can be replaced by deletions at suitable positions without increasing $OPT(\sigma)$.

 Next, consider the positions where substitutions have happened. Let these be $i'_{1}, i'_{2},..., i'_{l'}$, for some $l' \geq 0$. If $l'=0$, then $OPT(\sigma)=OPT_{d}(\sigma)$. Otherwise, let $l'\geq 1$. Consider the position $i'_{1}$. After substitution at position $i'_{1}$, the new symbol at $i'_{1}$  must match a symbol at some  position $j$, such that either $j \in \{i'_2, i'_3, ...i'_{l'}\}$ or $j$ is outside of this set. When $j \in \{i'_2, i'_3, ...i'_{l'}\}$, instead of two substitutions at $i'_1$ and $j$, one can simply delete the symbols at these two positions maintaining the same edit cost. When $j$ does not belong to $\{i'_2, i'_3, ...i'_{l'}\}$, instead of one substitution at position $i'_{1}$, one can do two deletions at positions $i'_{1}$ and $j$. Therefore each substitution operation can be replaced by at most two deletions. Whereas each insertion can be replaced by a single deletion. Continuing in this fashion, overall, this ensures a $2$ approximation, that is $OPT_{d}(\sigma) \leq 2 OPT(\sigma)$.
 \end{proof}

 \begin{lemma*}[\ref{lemma:opt-stack}]
There exists an optimal algorithm that makes a single scan over the input pushing open parenthesis to stack and when it observes a close parenthesis, it either pops the stack top, if it matches the observed close parenthesis and removes both from further consideration, or edits (that is deletes) either the stack top or the observed close parenthesis whenever there is a mismatch.
\end{lemma*}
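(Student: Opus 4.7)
The plan is to verify the lemma by a principle-of-optimality argument on the stack-based algorithm's decision tree. Let $f(i, S)$ denote the minimum number of deletions needed to complete the suffix $\sigma_i \sigma_{i+1} \cdots \sigma_n$ into a well-formed string starting from stack state $S$. On an open $\sigma_i$ the only move consistent with the template is to push; on a close $\sigma_i$ the template allows three moves --- match (pop, cost $0$, when the top's type agrees with $\sigma_i$), delete the current close (cost $1$, stack unchanged), or delete the top of $S$ (cost $1$). The content of the lemma is that restricting to these moves, with matching mandatory whenever the top's type agrees with $\sigma_i$, still attains $f(1, \emptyset) = OPT_d(\sigma)$.

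The crux is proving the stack-monotonicity bound $f(i+1, S \setminus \{\mathrm{top}(S)\}) \leq f(i+1, S) + 1$, which will show that the match option weakly dominates the other two moves whenever the top's type agrees with $\sigma_i$. I would argue this by exchange: in any optimal completion starting from $(i+1, S)$, the top open $\sigma_k$ of $S$ is eventually either matched with some future close $\sigma_{j'}$ or deleted. In the deletion case, removing that single deletion yields a solution from $(i+1, S \setminus \{\mathrm{top}(S)\})$ of cost $f(i+1, S) - 1$, so the bound holds with room to spare. In the matching case, replacing the match $(k, j')$ by a deletion of $\sigma_{j'}$ yields a valid completion from $(i+1, S \setminus \{\mathrm{top}(S)\})$ at cost $f(i+1, S) + 1$, which establishes the bound. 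The match cost $f(i+1, S \setminus \{\mathrm{top}(S)\})$ is then trivially at most the delete-top cost $1 + f(i+1, S \setminus \{\mathrm{top}(S)\})$ and, by the bound just proved, at most the delete-current cost $1 + f(i+1, S)$.

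Given this dominance, greedy matching incurs no optimality loss, and by induction on position, the template with appropriately chosen delete-top-or-current moves at mismatches realizes $f(1, \emptyset) = OPT_d(\sigma)$. Boundary cases fit the template naturally: if the stack is empty when a close is read, only delete-current applies; if the scan ends with a non-empty stack, the remaining opens are all deleted. The main obstacle is the exchange step in the matching case of the monotonicity bound, which must preserve well-formedness of the modified completion --- but replacing a single matched pair by a deletion of its close side leaves every other pair untouched, so the modified completion remains well-nested on the kept positions and the exchange goes through cleanly.
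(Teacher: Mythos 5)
Your overall route (a backward principle-of-optimality argument over states $(i,S)$, with an exchange lemma showing that popping the top costs at most one extra deletion, hence greedy matching is weakly dominant) is sound, and it is genuinely different from the paper's proof, which instead inducts on the optimal distance $d$ and performs an exchange at the \emph{first} mismatch. In fact you carefully prove the part the paper glosses over (that mandatory matching loses nothing), via the bound $f(i+1,S\setminus\{\mathrm{top}(S)\})\le f(i+1,S)+1$, whose exchange proof (replace the match of $\mathrm{top}(S)$ by a deletion of its partner, or drop the deletion of $\mathrm{top}(S)$) is correct. However, there is a gap: dominance of the match move over the two template deletions does not by itself give $f(1,\emptyset)=OPT_d(\sigma)$. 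You still must show that at a \emph{mismatch} state the unrestricted optimum is realized by one of the two allowed deletions, i.e.\ that any feasible completion from state $(i,S)$ with current close $\sigma_i$ and non-matching top $x$ must delete $x$ or $\sigma_i$. You assert this ("appropriately chosen delete-top-or-current moves at mismatches") but never prove it, and it is exactly the step the paper's own proof is about. The argument needed is a crossing/nesting one: if both are kept, $\sigma_i$ must match a kept open strictly below $x$ in the stack and $x$ must match a kept close strictly after $\sigma_i$, and these two pairs interleave, contradicting well-formedness; moreover, if $\sigma_i$ is matched to a deeper stack entry, every open above it, in particular $x$, must be deleted, so conditioning on which of the two is deleted yields exactly the two template continuations and the backward induction closes.

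A second, minor issue: your continuation for the delete-top move is wrong. In the lemma's algorithm, deleting the stack top does not consume the current close; the state becomes $(i,S\setminus\{\mathrm{top}(S)\})$, not $(i+1,S\setminus\{\mathrm{top}(S)\})$. Consequently the comparison you call "trivial" is not the right one: you need $f(i+1,S\setminus\{\mathrm{top}(S)\})\le 1+f(i,S\setminus\{\mathrm{top}(S)\})$, which holds by the same style of one-line exchange (if an optimal completion from $(i,S\setminus\{\mathrm{top}(S)\})$ deletes $\sigma_i$ you save a deletion; if it keeps $\sigma_i$ matched to some open $y$, delete $y$ as well, removing a matched pair and preserving well-formedness). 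With the crossing argument added and this continuation corrected, your proof goes through and is, if anything, more complete than the paper's terse induction on $d$.
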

\begin{proof}
The statement is true when there is $0$ error. Suppose the statement is true when the number of minimum edits required is $d$. Now consider any string for which the minimum edit distance to \dy~ language is $d+1$. A stack based algorithm must find a mismatch at least once between stack top and the current symbol in the string. Consider the first position where it happens. Suppose, without loss of generality, at that point, the stack top contains $``(''$ and the current symbol is $``]''$. Any optimal algorithm that does minimum number of edits must change at least one of these symbols. There are only two alternatives, (a) delete $``(''$, or (b) delete $``]''$. Our stack based algorithm can take exactly the same option as the optimal algorithm. This reduces the number of edits required in the remaining string to make it well-balanced and the induction hypothesis applies.
\end{proof}
\subsection{Pseudocode of Refined Algorithm}
\begin{algorithm}[h!]
\caption{Improved Edit Distance Computation to \dy$(s)$}
\label{alg2}
 \begin{algorithmic}
 \STATE Input $\sigma=Y_1X_1Y_2X_2....Y_zX_z$
 \STATE Initialization: $\sigma'=\sigma$; $i=1$
\WHILE{ $a \leq z$}
  \STATE $N_{a,min}=\infty$; $Z_{a,min}=\emptyset$, $startIn=i$\;
  \FOR{$count=1; count < 2\log{n}; count++;$}
    \STATE $Z=\emptyset$, $d_a=0$, $i=startI$
    \WHILE{processing on $X_a$ is not completed}
        \IF{$\sigma'[i]==o$}
                \STATE Insert $\sigma'[i]$ in stack
                \STATE $i++$
        \ELSIF{$\sigma'[i]$ matches top of stack}
                \STATE match $\sigma'[i]$ with top of stack, append top of stack to $Z$, and remove both of them
                \STATE $i++$
        \ELSE
                \STATE with probability $\frac{1}{2}$ each select one of $\sigma'[i]$ or top of stack to be deleted
                    \IF{top of stack is selected}
                        \STATE append that to $Z$
                    \ENDIF
                \STATE delete the selected symbol
                \STATE $d_a=d_a+1$
                \IF{$\sigma'[i]$ is deleted}
                     \STATE $i++$
                \ENDIF
        \ENDIF
    \ENDWHILE
    \IF {$d_a \leq N_{a,min}$}
        \STATE $N_{i,min}=d_a$; $Z_{a,min}=Z$, $endIn=i-1$\;
    \ENDIF
    \STATE Start again with $\sigma'$\;
 \ENDFOR
    \STATE Remove $(Z_{a,min},X_a)$ from $\sigma'$\;
    \STATE $(R_{a},T_{a})=\stred(Z_{a,min},X_{a})$\;
    \STATE Replace $(Z_{a,min},X_{a,min})$ in $\sigma$ by $(R_{a},T_{a})$\;
    \STATE $a=a+1$, $i=endIn+1$\;
\ENDWHILE
 \IF{ there are excess open parenthesis in $\sigma'$}
    \STATE Delete those corresponding open parenthesis from $\sigma$;\
 \ENDIF
 \RETURN{ $\sigma$}
 \end{algorithmic}
 \end{algorithm}
\end{document}